\documentclass[acmsmall,screen,nonacm]{acmart}
\usepackage{bm}
\usepackage{color}
\usepackage{graphicx}
\usepackage{booktabs}
\usepackage{multirow}
\usepackage{threeparttable}
\usepackage{enumerate}
\usepackage[english]{babel}
\usepackage{algorithmicx}
\usepackage{algorithm}
\usepackage{algpseudocode}
\usepackage{nomencl}
\usepackage{multicol}
\usepackage{commath}
\usepackage{makecell}
\usepackage{blindtext}
\usepackage{soul}
\usepackage{comment}
\usepackage{braket}
\newtheorem{theorem}{Theorem}[section]
\newtheorem{lemma}[theorem]{Lemma}
\newtheorem{proposition}[theorem]{Proposition}

\theoremstyle{remark}
\newtheorem{remark}{Remark}[section]

\newcommand{\cutval}{\ensuremath{c_f}}

\makenomenclature
\graphicspath{{Figures/}{ieee_bus_systems_no_legends_large_font_outputs/pngs/}}
\AtBeginDocument{%
  }

\renewcommand\footnotetextcopyrightpermission[1]{}

\begin{document}
\title[PACE-QAOA]{PACE-QAOA: Physics-Constrained Quantum Optimization for Qubit-Efficient Power System Islanding}

\author{Yuqi Jiang}
\affiliation{%
  \department{Department of Electrical Engineering}
  \institution{Pennsylvania State University}
  \city{University Park}
  \state{Pennsylvania}
  \country{USA}}
\email{yzj5282@psu.edu}

\author{Zhiding Liang}
\affiliation{%
  \department{Department of Computer Science}
  \institution{Rensselaer Polytechnic Institute}
  \city{Troy}
  \state{New York}
  \country{USA}}

\author{Qiang Guan}
\affiliation{%
  \department{Department of Computer Science}
  \institution{Kent State University}
  \city{Kent}
  \state{Ohio}
  \country{USA}}

\author{Yan Li}
\affiliation{%
  \department{Department of Electrical Engineering}
  \institution{Pennsylvania State University}
  \city{University Park}
  \state{Pennsylvania}
  \country{USA}}
\email{yql5925@psu.edu}

\author{Ganesh Kumar Venayagamoorthy}
\affiliation{%
  \department{Real-Time Power and Intelligent Systems Laboratory, Holcombe Department of Electrical and Computer Engineering}
  \institution{Clemson University}
  \city{Clemson}
  \state{South Carolina}
  \country{USA}}
\affiliation{%
  \institution{University of Pretoria}
  \city{Pretoria}
  \country{South Africa}}

\renewcommand{\shortauthors}{Jiang et al.}

\begin{abstract}
Increasing renewable-energy penetration heightens power-system variability and complicates disturbance containment. Controlled islanding mitigates cascading failures by partitioning a stressed network to limit disrupted power transfer while preserving each island's operational integrity, but this constrained partitioning problem is NP-hard. Although QAOA offers a complementary search strategy, limited near-term qubit capacity restricts conventional formulations. This paper presents a qubit-efficient hybrid quantum framework combining a physics-informed compact encoding with Lagrangian constraint handling and classical feasibility refinement. The encoding exploits grid structure while formally preserving the original feasible solution space and objective. For a fixed island count on sparse working graphs, the formulation reduces phase-separator and per-layer gate complexity from quadratic to linear scaling with system size. Tests on eight IEEE systems ranging from 9 to 89 buses and multiple quantum-provider backends produce feasible, high-quality islanding solutions under practical circuit and sampling budgets. Factorial ablation attributes resource and runtime improvements to the complementary effects of compact encoding and qubit-efficient constraint handling. Noise analysis shows stable solution quality under tested device noise, while landscape diagnostics reveal smoother, more consistently scaled QAOA cost surfaces and improved parameter-optimization behavior. These results offer a transferable approach for scaling constrained quantum optimization toward larger real-world applications on near-term hardware.
\end{abstract}

\keywords{Power system islanding, Quantum Approximate Optimization Algorithm (QAOA), Lagrangian, Compact encoding, Qubit-efficient}

\ccsdesc[300]{Computer systems organization~Embedded and cyber-physical systems}
\ccsdesc[500]{Mathematics of computing~Combinatorial optimization}
\ccsdesc[300]{Hardware~Quantum computation}

\maketitle

\section{Introduction}
Quantum computing has developed rapidly in both hardware accessibility and algorithmic design, opening a new computational paradigm for large-scale combinatorial optimization problems whose exponentially growing solution spaces place them beyond the reach of exact classical methods \cite{preskill2018quantum,bharti2022nisq,chicano2024combinatorial}. In the noisy intermediate-scale quantum (NISQ) era, hybrid quantum-classical algorithms are especially important because they use parameterized quantum circuits together with classical feedback, rather than relying on fully fault-tolerant machines \cite{cerezo2021variational,gemeinhardt2024nisq}. This progress has positioned quantum optimization as a promising route to tackle hard combinatorial problems across science and engineering, from the natural sciences to large-scale infrastructure, where discrete choices, complex interdependencies, and problem-specific constraints jointly render the search space intractable for classical solvers at scale.

The Quantum Approximate Optimization Algorithm (QAOA) is a leading NISQ algorithm for combinatorial optimization because it alternates problem-dependent phase operators with mixing operators and uses classical optimization to tune circuit parameters \cite{farhi2014qaoa,blekos2024review}. Recent evidence of scaling advantage on a classically intractable benchmark has further strengthened interest in QAOA as a practical quantum-optimization component \cite{shaydulin2024scaling}. These properties have attracted attention across a wide range of practical domains, provided that formulations are adapted to near-term hardware constraints.

In finance, Zaman \emph{et al.}~\cite{zaman2024po} use QAOA to solve portfolio selection. In chemistry, Mustafa \emph{et al.}~\cite{mustafa2022variational} apply variational quantum algorithms to molecular ground-state estimation, and Robert \emph{et al.}~\cite{robert2021protein} solve lattice protein folding on superconducting hardware. In logistics, Harwood \emph{et al.}~\cite{harwood2021routing} apply QAOA to vehicle routing with time windows. In transportation, Villanueva \emph{et al.}~\cite{villanueva2025hybrid} tackle traffic-congestion minimization with a hybrid quantum-classical solver. Across these domains, quantum optimization has demonstrated tangible potential as a general-purpose framework for real-world combinatorial problems.

Within this broad application landscape, power systems present a distinctive optimization setting in which large discrete decision spaces are tightly coupled with physical and operational constraints. These challenges are becoming more acute as the growing penetration of distributed and renewable energy resources reshapes grid operation, replacing predictable centralized generation with intermittent, environment-dependent supply distributed across many buses \cite{dincer2000renewable}. This transition tightens operating margins and increases the risk that a local disturbance escalates into a widespread cascading outage \cite{guo2017critical}. Controlled islanding is a key resilience mechanism against such events, deliberately partitioning a stressed network into smaller self-sustaining islands that isolate faults, preserve critical loads, and support subsequent restoration. Determining such partitions requires a discrete network split that simultaneously satisfies generation, load, and connectivity constraints. Because this constrained combinatorial optimization problem contains weighted minimum bisection as a special case, it is NP-hard and becomes increasingly difficult to solve as the network grows. This computational challenge motivates investigating quantum-optimization methods as an alternative solution approach.

A similar motivation has emerged in power-system operations, where many planning and operational tasks involve discrete network decisions under physical constraints. QAOA-based methods have been studied for data-driven graph optimization \cite{jing2023datadriven}, optimal PMU placement \cite{10980373}, unit commitment benchmarking \cite{adler2025scaling}, and contingency analysis \cite{11250180}, indicating growing interest in quantum optimization for grid operation.

Power-system islanding is a particularly relevant target for advanced optimization methods. As renewable energy penetration increases, environment-dependent generation introduces additional uncertainty into power balance and post-disturbance resilience \cite{xu2024resilience,tang2023intentional}. These uncertainties can reduce operating margins and make it harder to contain disturbances once they begin to propagate through the network. Intentional islanding addresses this risk by separating a stressed grid into self-sustained islands, thereby limiting cascading failures, preserving critical loads, and supporting restoration \cite{ding2015mixed,babaei2023intentional}. Since islanding must determine a physically meaningful network partition under multiple operating requirements, it forms a constrained combinatorial problem with substantial computational complexity.

Classical studies have developed a range of solution strategies for controlled islanding. Earlier studies applied linear programming and ordered binary decision diagrams to construct splitting decisions for large-scale power systems \cite{kyriacou2017controlled,sun2003splitting}. More recent formulations incorporate richer network physics and stability requirements: Daniar \emph{et al.} use Benders decomposition to accommodate full AC power-flow constraints \cite{daniar2023benders}, while Li \emph{et al.} formulate a mixed-integer second-order cone program with connectivity and frequency-stability constraints \cite{li2023socp}. Alongside these mathematical-programming approaches, Priya and Roselyn develop a multi-layer constrained spectral-clustering method that accounts for generator coherency and real- and reactive-power disruption \cite{priya2024multilayer}. Despite this progress, the combinatorial search space grows rapidly with network size, motivating continued investigation of alternative optimization pipelines for physically meaningful islanding decisions.

Quantum methods have begun to address power system partitioning, although early formulations largely optimize graph structure without enforcing all physical requirements of controlled islanding \cite{hartmann2025partitioning}. A recent study narrows this gap through physics-aware constraint evaluation and structured classical post-processing, which repairs sampled partitions into feasible islanding decisions \cite{jiang2026regrid}. Nevertheless, its qubit requirements still grow rapidly with network size, limiting the scale of power systems that can be addressed on near-term quantum hardware. This scalability bottleneck motivates a more qubit-efficient formulation that retains the feasibility benefits of hybrid post-processing.

Reducing circuit width is therefore a prerequisite for extending QAOA-based islanding beyond small networks on near-term hardware. The qubit demand arises not only from representing the island assignments, but also from auxiliary variables introduced when operational constraints are embedded directly in the quantum model. On the one hand, a compact encoding can reduce the number of variables required to represent island assignments. On the other hand, classical Lagrangian updates can handle constraints that would otherwise introduce additional qubits. Complementing these qubit-saving strategies, post-processing converts measured samples into valid islanding decisions while preserving the reduced quantum resource requirements \cite{niroula2022constrained,ruan2023constrained}.

To address the qubit limitation of near-term quantum hardware, this study proposes a qubit-saving QAOA framework that combines problem-aware compact quantum encoding with a slack-free Lagrangian treatment of key feasibility requirements for controlled power-system islanding. This integrated design reduces the quantum resources needed to represent the optimization problem and its operational constraints. A constrained two-stage post-processing routine is then applied to repair sampled bitstrings and select feasible islanding solutions. The complete workflow is summarized in Fig.~\ref{fig:overall_diagram}. The proposed framework is validated on IEEE benchmark systems ranging from 9 to 89 buses, demonstrating that the reduced formulation lowers the qubit requirement while preserving high-quality feasible islanding decisions. Its applicability is further demonstrated across different quantum-provider backends, where power-system islanding serves as a common workload for cross-platform benchmarking. Landscape analysis further shows that the qubit-saving formulations produce smoother and more consistently scaled optimization surfaces, improving trainability alongside resource efficiency. The main contributions of this paper can be summarized as:
\begin{itemize}
    \item This study establishes an exact, power system-informed representation for qubit-efficient QAOA-based controlled islanding. By embedding known network structure into a physics-informed compact encoding of islanding decisions, the proposed encoding hierarchy substantially reduces the quantum register while preserving the feasible physical partitions and the optimal islanding objective.
    \item This study introduces a slack-free Lagrangian constraint-handling strategy that incorporates the operational requirements of controlled islanding without auxiliary quantum registers. The resulting formulation reduces constraint-induced qubit and Hamiltonian complexity while maintaining physical feasibility and solution quality within the hybrid optimization framework.
    \item The framework is validated on eight IEEE benchmark systems (9- to 89-bus) and benchmarked across available quantum-provider backends. The framework not only consistently produces feasible islanding solutions but also maintains stable cut quality across backend platforms and the tested noise conditions, demonstrating its scalability, hardware adaptability, and noise resilience.
    \item Factorial ablation and landscape analyses demonstrate the complementary benefits of compact encoding and slack-free constraint handling. The proposed formulation reduces quantum-resource and circuit complexity while simplifying the Hamiltonian structure and producing smoother, more consistently scaled optimization landscapes. These combined advantages improve hardware compatibility and parameter optimization, providing key enablers for scaling constrained quantum optimization to larger power-system applications.
\end{itemize}

The remainder of this paper is organized as follows. Section~\ref{formulation} introduces the controlled-islanding problem and proposes the compact assignment encodings. Section~\ref{quantum} presents the hybrid QAOA solution framework and its Lagrangian constraint handling and feasibility refinement. Section~\ref{sec:complexity} describes the quantum-resource, circuit, and computational-complexity characteristics of the proposed formulation. Section~\ref{sec:numerical} reports on the benchmark evaluation, factorial ablation, noise-resilience study, and optimization-landscape analysis. Section~\ref{conclusion} concludes the paper and outlines future research directions.

\section{Problem Formulation}
\label{formulation}

\subsection{Controlled-Islanding Setting}

Consider a physical power system represented by a weighted undirected graph
$G^0=(V^0,E^0,w^0)$, where $V^0$ is the bus set, $E^0$ is the transmission-line
set, and $w^0_{ij}$ is the disruption cost of line $(i,j)\in E^0$.  Let
$N_0=|V^0|$.  The bus set is partitioned into generation-capable buses $V_G^0$,
load buses $V_L^0$, and transit buses $V_T^0$.  Following standard
slow-coherency-based islanding practice~\cite{chow1982time,you2004slow}, the
coherency analysis predetermines a family of nonempty, pairwise-disjoint
coherent anchor groups
$\mathcal{P}_{\mathrm{coh}}^0=\{\mathcal{C}_1^0,\ldots,\mathcal{C}_K^0\}$,
where
\begin{equation}
K=\left|\mathcal{P}_{\mathrm{coh}}^0\right|
\label{eq:coherent_group_count}
\end{equation}
is the target number of islands.  Each anchor group contains at least one
generation-capable bus,
\begin{equation}
\mathcal{C}_c^0\cap V_G^0\neq\varnothing,\qquad c=1,\ldots,K.
\label{eq:anchor_contains_generator}
\end{equation}
For a transmission interface, the line weight is defined from the pre-islanding
directional active-power flows as
\begin{equation}
w^0_{ij}=\frac{|p_{ij}|+|p_{ji}|}{2},
\label{eq:physical_edge_weight}
\end{equation}
so cutting a high-weight edge corresponds to interrupting a larger power exchange.
This physical-network representation and the coherent anchor groups are
illustrated in Fig.~\ref{fig:overall_diagram}(1).

\subsection{Graph Reduction via Generator Coherency}
\label{sec:reduction}

Since all buses within the same anchor group $\mathcal{C}_c^0$ must be
co-assigned to a single island, any connected subgroup can be collapsed into one
super-bus without losing information about the partition.  This observation
motivates a graph reduction step that shrinks the optimization domain before any
binary variable is introduced.  A working graph $G=(V,E,w)$ is therefore formed
by contracting every anchor group whose induced subgraph in $G^0$ is connected.
This contraction into anchor super-buses is summarized in
Fig.~\ref{fig:overall_diagram}(2).
Disconnected groups are retained as separate working-graph buses so that a
connected island on the working graph always expands to a connected physical
island.  Let $\mathcal{M}\subseteq\{1,\ldots,K\}$ be the index set of contracted
anchor groups and $\phi:V^0\rightarrow V$ map each physical bus to its
working-graph representative.  The working-graph vertex set and size are
\begin{align}
V
&=
\Bigl(V^0\setminus\bigcup_{c\in\mathcal{M}}\mathcal{C}_c^0\Bigr)
\cup\{s_c:c\in\mathcal{M}\},
\notag\\
N
&=|V|=N_0-\sum_{c\in\mathcal{M}}\bigl(|\mathcal{C}_c^0|-1\bigr),
\label{eq:mandatory_merge_vertices}
\end{align}
and the working-graph edge weights aggregate all physical interfaces between two
representatives,
\begin{equation}
w_{ab}=
\sum_{\substack{u\in\phi^{-1}(a),\,v\in\phi^{-1}(b)\\(u,v)\in E^0}}
w^0_{uv}.
\label{eq:mandatory_merge_weight}
\end{equation}
The working-graph generator and load sets are
\begin{align}
V_G&=\{a\in V:\phi^{-1}(a)\cap V_G^0\neq\varnothing\},
\notag\\
V_L&=\{a\in V:\phi^{-1}(a)\cap V_L^0\neq\varnothing\}.
\label{eq:merged_gen_load_sets}
\end{align}
For each anchor group $c$, its image in the working graph is
\begin{equation}
\mathcal{A}_c=\{a\in V:\phi^{-1}(a)\cap\mathcal{C}_c^0\neq\varnothing\},
\qquad c=1,\ldots,K.
\label{eq:working_anchor_groups}
\end{equation}
A working-graph island $V_k\subseteq V$ expands to the physical island
\begin{equation}
\widetilde{V}_k=\phi^{-1}(V_k)=\bigcup_{a\in V_k}\phi^{-1}(a),
\label{eq:expanded_island}
\end{equation}
which is used for final physical validation.

\begin{lemma}[Cut preservation]
\label{lem:cut_preserve}
For any partition $\pi:V\rightarrow\{1,\ldots,K\}$, the working-graph cut
equals the cut of its expanded physical partition
$\tilde{\pi}(u)=\pi(\phi(u))$. The proof is given in~\cite{jiang2026regrid}.
\end{lemma}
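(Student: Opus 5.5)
We compare the two cut values directly by writing each as a sum over edges and then regrouping the physical sum by the working-graph representatives of its endpoints. First we fix the cut functionals. On the working graph,
\[
\mathrm{cut}_G(\pi)=\sum_{\{a,b\}\in E,\ \pi(a)\neq\pi(b)} w_{ab}.
\]
On the physical graph,
\[
\mathrm{cut}_{G^0}(\tilde\pi)=\sum_{(u,v)\in E^0,\ \tilde\pi(u)\neq\tilde\pi(v)} w^0_{uv}.
\]
We treat edges as unordered pairs, and we take $E$ to be the set of pairs $\{a,b\}$ with $a\neq b$ for which at least one physical edge joins $\phi^{-1}(a)$ and $\phi^{-1}(b)$. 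This is how the aggregation \eqref{eq:mandatory_merge_weight} implicitly defines the working edge set.

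Next we split $E^0$ according to the images of the endpoints under $\phi$. Since the fibers $\phi^{-1}(a)$, $a\in V$, partition $V^0$, every physical edge $(u,v)$ is of exactly one of two types.
\begin{itemize}
\item \emph{Internal edges} have $\phi(u)=\phi(v)$. Then $\tilde\pi(u)=\pi(\phi(u))=\pi(\phi(v))=\tilde\pi(v)$, so these edges never appear in the physical cut.
\item \emph{Crossing edges} have $\phi(u)=a\neq b=\phi(v)$. Then $\tilde\pi(u)\neq\tilde\pi(v)$ if and only if $\pi(a)\neq\pi(b)$.
\end{itemize}

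The main step is to regroup the crossing edges by the unordered pair $\{a,b\}$. This gives
\[
\mathrm{cut}_{G^0}(\tilde\pi)=\sum_{\{a,b\}:\,a\neq b,\ \pi(a)\neq\pi(b)}\ \sum_{\substack{u\in\phi^{-1}(a),\,v\in\phi^{-1}(b)\\ (u,v)\in E^0}} w^0_{uv}.
\]
By \eqref{eq:mandatory_merge_weight}, the inner sum is exactly $w_{ab}$. Pairs $\{a,b\}$ with no physical edges between their fibers contribute zero and lie outside $E$, so the right-hand side equals $\mathrm{cut}_G(\pi)$.

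The only delicate point is bookkeeping. Each physical edge must be counted exactly once, which we secure by fixing the orientation convention so that $(u,v)$ and $(v,u)$ are not both summed. It also helps to remark that the argument never uses connectivity of the contracted groups. It relies only on $\phi$ being a well-defined map whose fibers partition $V^0$, which also covers the uncontracted groups, since they are identity fibers.
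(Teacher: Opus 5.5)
Your proof is correct and follows the route the paper relies on: it defers the formal proof to~\cite{jiang2026regrid}, but justifies the lemma by noting that contraction preserves the cut under \eqref{eq:mandatory_merge_weight}. Intra-fiber edges are never cut because $\tilde\pi$ is constant on each fiber $\phi^{-1}(a)$, and crossing edges regroup by representative pair into exactly $w_{ab}$; your observations that only the fiber-partition property of $\phi$ is used (not connectivity of contracted groups) and that the argument holds for arbitrary, not just feasible, $\pi$ are also accurate.
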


\subsection{Working-Graph Islanding Formulation}
\label{sec:working_graph_formulation}

Let $\pi:V\rightarrow\{1,\ldots,K\}$ be an island assignment and
$V_k=\{i\in V:\pi(i)=k\}$.  Let
$\chi_{\mathrm{conn}}(G[S])$ be the connectivity indicator of the induced
subgraph $G[S]$, equal to 1 if $G[S]$ is connected and 0 otherwise.  The
controlled-islanding problem is
\begin{align}
\min_{\pi}\quad
&\sum_{(i,j)\in E}w_{ij}\,\mathbf{1}[\pi(i)\neq\pi(j)]
\label{eq:islanding_objective}\\
\text{s.t.}\quad
&\bigcup_{k=1}^{K}V_k=V,\qquad
V_k\cap V_\ell=\varnothing,\quad 1\leq k<\ell\leq K,
\label{con:onehot}\\
&|V_k|\geq N_{\min},\qquad k=1,\ldots,K,
\label{con:min_size}\\
&V_k\cap V_G\neq\varnothing,\quad V_k\cap V_L\neq\varnothing,
\qquad k=1,\ldots,K,
\label{con:gen_load}\\
&\pi(a)=\pi(a'),\quad a,a'\in\mathcal{A}_c,\quad c=1,\ldots,K,
\label{con:same_group}\\
&\pi(a)\neq\pi(b),\quad
a\in\mathcal{A}_c,\ b\in\mathcal{A}_d,\ c\neq d,
\label{con:different_group}\\
&\chi_{\mathrm{conn}}(G[V_k])=1,\qquad k=1,\ldots,K.
\label{con:connect}
\end{align}
Constraint~\eqref{con:onehot} enforces complete coverage and pairwise disjointness;
\eqref{con:min_size} avoids degenerate islands; \eqref{con:gen_load} ensures each
island has generation and load; \eqref{con:same_group} and
\eqref{con:different_group} enforce coherency; and \eqref{con:connect} requires
$G[V_k]$ to be connected, because the connectivity indicator equals 1 exactly
for connected induced subgraphs.
Because contracted anchor groups preserve the cut
value under~\eqref{eq:mandatory_merge_weight}, the cut in~\eqref{eq:islanding_objective}
equals the physical cut of $\{\widetilde{V}_k\}$.  This combinatorial problem is
NP-hard because it contains weighted minimum bisection as a simpler special case:
when only two equal-size islands are required, the task is exactly to minimize
the cut weight between them.

The graph-partitioning objective serves as a proxy for post-islanding power-flow
feasibility because minimizing severed active power reduces the inter-area
transfer that must be redistributed after separation, but it does not by itself
guarantee that each island can sustain a consistent operating point.  Final
partitions are therefore physically validated using power-flow analysis.

To embed the problem in a QAOA
implementation, the assignment encoding determines how the islanding structure
is represented by qubits and Hamiltonian terms.  The full one-hot register
treats all island labels and all buses as
independent decisions~\cite{jiang2026regrid}, although one label is implied by the partition constraint
and the coherent anchor labels are already fixed.  This paper proposes two
assignment encodings, $\mathcal{E}_1$ and
$\mathcal{E}_2$, that remove these two redundancies while preserving the
feasible partitions.
The two register reductions are compared visually in
Fig.~\ref{fig:overall_diagram}(3).
The two encodings are introduced in Sections~\ref{sec:reduced_encoding}
and~\ref{sec:sf_encoding}, respectively.

\subsection{Encoding $\mathcal{E}_1$: Reduced Assignment Encoding}
\label{sec:reduced_encoding}

Existing QAOA-based islanding formulations~\cite{jiang2026regrid} adopt a full
one-hot assignment encoding on the working graph, introducing $K$ binary
variables per bus, one per island label.  This yields $KN$ binary variables
in total and requires an explicit one-hot equality penalty in the QUBO.
Two structural observations about the islanding problem permit a systematic
reduction of this variable count.  First, the $K$-th island label carries no
independent information once the first $K-1$ are fixed, since every bus must
belong to exactly one island.  Second, the coherent anchor constraints already
determine the island label of every anchor bus without any binary search, so
allocating variables to those buses is redundant.  Exploiting these two
observations in sequence yields two encoding reductions proposed in this paper,
denoted $\mathcal{E}_1$ and $\mathcal{E}_2$.
Section~\ref{sec:reduced_encoding} introduces $\mathcal{E}_1$, the reduced
assignment encoding, which removes the per-bus label redundancy and reduces
the register from $KN$ to $(K-1)N$ variables.
Section~\ref{sec:sf_encoding} introduces $\mathcal{E}_2$, the symmetry-fixed
assignment encoding, which further eliminates anchor-bus variables and reduces
the register to $(K-1)(N-|\mathcal{A}|)$ variables.

$\mathcal{E}_1$ eliminates the one-hot redundancy by representing only the
first $K-1$ island labels explicitly.  For each working-graph bus $i\in V$,
introduce
\begin{equation}
y_{i,k}\in\{0,1\},\qquad k=1,\ldots,K-1,
\label{eq:reduced_explicit_vars}
\end{equation}
and define the island-membership indicator
\begin{equation}
Y_{i,k}=
\begin{cases}
y_{i,k}, & k=1,\ldots,K-1,\\[1mm]
1-\displaystyle\sum_{\ell=1}^{K-1}y_{i,\ell}, & k=K.
\end{cases}
\label{eq:reduced_inferred_var}
\end{equation}
Bus $i$ is assigned to island $k$ when $Y_{i,k}=1$ and to no other island.
The all-zero explicit assignment $y_{i,1}=\cdots=y_{i,K-1}=0$ represents
membership in island $K$, so island $K$ is the implicit island and needs
no explicit variable.

By construction, the $K$ indicators sum to one,
\begin{equation}
\sum_{k=1}^{K}Y_{i,k}
=\sum_{k=1}^{K-1}y_{i,k}+\Bigl(1-\sum_{\ell=1}^{K-1}y_{i,\ell}\Bigr)=1,
\qquad i\in V,
\label{eq:reduced_sum_to_one}
\end{equation}
so the partition equality in~\eqref{con:onehot} is satisfied identically for
every bitstring.  This removes the need for a one-hot equality penalty in the
QUBO.

For the indicator~\eqref{eq:reduced_inferred_var} to represent a valid
single-island assignment, the inferred value $Y_{i,K}$ must remain in
$\{0,1\}$.  This requires
\begin{equation}
\sum_{k=1}^{K-1}y_{i,k}\leq 1,\qquad i\in V.
\label{con:reduced_validity}
\end{equation}
If more than one explicit variable for bus $i$ equals one, then
$Y_{i,K}=1-\sum_{\ell}y_{i,\ell}$ becomes negative, which is not a valid
indicator.  The corresponding uniqueness QUBO term is
\begin{equation}
H_U
=\sum_{i\in V}\sum_{1\le k<\ell\le K-1}y_{i,k}y_{i,\ell},
\label{eq:reduced_encoding_penalty_def}
\end{equation}
which is nonzero exactly when two explicit labels are selected for the same bus.

The cut objective in~\eqref{eq:islanding_objective} can be rewritten using
$Y_{i,k}$.  Because $\mathbf{1}[\pi(i)\neq\pi(j)]=1-\sum_{k=1}^{K}Y_{i,k}Y_{j,k}$
when both indicators are valid, the objective becomes
\begin{equation}
\sum_{(i,j)\in E}w_{ij}\Bigl(1-\sum_{k=1}^{K}Y_{i,k}Y_{j,k}\Bigr).
\label{eq:reduced_cut_full}
\end{equation}
Dropping the edge-weight constant $\sum_{(i,j)\in E}w_{ij}$, which is
independent of the assignment, gives the cut Hamiltonian
\begin{equation}
H_f
=-\sum_{(i,j)\in E}w_{ij}\sum_{k=1}^{K}Y_{i,k}Y_{j,k}.
\label{eq:reduced_cut_qubo}
\end{equation}
Since each $Y_{i,k}$ is affine in $\{y_{i,k}\}$ and $y_{i,k}^2=y_{i,k}$ for
binary variables, every term $Y_{i,k}Y_{j,k}$ expands to degree at most two, so
$H_f$ is a valid QUBO.

The assignment register contains
\begin{equation}
n_q^{\mathcal{E}_1}=(K-1)N
\label{eq:reduced_qubit_count}
\end{equation}
binary variables.  Compared with the full one-hot encoding on the working graph
($KN$ variables) the saving is $N$ variables from eliminating one label per bus.
Compared with the full one-hot encoding on the physical graph ($KN_0$ variables),
the total saving is
\begin{equation}
\Delta_Q^{\mathcal{E}_1}
=KN_0-(K-1)N
=N_0+(K-1)\sum_{c\in\mathcal{M}}\bigl(|\mathcal{C}_c^0|-1\bigr),
\label{eq:re_total_saving}
\end{equation}
where the second term counts the variables eliminated by graph contraction.

This encoding is general with respect to island label assignments: it still
represents all $K!$ label-permutation equivalents of each feasible physical
partition.  The following encoding eliminates this residual symmetry.

\subsection{Encoding $\mathcal{E}_2$: Symmetry-Fixed Assignment Encoding}
\label{sec:sf_encoding}

\subsubsection{Label symmetry and its source}

$\mathcal{E}_1$ still contains a structural redundancy.  The
controlled-islanding objective~\eqref{eq:islanding_objective} and every
constraint~\eqref{con:onehot}--\eqref{con:connect} are invariant under a common
relabeling of the $K$ island indices: if $\pi$ is feasible with cut value $c^*$
and $\sigma:\{1,\ldots,K\}\rightarrow\{1,\ldots,K\}$ is any permutation, then
$\sigma\circ\pi$ has the same physical partition, the same cut value $c^*$, and
the same feasibility status.  Consequently, the assignment register of the
$\mathcal{E}_1$ holds $K!$ distinct bitstrings that all correspond to the same
unlabeled island partition.

The coherent anchor constraints~\eqref{con:same_group}
and~\eqref{con:different_group} provide a natural mechanism to break this
symmetry.  Constraint~\eqref{con:same_group} forces every bus in anchor group
$\mathcal{A}_c$ to the same island, and constraint~\eqref{con:different_group}
forces buses from different anchor groups to different islands.  Together they
impose a bijection between the $K$ anchor groups and the $K$ island labels.  Once
that bijection is fixed, no further label permutation is consistent with the
constraints.  The following proposition makes this precise.

\begin{proposition}[Canonical coherent-group labeling]
\label{prop:canonical_labeling}
Suppose the $K$ working-graph anchor groups $\mathcal{A}_1,\ldots,\mathcal{A}_K$
are pairwise disjoint and nonempty, there are $K$ target islands, and the
objective and feasibility constraints are invariant under a common permutation of
island labels.  Every feasible assignment $\pi$ has a label-relabeled
representative $\sigma\circ\pi$ satisfying
\begin{equation}
(\sigma\circ\pi)(a)=c,\qquad a\in\mathcal{A}_c,\quad c=1,\ldots,K.
\label{eq:canonical_anchor_assignment}
\end{equation}
This representative has the same cut value and satisfies all constraints.
\end{proposition}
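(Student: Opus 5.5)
The plan is to build the relabeling $\sigma$ directly from the anchor groups. Then I will show that $\sigma\circ\pi$ is well defined, attains the canonical labels~\eqref{eq:canonical_anchor_assignment}, and inherits cut value and feasibility from the invariance hypothesis.

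\emph{Step 1: constructing $\sigma$.} Fix a feasible assignment $\pi$. By~\eqref{con:same_group}, all buses of $\mathcal{A}_c$ share one label. Since $\mathcal{A}_c$ is nonempty, we may define $\lambda(c)=\pi(a)$ for any $a\in\mathcal{A}_c$, and this is well defined. By~\eqref{con:different_group}, $\lambda(c)\neq\lambda(d)$ whenever $c\neq d$, so $\lambda:\{1,\ldots,K\}\to\{1,\ldots,K\}$ is injective. An injective self-map of a finite set is a bijection, so $\lambda$ is a permutation. Set $\sigma=\lambda^{-1}$. Then for $a\in\mathcal{A}_c$ we get $(\sigma\circ\pi)(a)=\lambda^{-1}(\lambda(c))=c$, which is exactly~\eqref{eq:canonical_anchor_assignment}. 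Pairwise disjointness of the $\mathcal{A}_c$ ensures that no bus is required to carry two different canonical labels, so the target condition is consistent.

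\emph{Step 2: invariance of the cut.} For the cut, $\mathbf{1}[\sigma(\pi(i))\neq\sigma(\pi(j))]=\mathbf{1}[\pi(i)\neq\pi(j)]$ holds because $\sigma$ is injective. Hence the cut~\eqref{eq:islanding_objective} is unchanged. By Lemma~\ref{lem:cut_preserve}, the physical cut is unchanged as well.

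\emph{Step 3: invariance of the constraints.} The islands of $\sigma\circ\pi$ are $V'_k=V_{\sigma^{-1}(k)}$, so the family of islands is merely reindexed. Coverage and disjointness~\eqref{con:onehot}, size~\eqref{con:min_size}, generation/load presence~\eqref{con:gen_load}, and connectivity~\eqref{con:connect} are all properties quantified over every $k$, so they transfer to the reindexed family. The coherency constraints~\eqref{con:same_group}--\eqref{con:different_group} compare labels only through equality, and equality is preserved under the bijection $\sigma$. This step is essentially the stated invariance hypothesis, but I would verify it constraint by constraint rather than rely only on the assumption.

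\emph{Expected obstacle.} The only nontrivial point is Step 1: showing that $\lambda$ is a genuine permutation, which needs both well-definedness and injectivity. This is where nonemptiness, pairwise disjointness, and the equality between the number of anchor groups and the number of islands are all used. The remaining steps are routine.
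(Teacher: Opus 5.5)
Your proposal is correct and matches the paper's proof. The paper likewise defines the anchor-to-label map $\kappa(c)=k_c$, gets well-definedness from \eqref{con:same_group} and injectivity from \eqref{con:different_group}, concludes bijectivity by finiteness, and sets $\sigma=\kappa^{-1}$; the only difference is that the paper simply invokes the stated label-invariance hypothesis for the cut and feasibility, whereas your Steps 2--3 check it constraint by constraint, which is harmless extra rigor.
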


\begin{proof}
For each anchor group $\mathcal{A}_c$, constraint~\eqref{con:same_group}
implies that there exists an island label $k_c\in\{1,\ldots,K\}$ such that
\begin{equation}
\pi(a)=k_c,\qquad a\in\mathcal{A}_c.
\label{eq:anchor_group_label}
\end{equation}
Constraint~\eqref{con:different_group} gives
\begin{equation}
c\neq d \quad\Longrightarrow\quad k_c\neq k_d .
\label{eq:anchor_group_labels_distinct}
\end{equation}
Hence the map
\begin{equation}
\kappa:\{1,\ldots,K\}\rightarrow\{1,\ldots,K\},\qquad
\kappa(c)=k_c
\end{equation}
is injective and therefore bijective.  Let $\sigma=\kappa^{-1}$, equivalently
\begin{equation}
\sigma(k_c)=c,\qquad c=1,\ldots,K.
\label{eq:anchor_inverse_permutation}
\end{equation}
Combining~\eqref{eq:anchor_group_label} and~\eqref{eq:anchor_inverse_permutation}
gives
\begin{equation}
(\sigma\circ\pi)(a)=\sigma(k_c)=c,\qquad
a\in\mathcal{A}_c,\quad c=1,\ldots,K,
\end{equation}
which proves~\eqref{eq:canonical_anchor_assignment}.  Since the objective and
all constraints are label-invariant, $\sigma\circ\pi$ is also feasible with
the same cut value.
\end{proof}

\subsubsection{Variable elimination via fixed anchor assignments}

Let
\begin{equation}
\mathcal{A}=\bigcup_{c=1}^{K}\mathcal{A}_c,\qquad
F=V\setminus\mathcal{A}
\label{eq:free_bus_set}
\end{equation}
denote the fixed-anchor and free-bus sets, with $|\mathcal{A}|$ anchor buses and
$|F|=N-|\mathcal{A}|$ free buses.  Proposition~\ref{prop:canonical_labeling}
shows that restricting to canonical representatives pins the island assignment of
every anchor bus to a known constant: bus $a\in\mathcal{A}_c$ is always in island
$c$.  No binary variable is needed to represent this assignment.  Binary variables
are therefore allocated only for free buses,
\begin{equation}
y_{i,k}\in\{0,1\},\qquad i\in F,\quad k=1,\ldots,K-1,
\label{eq:sf_explicit_vars}
\end{equation}
with the same reduced structure as~\eqref{eq:reduced_explicit_vars}: island $K$
remains implicit for free buses too.

The unified island-membership indicator is
\begin{equation}
Y^{\mathcal{E}_2}_{i,k}=
\begin{cases}
\mathbf{1}[k=c], & i\in\mathcal{A}_c,\\[1mm]
y_{i,k}, & i\in F,\ k=1,\ldots,K-1,\\[1mm]
1-\displaystyle\sum_{\ell=1}^{K-1}y_{i,\ell},
  & i\in F,\ k=K.
\end{cases}
\label{eq:sf_membership_indicator}
\end{equation}
For an anchor bus $a\in\mathcal{A}_c$, the indicator evaluates to
$Y^{\mathcal{E}_2}_{a,k}=\mathbf{1}[k=c]$, which is 1 for $k=c$ and 0 otherwise.
The sum-to-one property holds for all buses:
\begin{equation}
\sum_{k=1}^{K}Y^{\mathcal{E}_2}_{i,k}=1,\qquad i\in V,
\label{eq:sf_sum_to_one}
\end{equation}
because $\sum_{k=1}^{K}\mathbf{1}[k=c]=1$ for anchor buses, and the free-bus
case follows identically to~\eqref{eq:reduced_sum_to_one}.

\subsubsection{Structural satisfaction of constraints}

Substituting $Y^{\mathcal{E}_2}$ satisfies the coherency constraints and the
generator-presence part of~\eqref{con:gen_load} structurally, while leaving the
encoding-validity condition active only on free buses.

\textit{Coherency (constraints~\eqref{con:same_group}
and~\eqref{con:different_group}).}  For any two buses $a,a'\in\mathcal{A}_c$,
the indicator gives $Y^{\mathcal{E}_2}_{a,k}=Y^{\mathcal{E}_2}_{a',k}=\mathbf{1}[k=c]$,
so both are assigned to island $c$: constraint~\eqref{con:same_group} is
satisfied.  For $a\in\mathcal{A}_c$ and $b\in\mathcal{A}_d$ with $c\neq d$,
$Y^{\mathcal{E}_2}_{a,k}=\mathbf{1}[k=c]$ and $Y^{\mathcal{E}_2}_{b,k}=\mathbf{1}[k=d]$
assign $a$ and $b$ to different islands: constraint~\eqref{con:different_group} is
satisfied.  No coherency QUBO penalty is required.

\textit{Generator presence in~\eqref{con:gen_load}.}  By
assumption~\eqref{eq:anchor_contains_generator}, each physical anchor group
$\mathcal{C}_c^0$ contains at least one generator bus $g\in V_G^0$.  Its
working-graph representative $\phi(g)\in\mathcal{A}_c$ is a generator bus in
$V_G$.  Since every bus in $\mathcal{A}_c$ is fixed to island $c$, island $c$
always contains $\phi(g)\in V_G$: the generator-presence condition is structural.

\textit{Encoding validity for anchor buses.}  For anchor bus $a\in\mathcal{A}_c$,
\begin{equation}
\sum_{k=1}^{K-1}Y^{\mathcal{E}_2}_{a,k}
=\sum_{k=1}^{K-1}\mathbf{1}[k=c]
=\mathbf{1}[c\leq K-1]\leq 1,
\label{eq:sf_anchor_validity_auto}
\end{equation}
which is always at most one.  The validity condition is therefore automatically
satisfied for every anchor bus; it remains an active constraint only for free
buses:
\begin{equation}
\sum_{k=1}^{K-1}y_{i,k}\leq1,\qquad i\in F.
\label{con:sf_validity}
\end{equation}

\subsubsection{Cut Hamiltonian after substitution}

Substituting $Y^{\mathcal{E}_2}$ into the cut Hamiltonian
\begin{equation}
H_{\mathrm{cut}}^{\mathcal{E}_2}
=-\sum_{(i,j)\in E} w_{ij}
  \sum_{k=1}^{K}Y^{\mathcal{E}_2}_{i,k}Y^{\mathcal{E}_2}_{j,k}
\label{eq:sf_cut_hamiltonian}
\end{equation}
separates its terms into three edge classes. A fixed-anchor endpoint belongs to
some $\mathcal{A}_c$ and therefore has island label $c$. A free-bus endpoint
belongs to $F$, and binary variables represent its label. For later use, define
\begin{align}
E_{\mathcal{A}F}^{c}
&=\{(a,f)\in E:\ a\in\mathcal{A}_c,\ f\in F\}, \notag\\
E_{FF}
&=\{(i,j)\in E:\ i,j\in F\},
\label{eq:sf_edge_sets}
\end{align}
where each edge in $E_{\mathcal{A}F}^{c}$ is written with its fixed-anchor
endpoint first.

\textit{Anchor--anchor edges} $(i,j)\in E$ with $i\in\mathcal{A}_c$,
$j\in\mathcal{A}_d$:
\begin{equation}
\sum_{k=1}^{K}Y^{\mathcal{E}_2}_{i,k}Y^{\mathcal{E}_2}_{j,k}
=\sum_{k=1}^{K}\mathbf{1}[k=c]\,\mathbf{1}[k=d]
=\mathbf{1}[c=d].
\label{eq:sf_anchor_anchor_term}
\end{equation}
This is a constant: 1 if both endpoints belong to the same anchor group
(intra-group edge, never cut) and 0 if they belong to different anchor groups
(always cut, contributing $w_{ij}$ as a fixed additive constant).  Both
cases are variable-independent and can be dropped from the QUBO without affecting
the optimization.

\textit{Edges with one fixed-anchor endpoint and one free-bus endpoint.}
For $(a,f)\in E_{\mathcal{A}F}^{c}$,
\begin{equation}
\sum_{k=1}^{K}Y^{\mathcal{E}_2}_{a,k}Y^{\mathcal{E}_2}_{f,k}
=\sum_{k=1}^{K}\mathbf{1}[k=c]\,Y^{\mathcal{E}_2}_{f,k}
=Y^{\mathcal{E}_2}_{f,c},
\label{eq:sf_anchor_free_term}
\end{equation}
which is linear in the free-bus variables because
$Y^{\mathcal{E}_2}_{f,c}=y_{f,c}$ for $c<K$ and
$Y^{\mathcal{E}_2}_{f,K}=1-\sum_{\ell=1}^{K-1}y_{f,\ell}$.

\textit{Edges with two free-bus endpoints.}  For $(i,j)\in E_{FF}$, the term
$\sum_{k=1}^{K}Y^{\mathcal{E}_2}_{i,k}Y^{\mathcal{E}_2}_{j,k}$ is quadratic in
the free-bus variables, exactly as in $\mathcal{E}_1$.

Therefore, the symmetry-fixed cut Hamiltonian is
\begin{align}
H_f
={}&
-\sum_{c=1}^{K}\sum_{(a,f)\in E_{\mathcal{A}F}^{c}}
  w_{af}\,Y^{\mathcal{E}_2}_{f,c}
\notag\\
&-\sum_{(i,j)\in E_{FF}}
  w_{ij}\sum_{k=1}^{K}Y^{\mathcal{E}_2}_{i,k}Y^{\mathcal{E}_2}_{j,k}
\label{eq:sf_cut_qubo}
\end{align}
after dropping the constant anchor--anchor contributions.  This is a valid QUBO
over the free-bus variables only, with degree at most two after applying
$y_{i,k}^2=y_{i,k}$.

\subsubsection{Qubit count and total saving}

The assignment register contains
\begin{equation}
n_q^{\mathcal{E}_2}
=(K-1)|F|
=(K-1)(N-|\mathcal{A}|)
\label{eq:sf_qubit_count}
\end{equation}
binary variables.  Relative to $\mathcal{E}_1$, the additional
saving from fixing anchor labels is
\begin{equation}
\Delta n_q
=n_q^{\mathcal{E}_1}-n_q^{\mathcal{E}_2}
=(K-1)|\mathcal{A}|,
\label{eq:sf_extra_saving}
\end{equation}
one variable per anchor bus per explicit label.  The total saving relative to
full one-hot encoding on the physical graph is
\begin{equation}
\Delta_Q^{\mathcal{E}_2}
=KN_0-n_q^{\mathcal{E}_2}
=KN_0-(K-1)(N-|\mathcal{A}|).
\label{eq:sf_total_saving}
\end{equation}

\subsubsection{Exactness qualification}

Proposition~\ref{prop:canonical_labeling} shows that every feasible islanding
partition has a label-relabeled representative satisfying the anchor-fixed
rules of $\mathcal{E}_2$.  Theorem~\ref{thm:encoding_hierarchy} in
Appendix~\ref{app:encoding_hierarchy} formalizes the consequence:
$\mathcal{E}_2$ preserves the feasible solutions and optimum of
$\mathcal{E}_1$ while representing each feasible unlabeled partition once rather
than $K!$ times.

\begin{figure*}[!t]
\centering
\includegraphics[width=0.92\textwidth]{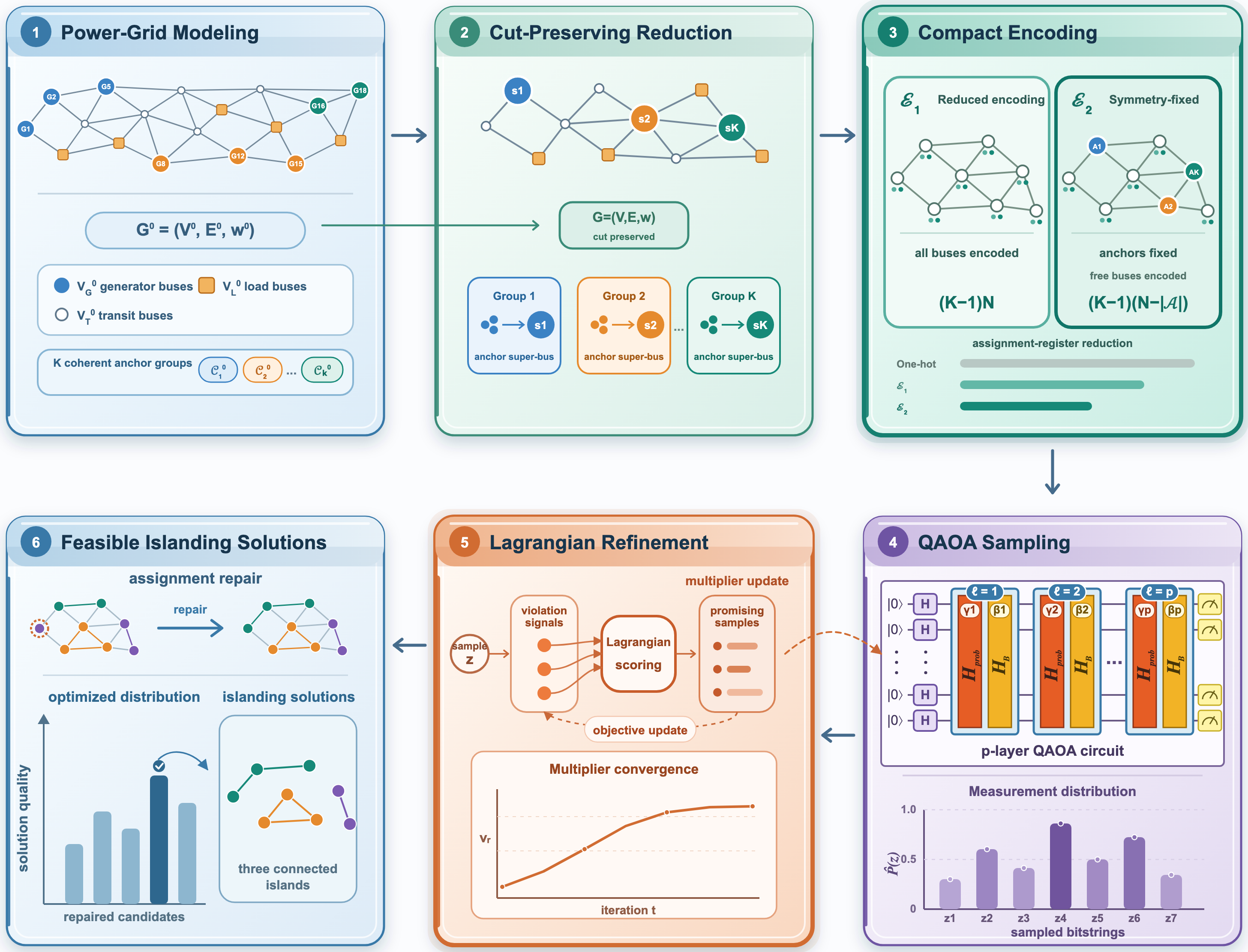}
\caption[Overall PACE-QAOA framework]{Overall PACE-QAOA workflow for controlled
power-system islanding. (1)~The original grid is modeled as a weighted graph with
generator, load, and transit buses and $K$ coherent anchor groups.
(2)~Each coherent group is contracted into an anchor super-bus to form a
cut-preserving working graph. (3)~The compact encodings $\mathcal{E}_1$ and
$\mathcal{E}_2$ represent bus-to-island assignments, with $\mathcal{E}_2$ fixing
anchor labels to further reduce the assignment register. (4)~A $p$-layer QAOA
circuit samples candidate assignments. (5)~Lagrangian refinement uses constraint
violations to score promising samples and update the multipliers.
(6)~Assignment repair converts the selected samples into feasible connected
islands. Blocks~3 and~5 highlight the proposed compact encoding and Lagrangian
refinement.}
\label{fig:overall_diagram}
\end{figure*}

\section{QAOA Solutions for Power System Islanding}
\label{quantum}

This section presents the quantum formulation on the working graph defined in
Section~\ref{sec:reduction}.  The solver combines the compact assignment
encoding with modular QUBO construction and a slack-free Lagrangian treatment
of selected inequalities evaluated on measured samples.

\subsection{QUBO Hamiltonian Construction}
\label{sec:reduced_qubo}

Let $\xi\in\{\mathcal{E}_1,\mathcal{E}_2\}$ denote the selected encoding, and
let
\begin{equation}
\mathcal{B}_{\xi}
=
\begin{cases}
V, & \xi=\mathcal{E}_1,\\
F, & \xi=\mathcal{E}_2,
\end{cases}
\label{eq:qaoa_variable_bus_set}
\end{equation}
be the set of buses whose island labels are represented by explicit binary
variables.  The QAOA register allocates variables
$y_{i,k}$ for $i\in\mathcal{B}_{\xi}$ and $k=1,\ldots,K-1$.  All QUBO terms are
formed from the decoded island-membership indicator
\begin{equation}
\widehat{Y}_{i,k} =
\begin{cases}
\mathbf{1}[k=c],
  & \xi=\mathcal{E}_2,\ i\in\mathcal{A}_c,\\[1mm]
y_{i,k},
  & i\in\mathcal{B}_{\xi},\ k=1,\ldots,K-1,\\[1mm]
1-\displaystyle\sum_{\ell=1}^{K-1}y_{i,\ell},
  & i\in\mathcal{B}_{\xi},\ k=K.
\end{cases}
\label{eq:qaoa_decoded_indicator}
\end{equation}
For $\mathcal{E}_1$, this reduces to the inferred-island indicator on all buses.
For $\mathcal{E}_2$, anchor-bus indicators are constants and only free-bus
labels remain variable.

The constraints converted to QUBO terms are written in terms of
$\widehat{Y}$.  The at-most-one uniqueness penalty for~\eqref{con:onehot} is
\begin{equation}
H_U
=
\sum_{i\in\mathcal{B}_{\xi}}
\sum_{1\le k<\ell\le K-1}
y_{i,k}y_{i,\ell}.
\label{eq:qubo_onehot}
\end{equation}
This term acts only on buses in $\mathcal{B}_{\xi}$.  For $K=2$, no pair of
explicit island variables exists, so $H_U=0$.

The cut objective~\eqref{eq:islanding_objective} is equivalent to
$\sum_{(i,j)\in E}w_{ij}(1-\sum_{k=1}^{K}\widehat{Y}_{i,k}
\widehat{Y}_{j,k})$ under valid assignments.
Since the additive constant $\sum_{(i,j)\in E}w_{ij}$ is independent of the
partition, the cut Hamiltonian takes the compact QUBO form
\begin{equation}
H_f
=
-\sum_{(i,j)\in E}
w_{ij}\sum_{k=1}^{K}\widehat{Y}_{i,k}\widehat{Y}_{j,k}.
\label{eq:qubo_cut_reduced}
\end{equation}
Under $\mathcal{E}_2$, fixed-anchor products reduce to constants or linear
terms in the free-bus variables. As each $\widehat{Y}_{i,k}$ is affine
in the explicit binary variables or is a constant, every bilinear product
remains at most quadratic after the binary identity $y^2=y$, so $H_f$
constitutes a valid QUBO under both encodings. The physical cut
interpretation of~\eqref{eq:qubo_cut_reduced} holds whenever the assignment
satisfies the validity condition on $\mathcal{B}_{\xi}$; a bitstring with
$\sum_{k=1}^{K-1}y_{i,k}>1$ for some $i\in\mathcal{B}_{\xi}$ yields an
algebraic rather than physical island indicator, and such states are
penalized by $\lambda_U H_U$ and corrected in post-processing via the
uniqueness violation~\eqref{eq:v_encoding_raw}.

For $\mathcal{E}_1$, the working-graph coherency component is
\begin{align}
H_G
={}&
\sum_{c=1}^{K}
\sum_{k=1}^{K}
\sum_{\substack{g,g'\in V_G^{(c)}\\g<g'}}
\left(\widehat{Y}_{g,k}
-\widehat{Y}_{g',k}\right)^2 \notag\\
&{}+
\sum_{k=1}^{K}
\sum_{\substack{c,c'=1\\c<c'}}^{K}
\sum_{g\in V_G^{(c)}}
\sum_{g'\in V_G^{(c')}}
\widehat{Y}_{g,k}
\widehat{Y}_{g',k}.
\label{eq:qubo_coh}
\end{align}
The first term keeps generators in the same coherent group together, and the
second prevents different coherent groups from occupying the same island.  For
$\mathcal{E}_2$, set
\begin{equation}
H_G=0,
\label{eq:qubo_coh_e2}
\end{equation}
since the fixed-anchor decoding already satisfies the coherency constraints.

The connectivity requirement~\eqref{con:connect} demands that each island
induce a connected subgraph, a condition that is not expressible as a
polynomial function of the binary assignment variables and therefore cannot
be embedded directly as a QUBO term.
To bias the phase separator toward connected partitions without introducing
auxiliary connectivity variables, the following local polynomial surrogate
is adopted:
\begin{equation}
H_c
=
\sum_{k=1}^{K}\sum_{i\in V}
\left(
\widehat{Y}_{i,k}
-
\sum_{j\in\mathcal{N}(i)}
\widehat{Y}_{i,k}\widehat{Y}_{j,k}
\right),
\label{eq:H_iso}
\end{equation}
where $\mathcal{N}(i)$ denotes the neighbor set of bus $i$.
Each summand is strictly positive when bus $i$ is assigned to island $k$
but none of its neighbors are, so $H_c$ penalizes isolated bus assignments
while remaining at most quadratic in the decision variables.
Satisfying $H_c = 0$ is only a necessary, not sufficient, condition
for~\eqref{con:connect}; accordingly, an exact DFS connectivity check is
applied to every measured sample as described in
Section~\ref{sec:constraint_placement}.

Thus, the QUBO constraint component used in the phase separator is
\begin{equation}
H_P
=\lambda_U H_U
{}+\alpha_G H_G
{}+\alpha_c H_c,
\label{eq:qubo_constraint_component}
\end{equation}
and the full problem Hamiltonian passed to the circuit is
\begin{equation}
H_{\mathrm{prob}}
=
H_f
{}+H_P,
\label{eq:H_prob}
\end{equation}
where $H_G=0$ under $\mathcal{E}_2$.  The assignment-qubit count is
\begin{equation}
n_q
=(K-1)|\mathcal{B}_{\xi}|
=
\begin{cases}
(K-1)N, & \xi=\mathcal{E}_1,\\
(K-1)(N-|\mathcal{A}|), & \xi=\mathcal{E}_2.
\end{cases}
\label{eq:qubit_count_reduced}
\end{equation}
Constraints~\eqref{con:min_size} and~\eqref{con:gen_load} are absent from
$H_{\mathrm{prob}}$; they involve inequality bounds on island cardinalities
whose QUBO embedding would require per-island binary slack registers, adding
$O(K\log N)$ auxiliary qubits and dense cross-variable coupling.  Instead,
they are evaluated exactly on every measured sample and enforced through the
slack-free Lagrangian formulation detailed in
Section~\ref{sec:constraint_placement}.

\subsection{Slack-Free Lagrangian Constraint Placement}
\label{sec:constraint_placement}

To preserve qubit efficiency, constraints~\eqref{con:min_size}
and~\eqref{con:gen_load} are not embedded in the QUBO via binary slack
registers. Instead, their satisfaction is assessed directly on every
measured sample and enforced through a classical augmented Lagrangian
that updates penalty multipliers between quantum optimization iterations, as
depicted in Fig.~\ref{fig:overall_diagram}(5).
This separation keeps the quantum register free of slack variables, so
the circuit encodes only the assignment variables of the selected
encoding and all inequality enforcement remains entirely classical.

Each measured bitstring $z\in\{0,1\}^{n_q}$ is decoded into island
membership indicators according to~\eqref{eq:qaoa_decoded_indicator}.
Island $K$ corresponds to the all-zero explicit assignment on
$\mathcal{B}_{\xi}$. A bitstring with more than one explicit island bit
equal to one for the same bus yields an invalid inferred indicator and
constitutes a multi-assignment violation. The island cardinalities used by the Lagrangian checks are
\begin{align}
n_k(z)&=\sum_{i\in V}\widehat{Y}_{i,k}(z),&
g_k(z)&=\sum_{i\in V_G}\widehat{Y}_{i,k}(z), \notag\\
\ell_k(z)&=\sum_{i\in V_L}\widehat{Y}_{i,k}(z).&&
\label{eq:island_cardinalities}
\end{align}
The corresponding slack-free hinge violations are
\begin{align}
v_{M,k}(z)&=\max\{0,\;N_{\min}-n_k(z)\}, \label{eq:v_minsize}\\
v_{G,k}(z)&=\max\{0,\;1-g_k(z)\}, \label{eq:v_gen}\\
v_{L,k}(z)&=\max\{0,\;1-\ell_k(z)\}. \label{eq:v_load}
\end{align}
For island $K$, these expressions use the inferred indicator in~\eqref{eq:qaoa_decoded_indicator}; no auxiliary binary variables are introduced. Aggregating over all islands gives
\begin{align}
v_M(z)&=\sum_{k=1}^{K}v_{M,k}(z),&
v_G(z)&=\sum_{k=1}^{K}v_{G,k}(z), \notag\\
v_L(z)&=\sum_{k=1}^{K}v_{L,k}(z).
\label{eq:aggregate_lag_violations}
\end{align}

\begin{proposition}[Slack-free inequality evaluation]
\label{prop:slack_free_equiv}
Let $g_r(z)\le 0$ denote any scalar inequality constraint evaluated on a decoded sample $z$, and let
$v_r(z)=\max\{0,g_r(z)\}$ be its hinge violation. Then
\begin{equation}
\label{eq:slack_free_equiv}
v_r(z)^2
=
\min_{s_r\ge 0}\left(g_r(z)+s_r\right)^2 .
\end{equation}
Thus, evaluating the hinge violation after measurement gives the same residual as the best nonnegative slack-variable representation, without allocating slack qubits in the QAOA register.
\end{proposition}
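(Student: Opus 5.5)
The plan is to prove \eqref{eq:slack_free_equiv} by a direct one-variable minimization, splitting on the sign of $g_r(z)$. Fix a decoded sample $z$ and write $g=g_r(z)\in\mathbb{R}$. Then $\phi(s)=(g+s)^2$ is a convex quadratic in $s$ with unconstrained minimizer $s=-g$. Minimizing it over $s\ge 0$ amounts to projecting this minimizer onto $[0,\infty)$. This gives the minimizer $s^\star=\max\{0,-g\}$.

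First, in the case $g\le 0$, the choice $s^\star=-g\ge 0$ is feasible and yields $\phi(s^\star)=0$. Since $\phi\ge 0$, the minimum is $0$, which equals $v_r(z)^2=\max\{0,g\}^2$.

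Second, in the case $g>0$, every feasible $s\ge 0$ satisfies $g+s\ge g>0$. Because $t\mapsto t^2$ is increasing on $[0,\infty)$, we get $\phi(s)\ge g^2$, with equality at $s=0$. Hence the minimum is $g^2=\max\{0,g\}^2$.

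Combining the two cases gives $\min_{s\ge 0}(g+s)^2=\max\{0,g\}^2=v_r(z)^2$ for every $z$. This is the identity \eqref{eq:slack_free_equiv}.

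To tie this to the interpretive claim, I will note the standard slack construction. Encoding $g_r\le 0$ as the equality $g_r+s_r=0$ with a nonnegative slack $s_r$ produces the penalty $(g_r+s_r)^2$. The slack register then only serves to realize the optimal $s_r$. By the identity above, that optimum is attained at $s^\star=\max\{0,-g_r(z)\}$, which is computable classically once $z$ is measured. It therefore suffices to evaluate the hinge \eqref{eq:v_minsize}--\eqref{eq:v_load} after measurement, with no slack qubits allocated.

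One subtlety is that a binary slack register represents only a finite grid of integer values. For the cardinality constraints, $g_r(z)$ is integer-valued, since $n_k$, $g_k$ and $\ell_k$ are counts. Here $s^\star$ is an integer bounded by $N_{\min}$ or $1$, so a register of $O(\log N)$ bits contains it and the two residuals coincide exactly. I expect this discretization remark to be the only point needing care. The core identity is elementary, and no earlier result of the paper is needed beyond the definitions of the hinge violations.
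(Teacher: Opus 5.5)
Your proof is correct and is essentially the paper's argument. You use the same case split on the sign of $g_r(z)$, taking $s_r=-g_r(z)$ when $g_r(z)\le 0$ and $s_r=0$ when $g_r(z)>0$, and your monotonicity step makes the second case slightly more explicit than the paper does. One small slip in your optional discretization remark: the optimal slack $s^\star=\max\{0,-g_r(z)\}$ is bounded by $N-N_{\min}$ for the min-size constraint and by $|V_G|-1$ and $|V_L|-1$ for the presence constraints (matching the widths $b_M,b_G,b_L$ in \eqref{eq:slack_widths}), not by $N_{\min}$ or $1$, though your $O(\log N)$-bit conclusion is unaffected.
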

\begin{proof}
If $g_r(z)\le 0$, choosing $s_r=-g_r(z)$ makes the residual zero, which equals $v_r(z)^2$. If $g_r(z)>0$, the residual $\left(g_r(z)+s_r\right)^2$ is minimized over $s_r\ge 0$ at $s_r=0$, giving $g_r(z)^2=v_r(z)^2$. These two cases prove~\eqref{eq:slack_free_equiv}.
\end{proof}

The same raw decoded sample is also checked for reduced-assignment validity, exact DFS connectivity, and generator coherency. These violations are
\begin{align}
v_U(z)
&=
\sum_{i\in\mathcal{B}_{\xi}}
\max\left\{0,\sum_{k=1}^{K-1}y_{i,k}(z)-1\right\},
\label{eq:v_encoding_raw}\\
v_C(z)
&=
\sum_{k=1}^{K}
\max\{0,\omega(G[V_k(z)])-1\},
\label{eq:v_connectivity_raw}
\end{align}
where $\omega(G[V_k(z)])$ is the number of DFS-connected components in island $k$. Let $I_c(z)$ denote the set of islands occupied by the working-graph representatives of coherent generator group $c$. The coherency violation used by the post-checks is
\begin{equation}
v_{\mathrm{coh}}(z)
=
\sum_c \max\{0,|I_c(z)|-1\}
+
\sum_{c<c'}|I_c(z)\cap I_{c'}(z)|.
\label{eq:v_coh_raw}
\end{equation}
For $\mathcal{E}_2$, the fixed-anchor decoding makes
$v_{\mathrm{coh}}(z)=0$, and generator-presence violations are also zero under
the anchor assumption in~\eqref{eq:anchor_contains_generator}.  The raw
violation vector is therefore
\begin{equation}
\mathbf{v}(z)=
\bigl(
v_U(z),v_C(z),v_M(z),v_G(z),v_L(z),v_{\mathrm{coh}}(z)
\bigr),
\label{eq:raw_violation_vector}
\end{equation}
where components that are structural for the selected encoding are identically
zero and are omitted from the active multiplier set. In the implementation, each post-processing check returns a scaled penalty, and the Lagrangian solver divides by the effective scale to recover the unscaled violation values in~\eqref{eq:raw_violation_vector}.

The selected slack-free violations feed into an augmented Lagrangian.  Let
\[
\mathcal{R}\subseteq\{M,G,L\}
\]
index the active multiplier components of~\eqref{eq:raw_violation_vector}.
These are the only inequalities with no QUBO embedding: minimum island
size ($v_M$) from~\eqref{con:min_size} and generator and load presence
($v_G$, $v_L$) from~\eqref{con:gen_load}, with $v_G$ omitted under
$\mathcal{E}_2$ where it is identically zero.  Violations $v_U$ and
$v_{\mathrm{coh}}$ are excluded because $H_U$ and $H_G$ already penalize
them in the Hamiltonian.  Exact connectivity $v_C$ is handled by the
post-measurement repair step described later in this section, since
$H_c$ is only a sampling-bias surrogate and does not represent the exact
DFS violation.  For a measured sample $z$, the per-sample augmented
Lagrangian cost is
\begin{equation}
\label{eq:lagrangian_sample_cost}
\mathcal{L}(z;\boldsymbol{\nu})
=
c_f(z)
+
\sum_{r\in\mathcal{R}}\nu_r v_r(z)
+
\frac{\rho}{2}
\sum_{r\in\mathcal{R}}v_r(z)^2.
\end{equation}
Taking the expectation over the measurement distribution gives the
sampled Lagrangian objective
\begin{equation}
\label{eq:lagrangian_objective}
\widehat{\mathcal{L}}
(\boldsymbol{\gamma},\boldsymbol{\beta};\boldsymbol{\nu})
=
\sum_z
\widehat{P}_{\boldsymbol{\gamma},\boldsymbol{\beta}}(z)
\mathcal{L}(z;\boldsymbol{\nu}),
\end{equation}
which is returned to the outer-loop optimizer.
The empirical average violation for constraint $r$ at iteration $t$ is
\begin{equation}
\label{eq:avg_violation}
\bar v_r^{(t)}
=
\sum_z
\widehat{P}_{\boldsymbol{\gamma}^{(t)},\boldsymbol{\beta}^{(t)}}(z)
v_r(z),
\end{equation}
and the multipliers are updated by the projected gradient ascent rule
\begin{equation}
\label{eq:nu_update}
\nu_r^{(t+1)}
=
\min\left\{\nu_{\max},
\max\left(0,\nu_r^{(t)}+\eta\,\bar v_r^{(t)}\right)\right\},
\end{equation}
where $\eta>0$ is the step size and $\nu_{\max}$ caps the multiplier
magnitude to prevent numerical overflow. In the experiments,
$\nu_r^{(0)}=10$ for all active components, $\eta=2$, $\rho=1$, and
$\nu_{\max}=10^6$.
The multipliers affect only the classical sampled objective
in~\eqref{eq:lagrangian_objective} and do not introduce slack qubits or
additional phase-separator terms.

\begin{proposition}[Multiplier fixed-point characterization]
\label{prop:multiplier_fp}
Let $\nu_r^*\in[0,\nu_{\max}]$ be a fixed point of~\eqref{eq:nu_update}
and let $\bar v_r^*$ be the corresponding average violation.
Then either $\bar v_r^*=0$ or $\nu_r^*=\nu_{\max}$.
Moreover, if $\bar v_r^{(t)}\ge\delta>0$ for some window of iterations,
$\nu_r^{(t)}$ increases by at least $\eta\delta$ per step until
$\nu_{\max}$ is reached.
\end{proposition}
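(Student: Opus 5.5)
The plan is to analyze the projected update map
\[
T_r(\nu)=\min\{\nu_{\max},\max(0,\nu+\eta\bar v)\}
\]
directly. Throughout I use one elementary fact: every hinge violation $v_r(z)=\max\{0,\cdot\}$ in \eqref{eq:v_minsize}--\eqref{eq:v_load} is nonnegative, so the empirical average in \eqref{eq:avg_violation} is a convex combination of nonnegative values. Hence $\bar v_r^{(t)}\ge 0$ for every $t$. This lets me drop the inner projection onto $[0,\infty)$ whenever $\nu_r^{(t)}\ge 0$, since then $\nu_r^{(t)}+\eta\bar v_r^{(t)}\ge 0$.

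For the fixed-point claim, suppose $\nu_r^*=\min\{\nu_{\max},\nu_r^*+\eta\bar v_r^*\}$ with $\nu_r^*\in[0,\nu_{\max}]$. I split into two cases according to which argument of the outer minimum is active.

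\begin{itemize}
\item If the unclipped value is selected, then $\nu_r^*=\nu_r^*+\eta\bar v_r^*$. Since $\eta>0$, this forces $\bar v_r^*=0$.
\item Otherwise the minimum equals $\nu_{\max}$, so $\nu_r^*=\nu_{\max}$.
\end{itemize}

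Either way the dichotomy holds. I will also remark that the two cases can overlap: $\bar v_r^*=0$ with $\nu_r^*=\nu_{\max}$ is consistent, which is why the statement says ``either/or'' rather than an exclusive or. One interpretive point should be stated explicitly: ``fixed point'' means $\nu_r^*$ is invariant under the update with $\bar v_r^*$ held as the violation evaluated at the corresponding QAOA parameters. The proof uses nothing about how $\bar v_r$ depends on $\nu$.

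For the growth claim, take an iteration $t$ in the window with $\bar v_r^{(t)}\ge\delta$ and $\nu_r^{(t)}<\nu_{\max}$. By nonnegativity, $\nu_r^{(t+1)}=\min\{\nu_{\max},\nu_r^{(t)}+\eta\bar v_r^{(t)}\}$. If the minimum is the unclipped term, then $\nu_r^{(t+1)}-\nu_r^{(t)}=\eta\bar v_r^{(t)}\ge\eta\delta$. If instead it is $\nu_{\max}$, the cap has been reached. An induction over the window then gives
\[
\nu_r^{(t_0+m)}\ge\min\{\nu_{\max},\,\nu_r^{(t_0)}+m\eta\delta\}.
\]
It follows that $\nu_{\max}$ is reached within $\lceil(\nu_{\max}-\nu_r^{(t_0)})/(\eta\delta)\rceil$ steps if the window is that long.

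I expect no real obstacle. The only delicate point is the phrasing ``increases by at least $\eta\delta$ until $\nu_{\max}$ is reached.'' It must be read so that the final step may increase by less than $\eta\delta$ because of clipping. I would state this precisely through the displayed inequality rather than as a literal per-step increment.
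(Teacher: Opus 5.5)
Your proof is correct and follows the paper's own argument: a direct case analysis of the projected update that uses only the nonnegativity of the hinge violations (hence of $\bar v_r$), together with the same one-step bound for the growth claim. The only differences are cosmetic. The paper splits on whether $\nu_r^*$ lies in $(0,\nu_{\max})$, equals $0$, or equals $\nu_{\max}$, whereas you remove the lower projection up front and split on which branch of the outer minimum is active. Your explicit bound $\nu_r^{(t_0+m)}\ge\min\{\nu_{\max},\nu_r^{(t_0)}+m\eta\delta\}$ is a slightly sharper statement of the paper's ``whenever $\nu_r^{(t)}+\eta\delta\le\nu_{\max}$'' qualifier.
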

\begin{proof}
At a fixed point $\nu_r^{(t+1)}=\nu_r^{(t)}=\nu_r^*$.
If $\nu_r^*\in(0,\nu_{\max})$ the projection in~\eqref{eq:nu_update} is
inactive, so $\nu_r^*=\nu_r^*+\eta\bar v_r^*$, giving $\bar v_r^*=0$.
If $\nu_r^*=0$ then $\nu_r^*+\eta\bar v_r^*\le 0$; since
$v_r(z)\ge 0$ always, $\bar v_r^*=0$.
If $\nu_r^*=\nu_{\max}$ no constraint on $\bar v_r^*$ follows.
The monotone-increase claim follows because
$\nu_r^{(t+1)}\ge\max(0,\nu_r^{(t)}+\eta\delta)=\nu_r^{(t)}+\eta\delta$
whenever $\nu_r^{(t)}+\eta\delta\le\nu_{\max}$.
\end{proof}
\begin{remark}
Saturation at $\nu_{\max}$ signals that the QAOA circuit at depth $p$
and shot budget $S$ cannot drive $\bar v_r$ to zero.  Increasing $p$
or $S$ enlarges the reachable measurement distributions and can resolve
saturation, linking multiplier behavior directly to circuit-resource
sufficiency.
\end{remark}

\subsection{QAOA Circuit and Sampled Objective}
\label{sec:qaoa_implementation}

With $H_{\mathrm{prob}}$ fully assembled by~\eqref{eq:H_prob}, the QAOA circuit implements it as the phase separator of a $p$-layer variational ansatz over the $n_q$-qubit assignment register from~\eqref{eq:qubit_count_reduced}, as outlined in Fig.~\ref{fig:overall_diagram}(4). Each binary variable is mapped to a Pauli-$Z$ operator by
\begin{equation}
\label{eq:bin_qo_ising}
y_q=\frac{1-Z_q}{2},
\end{equation}
up to an additive constant. The circuit starts from
\begin{equation}
\label{eq:psi0}
|\psi_0\rangle
=
\bigotimes_{q=1}^{n_q}
\frac{|0\rangle+|1\rangle}{\sqrt{2}},
\end{equation}
uses the transverse-field mixer
\begin{equation}
\label{eq:H_B}
H_B=\sum_{q=1}^{n_q}X_q,
\end{equation}
and prepares the $p$-layer state
\begin{equation}
\label{eq:qaoa_state}
|\psi(\boldsymbol{\gamma},\boldsymbol{\beta})\rangle
=
\prod_{\ell=1}^{p}
e^{-i\beta_\ell H_B}
e^{-i\gamma_\ell H_{\mathrm{prob}}}
|\psi_0\rangle.
\end{equation}
In implementation, the linear and quadratic QUBO terms are compiled into the phase-separator circuit, while the mixer applies the standard transverse-field evolution across the assignment register.

For a candidate parameter vector $(\boldsymbol{\gamma},\boldsymbol{\beta})$, circuit measurements produce bitstrings $z$ with empirical probabilities $\widehat{P}_{\boldsymbol{\gamma},\boldsymbol{\beta}}(z)$. The modular non-Lagrangian objective is
\begin{equation}
\label{eq:modular_sample_objective}
\widehat{F}(\boldsymbol{\gamma},\boldsymbol{\beta})
=
\sum_z
\widehat{P}_{\boldsymbol{\gamma},\boldsymbol{\beta}}(z)
\left[
c_f(z)+P_{\mathrm{post}}(z)
\right],
\end{equation}
where $c_f(z)$ is the decoded cut weight and $P_{\mathrm{post}}(z)$ collects the active post-measurement penalties for reduced-encoding validity, DFS connectivity, minimum island size, generator presence, load presence, and coherency.

In the Lagrangian configuration,
$\widehat{\mathcal{L}}(\boldsymbol{\gamma},\boldsymbol{\beta};\boldsymbol{\nu})$
from~\eqref{eq:lagrangian_objective} replaces~\eqref{eq:modular_sample_objective}
as the outer-loop objective, with multipliers updated between iterations
according to~\eqref{eq:nu_update}.

After the Lagrangian cost is evaluated, the measured bitstrings are passed to the classical post-processing repair of~\cite{jiang2026regrid}, corresponding to Fig.~\ref{fig:overall_diagram}(6). Each raw bitstring $z$ is mapped to a repaired partition
\begin{equation}
\label{eq:repair_map}
\tilde{z} = \mathcal{P}(z),
\end{equation}
where $\mathcal{P}$ applies greedy single-bus reassignment descent on the QUBO-side energy $Q(z)$ followed by a DFS connectivity repair pass. The repair does not alter the Lagrangian objective in~\eqref{eq:lagrangian_objective}. Among all repaired candidates, the final solution is selected according to the lexicographic ranking
\begin{equation}
\label{eq:final_rank}
\operatorname{rank}(\tilde{z})
=
\Bigl(
\mathbb{I}[P_{\mathrm{post}}(\tilde{z})>0],\;
c_f(\tilde{z}),\;
{-}\widehat{P}_{\boldsymbol{\gamma},\boldsymbol{\beta}}(z)
\Bigr),
\end{equation}
The selected repaired solution is obtained directly from this lexicographic ranking.

\section{Resource and Complexity Analysis}
\label{sec:complexity}

In this section, the computational complexity of the proposed hybrid QAOA
framework is evaluated against full one-hot~\cite{jiang2026regrid} and
non-Lagrangian baselines.
The main comparison is between the two proposed Lagrangian formulations,
$(\mathcal{E}_1^{\mathrm{Lag}},\mathcal{E}_2^{\mathrm{Lag}})$, and three
baselines $(\mathcal{E}_{\mathrm{OH}}^{\mathrm{NL}},
\mathcal{E}_1^{\mathrm{NL}},\mathcal{E}_2^{\mathrm{NL}})$, where
OH denotes the full one-hot encoding and NL denotes direct non-Lagrangian
constraint embedding.  Table~\ref{tab:complexity_summary} summarizes the
qubit counts and computational complexity of these five cases.

\subsection{Notation}
\label{sec:complexity_notation}

Let $T$ denote the number of outer optimization iterations, $S$ the number
of measurement shots per iteration, and $\Delta_G$ the maximum vertex
degree of the working graph $G$.  Let $M_i$ denote the number of
nonconstant QUBO terms in the phase-separator Hamiltonian for formulation
$i$, and let $\Delta_{\mathrm{QUBO}}$ denote the maximum degree of the
QUBO interaction graph.  The per-island slack bit-widths for direct
non-Lagrangian formulations are
\begin{align}
  b_M &= \bigl\lceil\log_2(N-N_{\min}+1)\bigr\rceil, \notag\\
  b_G &= \bigl\lceil\log_2 G\bigr\rceil, \qquad
  b_L = \bigl\lceil\log_2 L\bigr\rceil,
  \label{eq:slack_widths}
\end{align}
where $G=|V_G|$ and $L=|V_L|$.  The total non-Lagrangian slack overhead is
$\Delta_{\mathrm{NL}}=K(b_M+b_G+b_L)$ for working-graph encodings and
$\Delta_{\mathrm{NL}}^0=K(b_M^0+b_G^0+b_L^0)$ for the physical-graph
baseline, where superscript~$0$ denotes physical-graph quantities.
The free-bus generator and load counts in $\mathcal{E}_2$ are
$G_F=|V_G\cap F|$ and $L_F=|V_L\cap F|$.
Unless explicitly stated otherwise, the bounds below are parameterized by
$E$ and $\Delta_G$ and do not assume sparsity.  The common power-network
specialization $E=O(N)$ is invoked only when simplifying edge-dependent
terms to linear-in-$N$ expressions; similarly, constant-depth statements
use the bounded-degree case $\Delta_G=O(1)$.

\begin{table}[!t]
\centering
\caption{Resource summary for the QAOA islanding formulations.}
\label{tab:complexity_summary}
\scriptsize
\setlength{\tabcolsep}{2.5pt}
\renewcommand{\arraystretch}{1.1}
\resizebox{0.78\columnwidth}{!}{%
\begin{tabular}{@{}clcccc@{}}
\toprule
\textbf{\#} & \textbf{Formulation} &
  \makecell{\textbf{Assign.}\\\textbf{qubits}} &
  \makecell{\textbf{Slack}\\\textbf{qubits}} &
  \makecell{\textbf{Total}\\\textbf{qubits}} &
  \makecell{\textbf{$M_i$}\\\textbf{(fixed $K$)}} \\
\midrule
(i)   & Full one-hot, w/o Lag.
      & $KN_0$   & $\Delta_{\mathrm{NL}}^0$
      & $KN_0+\Delta_{\mathrm{NL}}^0$ & $O(E_0+N_0^2)$ \\
(ii)  & $\mathcal{E}_1$, w/o Lag.
      & $(K-1)N$ & $\Delta_{\mathrm{NL}}$
      & $(K-1)N+\Delta_{\mathrm{NL}}$ & $O(E+N^2)$ \\
(iii) & $\mathcal{E}_2$, w/o Lag.
      & $(K-1)F$ & $\Delta_{\mathrm{NL}}$
      & $(K-1)F+\Delta_{\mathrm{NL}}$ & $O(E+F^2)$ \\
(iv)  & $\mathcal{E}_1$+Lag.
      & $(K-1)N$ & $0$
      & $(K-1)N$ & $O(E+N+A^2)$ \\
(v)   & $\mathcal{E}_2$+Lag.
      & $(K-1)F$ & $0$
      & $(K-1)F$ & $O(E+F+N)$ \\
\bottomrule
\end{tabular}
}
\end{table}

\begin{proposition}[Fixed-$K$ QUBO term complexity]
\label{prop:complexity}
Let $M_i$ denote the number of nonconstant QUBO terms in the
phase-separator Hamiltonian for formulation $i\in\{(i),\ldots,(v)\}$.
For fixed $K\geq 2$ and $A,G,L\leq N$,
\begin{alignat}{2}
  M_{\mathrm{full}}              &= O(E_0 + N_0^2), \label{eq:Mfull_fixedK}\\
  M_{\mathcal{E}_1,\mathrm{NL}} &= O(E   + N^2),   \label{eq:Mred_NL_fixedK}\\
  M_{\mathcal{E}_2,\mathrm{NL}} &= O(E   + F^2),   \label{eq:Msf_NL_fixedK}\\
  M_{\mathcal{E}_1,\mathrm{Lag}}&= O(E + N + A^2), \label{eq:Mred_Lag_fixedK}\\
  M_{\mathcal{E}_2,\mathrm{Lag}}&= O(E + F + N).   \label{eq:Msf_Lag_fixedK}
\end{alignat}
\end{proposition}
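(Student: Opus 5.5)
The plan is to bound $M_i$ for each formulation by decomposing its phase-separator Hamiltonian into components and counting the distinct nonconstant monomials (linear terms $y_q$ and quadratic terms $y_qy_{q'}$) each component contributes. With $K$ fixed, every factor depending only on $K$ is $O(1)$, and the total is at most the sum over components. Throughout I would use the affine structure of $\widehat{Y}_{i,k}$ from~\eqref{eq:qaoa_decoded_indicator}. Each indicator involves at most $K-1$ variables of a single bus, so a product $\widehat{Y}_{i,k}\widehat{Y}_{j,k}$ expands into at most $(K-1)^2=O(1)$ quadratic monomials supported on the bus pair $\{i,j\}$, plus linear terms. The key counting device is therefore to count interacting \emph{bus pairs} rather than individual monomials.

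The first step handles the cut and surrogate terms. $H_f$ in~\eqref{eq:qubo_cut_reduced} couples only adjacent buses, giving $O(E)$ terms. Under $\mathcal{E}_2$, anchor--free edges become linear by~\eqref{eq:sf_anchor_free_term}, so the count is still $O(E)$. $H_c$ in~\eqref{eq:H_iso} is likewise edge-supported plus per-bus linear terms, giving $O(E+N)$. $H_U$ is per-bus with $\binom{K-1}{2}=O(1)$ pairs, giving $O(N)$ or $O(F)$.

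The second step treats the coherency term $H_G$ in~\eqref{eq:qubo_coh}, which appears only under $\mathcal{E}_1$. Its two sums run over pairs of anchor/generator representatives, contributing $O(A^2)$ bus pairs; by~\eqref{eq:qubo_coh_e2} it vanishes under $\mathcal{E}_2$. Combining the first two steps yields~\eqref{eq:Mred_Lag_fixedK} and~\eqref{eq:Msf_Lag_fixedK}. For $\mathcal{E}_2$ the bound $O(E+F+N)$ absorbs the $N$ from $H_c$'s sum over all $i\in V$: anchor summands reduce to linear terms in neighboring free variables, which are again bounded by $O(E)$.

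The third step, which I expect to be the main obstacle because the paper specifies these penalties only implicitly, handles the non-Lagrangian slack penalties. The direct embedding of~\eqref{con:min_size} and~\eqref{con:gen_load} takes the form $\bigl(N_{\min}-n_k+\sum_b 2^b s_b\bigr)^2$, and the analogous form applies to $g_k$ and $\ell_k$. Squaring $n_k=\sum_i\widehat{Y}_{i,k}$ couples every pair of explicit buses, yielding $\Theta(|\mathcal{B}_\xi|^2)$ quadratic terms. This is $O(N^2)$ for $\mathcal{E}_1$, $O(F^2)$ for $\mathcal{E}_2$ (anchor contributions become constants), and $O(N_0^2)$ on the physical graph. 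The cross terms with the $O(\log N)$ slack bits add $O(N\log N)$, and the slack--slack terms add $O(\log^2 N)$, both dominated. The generator and load versions are bounded by $G^2,L^2\le N^2$, which is where the hypothesis $A,G,L\le N$ enters. For $\mathcal{E}_2$ I would additionally check that the $O(A^2)$ coherency term is absent, that the $N$-sized linear pieces are absorbed, and that $A^2\le N^2$ absorbs $H_G$ in case (ii). For the full one-hot baseline (i) I would add the one-hot equality penalty, contributing $O(K^2N_0)$ terms, and its coherency term, contributing $O(N_0^2)$; both are absorbed into~\eqref{eq:Mfull_fixedK}. Summing the components then gives all five bounds.
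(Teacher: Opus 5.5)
Your proposal is correct and follows essentially the same route as the paper: a component-by-component count of $H_f$, $H_U$, $H_c$ and $H_G$ (the last vanishing under $\mathcal{E}_2$), plus, in the non-Lagrangian cases, the squared slack blocks. Their support of $O(N_0)$, $O(N)$ or $O(F)$ assignment variables plus slack bits gives the dominant quadratic term once $K$ is fixed. One refinement, which the paper's step $O(((K-1)F+b_M)^2)=O(F^2)$ also skips: under $\mathcal{E}_2$ the slack cross and slack--slack terms contribute $O(F\log N+\log^2 N)$, which is not $O(F^2)$ when $F$ is small relative to $\log N$, so instead of calling them dominated you should absorb them into $O(E+F^2)$ using $F\log N\le\tfrac12(F^2+\log^2 N)$ and $E\ge N-K$, the latter holding because every feasible island induces a connected subgraph.
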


The following subsections derive these bounds for each formulation and detail
the corresponding circuit and algorithmic costs.

\subsection{Formulation~(i): Full One-Hot Encoding Without Lagrangian}
\label{sec:complexity_fi}

\paragraph{Circuit width.}
Under the full one-hot encoding every island $k\in\{1,\ldots,K\}$ is
represented by $N_0$ binary variables drawn from the physical graph.  Each
feasibility constraint requires a binary slack register, contributing
$\Delta_{\mathrm{NL}}^0=K(b_M^0+b_G^0+b_L^0)$ additional qubits.  The
total circuit width is
\[
  n_q^{(i)} = KN_0 + \Delta_{\mathrm{NL}}^0.
\]

\paragraph{Hamiltonian and gate complexity.}
Under the full one-hot representation, each island indicator enters the
Hamiltonian directly.
The phase-separator Hamiltonian consists of the cut term $H_f$, the one-hot
uniqueness penalty $H_U$, the coherency penalty $H_G$, and the direct slack
penalty blocks.  These terms contribute $O(E_0K)$, $O(N_0K^2)$, and
$O(KA_0^2)$ terms for $H_f$, $H_U$, and $H_G$, respectively.  For the
min-size constraint on island $k$, the squared slack penalty has support over
$N_0+b_M^0$ binary variables and therefore contributes
$O((N_0+b_M^0)^2)$ QUBO terms.  Summing over all $K$
islands and all three constraint types gives a total slack overhead of
$O(K(N_0+b_M^0)^2)=O(N_0^2)$ for fixed~$K$.  This dominates all
combinatorial contributions, giving
\[
  M_{\mathrm{full}} = O(E_0 + N_0^2).
\]
Each QUBO term compiles to $O(1)$ quantum gates, so the gate count per
QAOA layer is $O(E_0+N_0^2)$.

\paragraph{Circuit depth.}
The slack penalty for each island couples every bus variable to every other
bus variable and to the slack bits within the same block, making every
variable a neighbour of $O(N_0 + b_M^0) = O(N_0)$ others in the QUBO
interaction graph.  The maximum degree is therefore
$\Delta_{\mathrm{QUBO}}^{(i)} = O(N_0)$, giving a circuit depth of
$O(N_0)$ per layer and $O(p \cdot N_0)$ for the full $p$-layer circuit.

\paragraph{Post-processing complexity.}
The adopted classical repair step evaluates the active assignment moves in
each round.
Since $n_q^{(i)}=O(N_0)$ for fixed~$K$, the post-processing cost per sample
is $O\bigl((n_q^{(i)})^2\bigr)=O(N_0^2)$.

\paragraph{Total algorithm complexity.}
The dominant cost across $T$ outer iterations, $S$ shots, and $p$ QAOA
layers is
\[
  O\!\bigl(T \cdot S \cdot (p \cdot N_0^2 + N_0^2)\bigr)
  = O(T \cdot S \cdot p \cdot N_0^2),
\]
with circuit execution time proportional to $O(p \cdot N_0)$.

\subsection{Formulation~(ii): Reduced Encoding $\mathcal{E}_1$ Without Lagrangian}
\label{sec:complexity_fii}
\label{sec:inferred_indicator}

\paragraph{Circuit width.}
In the reduced encoding $\mathcal{E}_1$, the assignment variables are defined
on a working graph with $N\leq N_0$ buses.  Each bus is represented by
$K-1$ explicit binary variables, while membership in island $K$ is given by the
inferred indicator
\begin{equation}
  Y_{i,K} = 1 - \sum_{\ell=1}^{K-1} y_{i,\ell}.
  \label{eq:inferred_ind}
\end{equation}
The assignment register shrinks to $(K-1)N$ qubits.  Adding the slack
overhead $\Delta_{\mathrm{NL}}=K(b_M+b_G+b_L)$ gives
\[
  n_q^{(ii)} = (K-1)N + \Delta_{\mathrm{NL}}.
\]

\paragraph{Hamiltonian and gate complexity.}
Because $Y_{i,K}$ is a linear form over $K-1$ variables, any bilinear
product involving island $K$ expands as
\begin{equation}
  Y_{i,K}Y_{j,K}
  = 1 - \sum_\ell y_{i,\ell} - \sum_m y_{j,m}
    + \sum_{\ell,m} y_{i,\ell}y_{j,m},
  \label{eq:indicator_expansion}
\end{equation}
introducing $(K-1)^2$ quadratic monomials per edge for the inferred-island
cut term alone.  For the min-size constraint associated with the inferred
island $k=K$, \eqref{eq:inferred_ind} expands the support to all
$(K-1)N$ explicit variables.  The corresponding squared-penalty block
therefore contributes
\begin{equation}
  O\!\bigl(((K-1)N+b_M)^2\bigr)
  \label{eq:slack_split_red}
\end{equation}
terms.  For fixed~$K$ this is $O(N^2)$ and dominates all other
contributions, giving $M_{\mathcal{E}_1,\mathrm{NL}}=O(E+N^2)$ and a
per-layer gate count of $O(E+N^2)$.

\paragraph{Circuit depth.}
The inferred-island slack block couples all $(K-1)N + b_M$ variables to one
another, so each variable has $O((K-1)N + b_M) = O(N)$ neighbours in the
QUBO interaction graph for fixed~$K$.  The maximum degree is
$\Delta_{\mathrm{QUBO}}^{(ii)} = O(N)$, giving a circuit depth of $O(N)$
per layer and $O(p \cdot N)$ for the full circuit.  This is the same
asymptotic depth as Formulation~(i) with $N_0$ replaced by $N \leq N_0$.

\paragraph{Post-processing complexity.}
With $n_q^{(ii)}=O(N)$ for fixed~$K$, the post-processing cost per sample
is $O(N^2)$.

\paragraph{Total algorithm complexity.}
\[
  O\!\bigl(T \cdot S \cdot (p \cdot N^2 + N^2)\bigr)
  = O(T \cdot S \cdot p \cdot N^2),
\]
with circuit execution time proportional to $O(p \cdot N)$.
The qubit reduction from $KN_0$ to $(K-1)N+\Delta_{\mathrm{NL}}$ lowers the
circuit width, but the inferred-island slack block keeps both gate-count
and depth scaling at $O(N^2)$ and $O(N)$, respectively.

\subsection{Formulation~(iii): Symmetry-Fixed Encoding $\mathcal{E}_2$ Without Lagrangian}
\label{sec:complexity_fiii}

\paragraph{Circuit width.}
In the symmetry-fixed encoding $\mathcal{E}_2$, the canonical representative
sets the island label of each anchor bus $g\in\mathcal{A}_c$ as
$Y_{g,k}^{\mathcal{E}_2}=\mathbf{1}[k=c]$
(Proposition~\ref{prop:canonical_labeling}).  The assignment register is
therefore defined only over the $F=N-A$ free buses, and the circuit width becomes
\[
  n_q^{(iii)} = (K-1)F + \Delta_{\mathrm{NL}}.
\]

\paragraph{Hamiltonian and gate complexity.}
Constant anchor labels eliminate $H_G$ structurally and reduce
the one-hot validity penalty to $O(FK^2)$ terms involving only free buses.
The inferred-island slack block mirrors Formulation~(ii) with $N$ replaced
by $F$, contributing $O(((K-1)F+b_M)^2)=O(F^2)$ terms for fixed~$K$.
This dominates, giving $M_{\mathcal{E}_2,\mathrm{NL}}=O(E+F^2)$ and a
per-layer gate count of $O(E+F^2)$.

\paragraph{Circuit depth.}
The inferred-island slack block couples each free-bus variable to at most
$O(F)$ other variables, so $\Delta_{\mathrm{QUBO}}^{(iii)}=O(F)$.  The
corresponding circuit depth is $O(F)$ per layer and $O(p \cdot F)$ for the
full circuit.  Relative to Formulation~(ii), the depth scaling is reduced
from $O(N)$ to $O(F)=O(N-A)$.

\paragraph{Post-processing complexity.}
With $n_q^{(iii)}=O(F)$ for fixed~$K$, the post-processing cost per sample
is $O(F^2)$.  Since $F=N-A\leq N$, this is never worse than
Formulation~(ii) and can be substantially cheaper when many anchors are
present.

\paragraph{Total algorithm complexity.}
\[
  O\!\bigl(T \cdot S \cdot (p \cdot F^2 + F^2)\bigr)
  = O(T \cdot S \cdot p \cdot F^2),
\]
with circuit execution time proportional to $O(p \cdot F)$.
Both the quantum and post-processing costs scale in $F^2$ rather than
$N^2$, providing a uniform reduction across all phases of the algorithm.

\subsection{Formulation~(iv): Reduced Encoding $\mathcal{E}_1$ With Lagrangian}
\label{sec:complexity_fiv}

\paragraph{Circuit width.}
Replacing slack-variable blocks with the classical Lagrangian update
eliminates all slack qubits, reducing the circuit width to
\[
  n_q^{(iv)} = (K-1)N.
\]

\paragraph{Hamiltonian and gate complexity.}
The phase-separator Hamiltonian is determined by the cut term $H_f$, the
one-hot uniqueness penalty $H_U$, the connectivity surrogate $H_c$, and the
coherency penalty $H_G$.  Their term counts scale as $O(EK^2)$, $O(NK^2)$,
$O(EK^2+NK)$, and $O(K^2A^2)$, respectively.  For fixed~$K$, the resulting
Hamiltonian size is
\[
  M_{\mathcal{E}_1,\mathrm{Lag}} = O(E + N + A^2).
\]
The per-layer gate count is therefore $O(E+N+A^2)$, a qualitative reduction
from the $O(N^2)$ gate count of Formulation~(ii) whenever $A$ grows
sublinearly in $N$.

\paragraph{Circuit depth.}
With no slack blocks, the densest part of the QUBO interaction graph comes
from $H_G$, which connects every pair of anchor-bus variables
across all $A$ anchor buses.  Each anchor variable therefore has
$O(A \cdot (K-1)) = O(A)$ neighbours for fixed~$K$.  The cut and
connectivity terms $H_f$, $H_c$ connect bus $i$'s variables
only to its power-graph neighbours, contributing $O(\Delta_G)$ per
variable.  The maximum QUBO interaction-graph degree is thus
$\Delta_{\mathrm{QUBO}}^{(iv)} = O(\Delta_G + A)$, giving a circuit depth
of $O(\Delta_G + A)$ per layer and $O(p \cdot (\Delta_G + A))$ for the
full circuit, which is a substantial reduction from the $O(p \cdot N)$ depth of
Formulation~(ii).

\paragraph{Post-processing complexity.}
Although the Hamiltonian is cheaper, the circuit width remains $n_q^{(iv)}
=(K-1)N=O(N)$, so the greedy post-processing cost per sample is still
$O(N^2)$.  The post-processing bottleneck therefore dominates the quantum
cost when $p \cdot A^2 < N^2$, i.e., when $p < (N/A)^2$.

\paragraph{Lagrangian multiplier update.}
Each outer iteration evaluates $3K$ aggregate violations
(min-size, generator presence, load presence across all islands) and
performs one projected-ascent update per constraint.  This classical
multiplier-update overhead is $O(K\cdot N)$ per iteration.

\paragraph{Total algorithm complexity.}
\[
  O\!\Bigl(T \cdot \bigl[S \cdot (p \cdot (E+N+A^2) + N^2) + KN\bigr]\Bigr),
\]
with circuit execution time proportional to $O(p \cdot (\Delta_G + A))$.
For sparse graphs ($E=O(N)$) and small anchor sets ($A=o(N)$), the dominant
term is $O(T \cdot S \cdot N^2)$ from post-processing, not the quantum
circuit.

\subsection{Formulation~(v): Symmetry-Fixed Encoding $\mathcal{E}_2$ With Lagrangian}
\label{sec:complexity_fv}

\paragraph{Circuit width.}
Combining symmetry-fixing with the Lagrangian treatment eliminates both
slack qubits and anchor-variable overhead, giving the minimum circuit width
among all five formulations:
\[
  n_q^{(v)} = (K-1)F.
\]

\paragraph{Hamiltonian and gate complexity.}
Constant anchor labels set $H_G=0$ structurally, and the
generator-presence constraint is satisfied by
assumption~\eqref{eq:anchor_contains_generator} and therefore also absent
from the QUBO.  The active phase-separator Hamiltonian is
\[
  H_f + \lambda_U H_U + \alpha_c H_c,
\]
contributing $O(EK^2)$, $O(FK^2)$, and $O(EK^2+NK)$ terms, respectively.
For fixed~$K$ these collapse to $O(E)$, $O(F)$, and $O(E+N)$, yielding
\[
  M_{\mathcal{E}_2,\mathrm{Lag}} = O(E + F + N).
\]
On a sparse working graph with $E=O(N)$, the per-layer gate count is
$O(N)$, which is linear in the problem size.

\paragraph{Circuit depth.}
With $H_G$ absent and variables restricted to free buses, the
only inter-variable connections in the QUBO interaction graph come from
$H_f$ and $H_c$, which link bus $i$'s variables only to those
of its power-graph neighbours.  Each variable therefore has at most
$O(\Delta_G \cdot (K-1)) = O(\Delta_G)$ neighbours for fixed~$K$, giving
\[
  \Delta_{\mathrm{QUBO}}^{(v)} = O(\Delta_G).
\]
The circuit depth is $O(\Delta_G)$ per layer and $O(p \cdot \Delta_G)$ for
the full circuit.  Since $\Delta_G$ is a small constant for power networks,
this is effectively $O(p)$, independent of $N$, $F$, or $A$, and is the
minimum depth among all five formulations.  This directly explains the
dramatic circuit-depth reductions observed in the benchmark
(Table~\ref{tab:sf_bench}), e.g.\ depth 19 vs.\ 67 on IonQ for the
IEEE~9-bus system.

\paragraph{Post-processing complexity.}
With $n_q^{(v)}=O(F)$, the post-processing cost per sample is $O(F^2)$,
strictly smaller than the $O(N^2)$ cost of Formulation~(iv) whenever
$A\geq 1$.

\paragraph{Lagrangian multiplier update.}
The same projected-ascent logic as Formulation~(iv) applies, costing
$O(KN)$ per outer iteration.  Generator-presence multipliers are omitted
because the fixed anchor assigned to each island already contains a generator
by~\eqref{eq:anchor_contains_generator}.  Minimum-size and load-presence
violations are still evaluated after decoding, since the remaining free-bus
assignments determine whether each island has enough buses and at least one
load bus.

\paragraph{Total algorithm complexity.}
\[
  O\!\Bigl(T \cdot \bigl[S \cdot (p \cdot (E+F+N) + F^2) + KN\bigr]\Bigr),
\]
with circuit execution time proportional to $O(p \cdot \Delta_G)$.
For sparse graphs ($E=O(N)$) this becomes
$O\!\bigl(T\cdot [S\cdot(pN+F^2)+KN]\bigr)$, and the shot-dependent term
is $O(T \cdot S \cdot (pN + F^2))$ whenever the multiplier-update cost
$O(TKN)$ is asymptotically lower order.
Thus, among the five formulations considered, Formulation~(v) gives the
most favorable asymptotic resource profile, reducing the assignment register,
the phase-separator size, the circuit depth, and the post-processing cost.

\subsection{Concrete Qubit Requirements}
\label{sec:concrete_qubits}

The asymptotic reductions established above manifest as substantial
qubit savings on realistic benchmark systems.
Table~\ref{tab:concrete_qubits} reports the total qubit count required
by each formulation for eleven IEEE bus systems.  Values for the eight
systems up to the 89-bus case correspond to the executed benchmark
records in Tables~\ref{tab:lagrangian_bench} and~\ref{tab:sf_bench},
while the three larger systems are obtained by evaluating the same
closed-form expressions. Figure~\ref{fig:qubit_scaling_comparison}
visualizes these counts together with the 156-qubit hardware limit,
linking the executed cases to the larger-system resource projections.

Within each row, the qubit count decreases
monotonically from column~(i) to column~(v), reflecting the successive
elimination of $N_0-(K-1)N$ assignment qubits through graph contraction
and reduced encoding, of $\Delta_{\mathrm{NL}}$ slack qubits through the
Lagrangian treatment, and of the fixed-anchor assignment variables
through symmetry fixing.  For the IEEE~73-bus system, the proposed
$\mathcal{E}_2$+Lagrangian formulation reduces the qubit count from 267
under the full one-hot baseline to 54, a nearly fivefold reduction, and
for the IEEE~300-bus projection, from 1296 to 693.

\begin{table}[!t]
\centering
\caption{Total qubit requirements for the islanding formulations across
IEEE bus systems.}
\label{tab:concrete_qubits}
\footnotesize
\setlength{\tabcolsep}{3.5pt}
\renewcommand{\arraystretch}{0.95}
\begin{tabular}{@{}lrr
                r  
                r  
                r  
                r  
                r  
                @{}}
\toprule
\textbf{System} & $K$ & $N$ &
  \makecell{(i)\\Full one-hot\\w/o Lag.} &
  \makecell{(ii)\\$\mathcal{E}_1$\\w/o Lag.} &
  \makecell{(iv)\\$\mathcal{E}_1$\\+Lag.} &
  \makecell{(iii)\\$\mathcal{E}_2$\\w/o Lag.} &
  \makecell{(v)\\$\mathcal{E}_2$\\+Lag.} \\
\midrule
IEEE 9-bus    & 2 &   9 &   32 &  23 &   9 &  20 &   6 \\
IEEE 14-bus   & 2 &  14 &   43 &  29 &  14 &  29 &   9 \\
IEEE 24-bus   & 3 &  20 &  108 &  76 &  40 &  62 &  26 \\
IEEE 30-bus   & 2 &  29 &   86 &  55 &  29 &  50 &  24 \\
IEEE 39-bus   & 3 &  39 &  156 & 123 &  78 & 103 &  58 \\
IEEE 57-bus   & 2 &  52 &  140 &  78 &  52 &  76 &  50 \\
IEEE 73-bus   & 3 &  54 &  267 & 156 & 108 & 102 &  54 \\
IEEE 89-bus   & 3 &  89 &  321 & 232 & 178 & 208 & 154 \\
IEEE 118-bus  & 4 &  80 &  548 & 308 & 240 & 260 & 192 \\
IEEE 145-bus  & 5 & 100 &  830 & 485 & 400 & 465 & 380 \\
IEEE 300-bus  & 4 & 292 & 1296 & 968 & 876 & 785 & 693 \\
\bottomrule
\end{tabular}
\end{table}

\begin{figure*}[!t]
\centering
\includegraphics[width=0.84\textwidth]{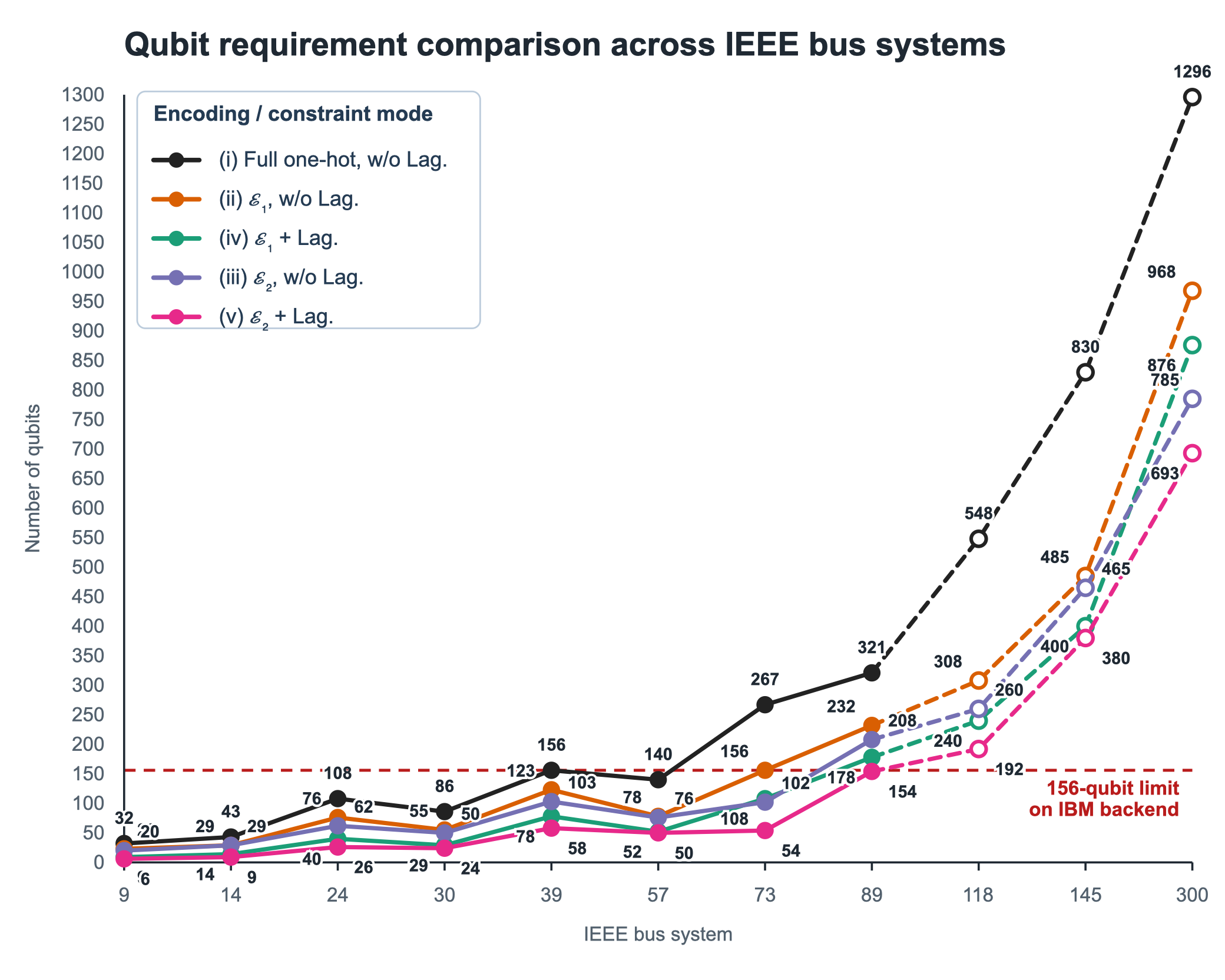}
\caption[Qubit requirements across IEEE bus systems]{Qubit requirements of the five islanding formulations
across IEEE bus systems. Black represents full one-hot QAOA. Orange and green represent
encoding~$\mathcal{E}_1$ without and with the Lagrangian treatment, respectively. Purple and magenta
represent encoding~$\mathcal{E}_2$ without and with the Lagrangian treatment, respectively. Filled markers
and solid lines indicate executed cases through the IEEE~89-bus system. Open markers and dashed lines
indicate formula-based projections for the IEEE~118-, 145-, and 300-bus systems. The red dashed line marks
the 156-qubit capacity of the available IBM backend.}
\label{fig:qubit_scaling_comparison}
\end{figure*}

Figure~\ref{fig:qubit_scaling_comparison} quantifies the cumulative qubit
reduction produced by the proposed formulation.  Before the binary variables
are introduced, the buses in each connected coherent generator group are
merged into a single super-bus because these buses must remain in the same
island.  This graph-reduction step decreases the number of buses represented in
the optimization model from $N_0$ to $N$ without changing the admissible
physical partitions.  Encoding~$\mathcal{E}_1$ then reduces the assignment
register further by representing only $K-1$ of the $K$ island labels.  The
Lagrangian treatment eliminates the slack register, while symmetry fixing in
encoding~$\mathcal{E}_2$ removes the assignment variables associated with the
anchor buses.  The ordering of the solid curves across the evaluated systems
confirms the incremental contribution of these reductions.  For the
IEEE~89-bus system, formulation~(v) requires 154
qubits, whereas formulations~(i)--(iv) require 321, 232, 208, and 178 qubits,
respectively.  Formulation~(v) is therefore the only formulation that remains
below the 156-qubit hardware limit for this system.  In contrast, the full
one-hot baseline reaches this limit at IEEE~39-bus.

The open markers and dashed colored segments beyond IEEE~89-bus represent
resource estimates obtained from the closed-form qubit-count expressions.  All
five projected counts exceed 156 qubits from IEEE~118-bus onward.  The proposed
reductions therefore extend the range supported by a fixed qubit budget,
although larger systems still require greater hardware capacity.
Formulation~(v) retains the lowest projected requirement, with 192, 380, and
693 qubits for the IEEE~118-, 145-, and 300-bus systems, respectively.  The
corresponding values for formulation~(i) are 548, 830, and 1296 qubits, yielding
absolute reductions of 356, 450, and 603 qubits.  The difference between
formulations~(iv) and~(v) isolates the contribution of symmetry fixing.  This
reduction is 20 qubits for IEEE~145-bus with five anchor buses and 183 qubits
for IEEE~300-bus with 61 anchor buses.  The slope and spacing of the dashed
curves are therefore governed by the number of islands, the contracted
working-graph size, and the number of fixed anchor buses in addition to the
nominal system size.  Collectively, these results establish that the combined
formulation provides the lowest qubit requirement among the five alternatives
and increases the system scale accessible under a finite hardware budget.

\section{Numerical Example}
\label{sec:numerical}

\subsection{Experimental Setup and Test Systems}

This section evaluates the proposed QAOA-based islanding framework on IEEE benchmark systems ranging
from 9 to 89 buses. These cases span
different network sizes and islanding structures, including both two-island and three-island settings, and
therefore provide a representative test bed for assessing the framework under
increasing combinatorial complexity. For each case, the coherent generator groups determine the target number
of islands~$K$. Each benchmark case is evaluated under both assignment encodings introduced in
Section~\ref{formulation}, the reduced encoding~$\mathcal{E}_1$ (Section~\ref{sec:reduced_encoding}) and the
encoding~$\mathcal{E}_2$ (Section~\ref{sec:sf_encoding}), and under each encoding the proposed
slack-free Lagrangian configuration is compared against a non-Lagrangian custom QUBO in which the same
feasibility terms are embedded directly as penalty components. The evaluation focuses on cut quality,
feasibility behavior, and quantum resource requirements across both encodings, and
Section~\ref{sec:landscape_trainability} further examines how the two encodings and the Lagrangian treatment
shape the sampled QAOA optimization landscape.

The experiments are implemented on the IBM Marrakesh backend from IBM Quantum and on the Rigetti, IQM, AQT, and
IonQ backends available through AWS Braket, as recorded in the benchmark outputs. For quantum implementation,
Table~\ref{tab:exp_setup} summarizes the QAOA configuration used for each
system, including the target island count~$K$, QAOA depth parameter~$p$, shot count~$S$, and classical
optimizer iteration budget~$I_{\max}$.

\begin{table}[!tbp]
\centering
\scriptsize
\setlength{\tabcolsep}{3pt}
\caption{QAOA benchmark settings.}
\label{tab:exp_setup}
\begin{tabular}{@{}lcccc@{}}
\toprule
\textbf{Case} & $K$ & $p$ & $S$ & $I_{\max}$ \\
\midrule
IEEE 9-bus  & 2 & 1 & 100  & 2 \\
IEEE 14-bus & 2 & 1 & 100  & 2 \\
IEEE 24-bus & 3 & 1 & 1000 & 2 \\
IEEE 30-bus & 2 & 1 & 300  & 2 \\
IEEE 39-bus & 3 & 2 & 2000 & 2 \\
IEEE 57-bus & 2 & 2 & 1500 & 2 \\
IEEE 73-bus & 3 & 3 & 2000 & 2 \\
\bottomrule
\end{tabular}
\end{table}

The total QUBO qubits required by each Lagrangian and non-Lagrangian configuration are reported in
Table~\ref{tab:concrete_qubits}, and the corresponding backend-level circuit depth, cut value, and
feasibility rate are reported in Table~\ref{tab:lagrangian_bench}.

\begin{table*}[!tbp]
\centering
\scriptsize
\setlength{\tabcolsep}{0pt}
\renewcommand{\arraystretch}{1.08}
\caption{Islanding solutions obtained by the proposed QAOA framework.}
\label{tab:islanding_partitions}
\begin{tabular}{@{}>{\raggedright\arraybackslash}p{0.97\textwidth}@{}}
\toprule
\textbf{IEEE cases and islanding solutions} \\
\midrule
\textbf{9-bus:} Island 1: 1, 4--5; Island 2: 2--3, 6--9 \\
\addlinespace[2pt]
\textbf{14-bus:} Island 1: 1--5; Island 2: 6--14 \\
\addlinespace[2pt]
\textbf{24-bus:} Island 1: 6, 10--14, 20, 23; Island 2: 3, 15--19, 21--22, 24; Island 3: 1--2, 4--5, 7--9 \\
\addlinespace[2pt]
\textbf{30-bus:} Island 1: 9--30; Island 2: 1--8 \\
\addlinespace[2pt]
\textbf{39-bus:} Island 1: 1--3, 25--30, 37--38; Island 2: 15--24, 33--36; Island 3: 4--14, 31--32, 39 \\
\addlinespace[2pt]
\textbf{57-bus:} Island 1: 1--5, 11, 13--23, 32--49, 56--57; Island 2: 6--10, 12, 24--31, 50--55 \\
\bottomrule
\end{tabular}
\end{table*}

\begin{figure*}[!t]
\centering
\setlength{\tabcolsep}{0pt}
\renewcommand{\arraystretch}{1.0}
\begin{tabular}{@{}c@{\hspace{1pt}}c@{}}
\includegraphics[width=0.48\textwidth]{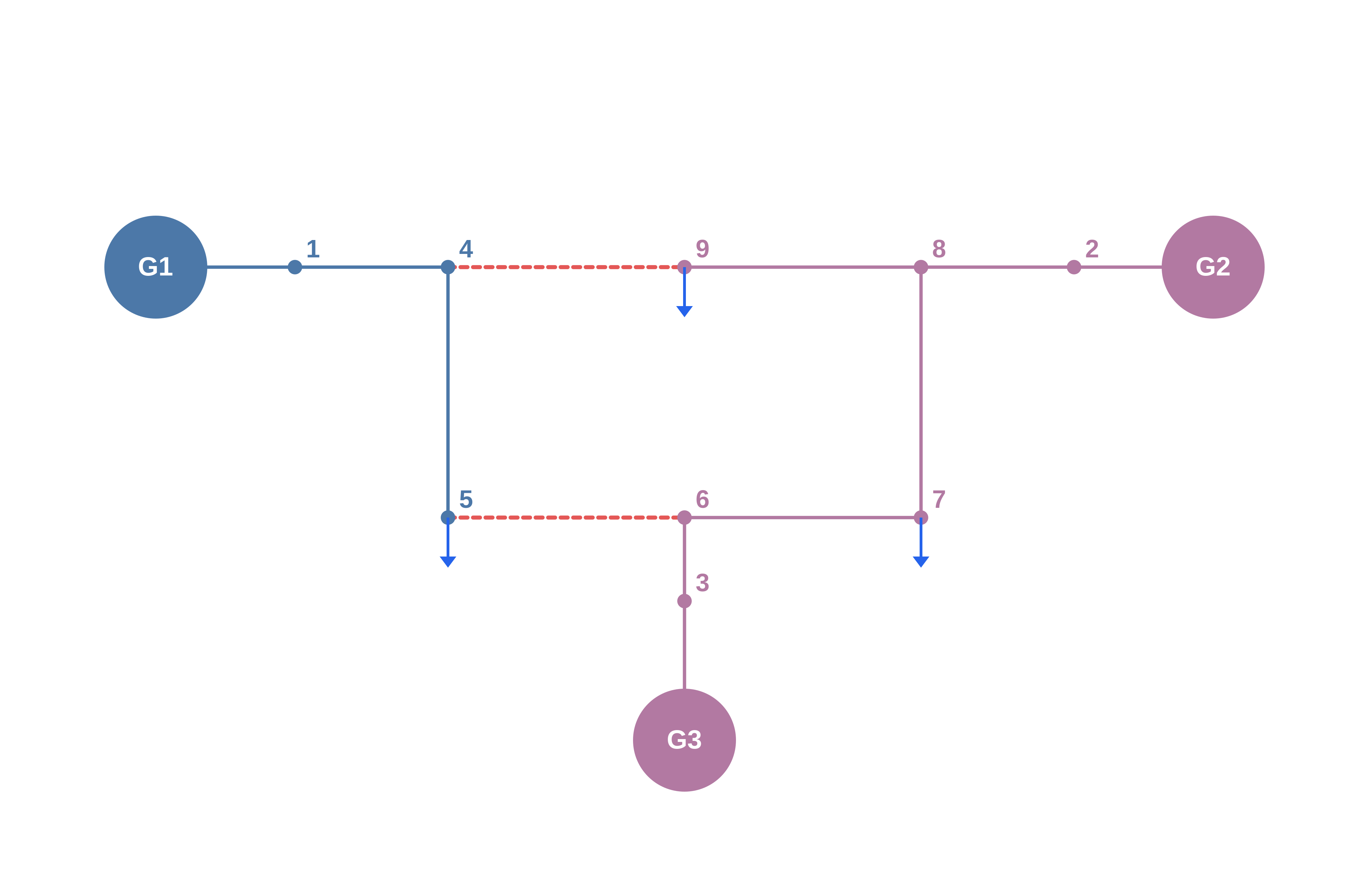} &
\includegraphics[width=0.48\textwidth]{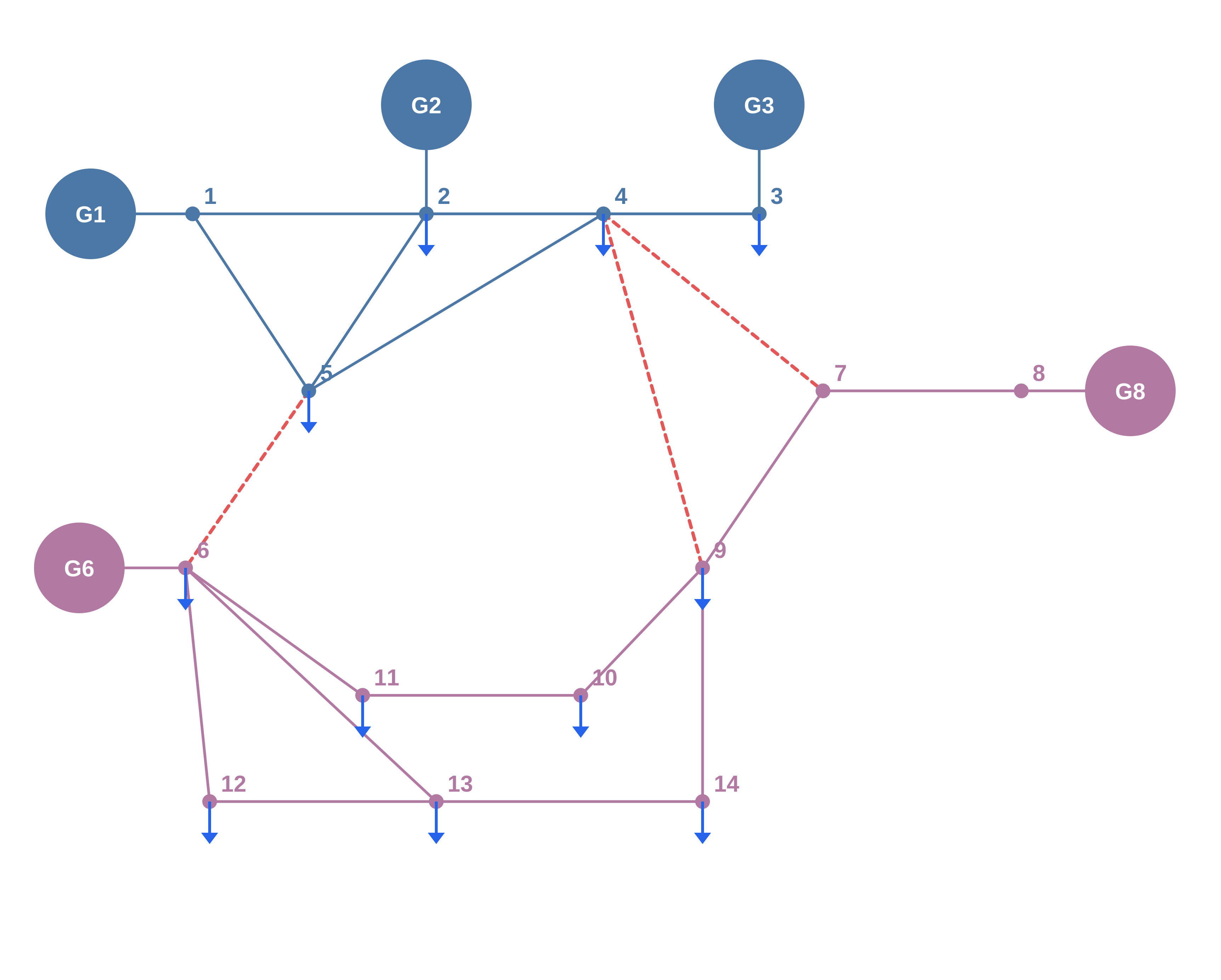} \\
\footnotesize (a) IEEE 9-bus & \footnotesize (b) IEEE 14-bus \\[4pt]
\includegraphics[width=0.48\textwidth]{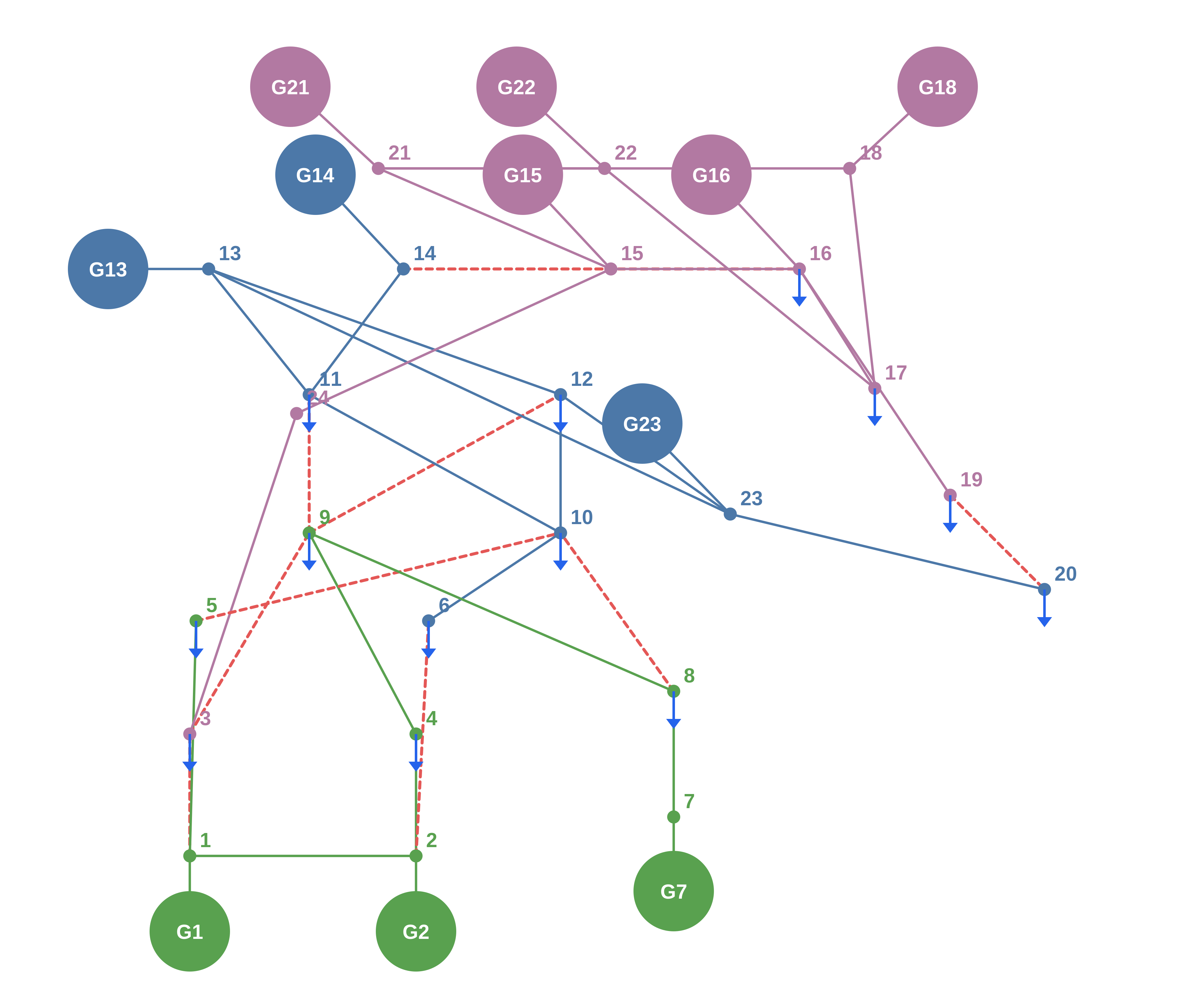} &
\includegraphics[width=0.48\textwidth]{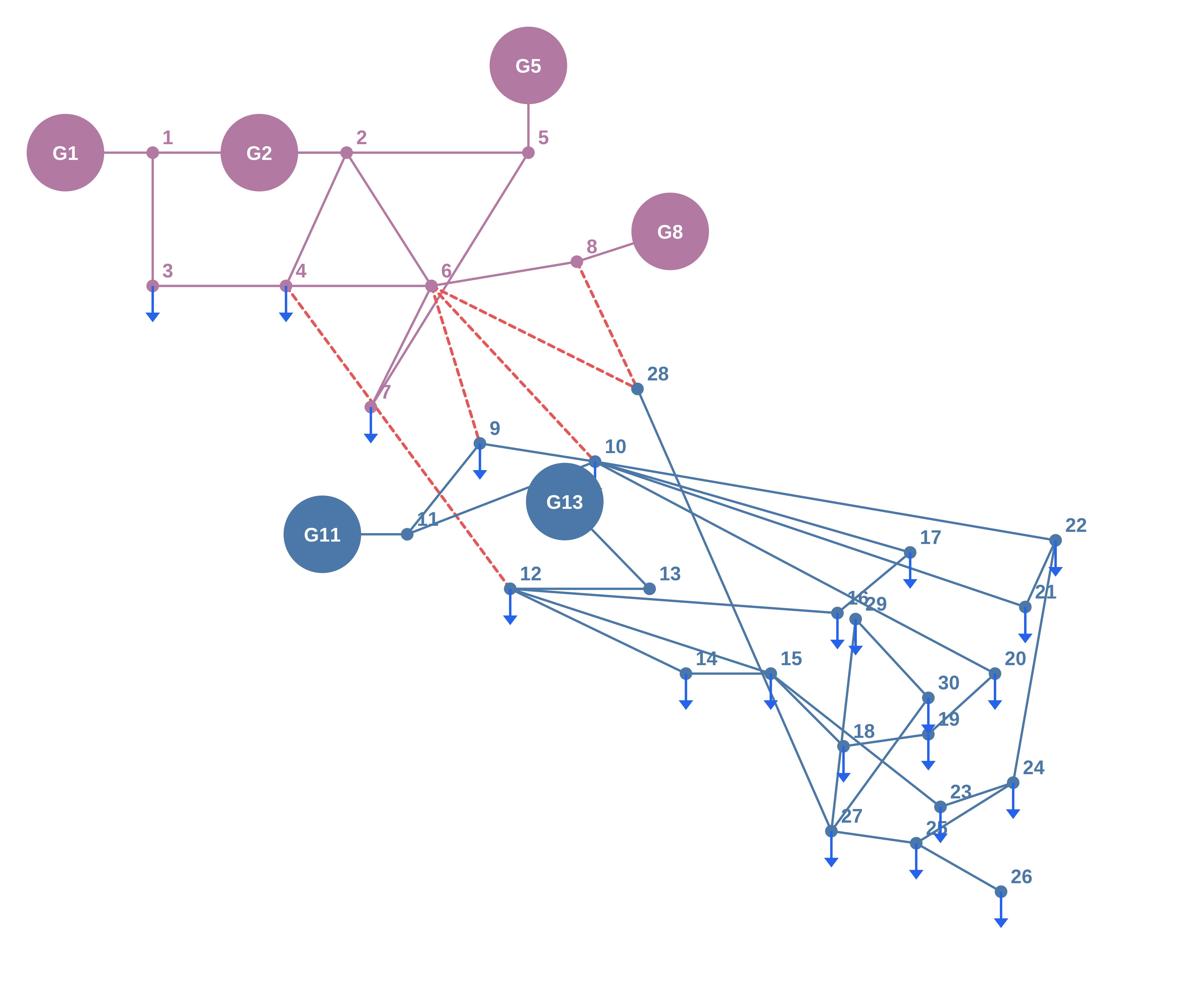} \\
\footnotesize (c) IEEE 24-bus & \footnotesize (d) IEEE 30-bus
\end{tabular}
\caption[QAOA islanding results on small IEEE systems]{Optimal islanding results generated by the proposed QAOA framework on the IEEE 9-, 14-, 24-, and 30-bus test systems. Bus colors indicate
the island assignment of each bus, and the red dashed lines denote the transmission interfaces removed to form the final
islands. The resulting islands consist mainly of neighboring buses and require only a few transmission lines
to be disconnected.}
\label{fig:island_viz_ieee_small}
\end{figure*}

\begin{figure*}[!t]
\centering
\setlength{\tabcolsep}{0pt}
\renewcommand{\arraystretch}{1.0}
\begin{tabular}{@{}c@{\hspace{1pt}}c@{}}
\includegraphics[width=0.48\textwidth]{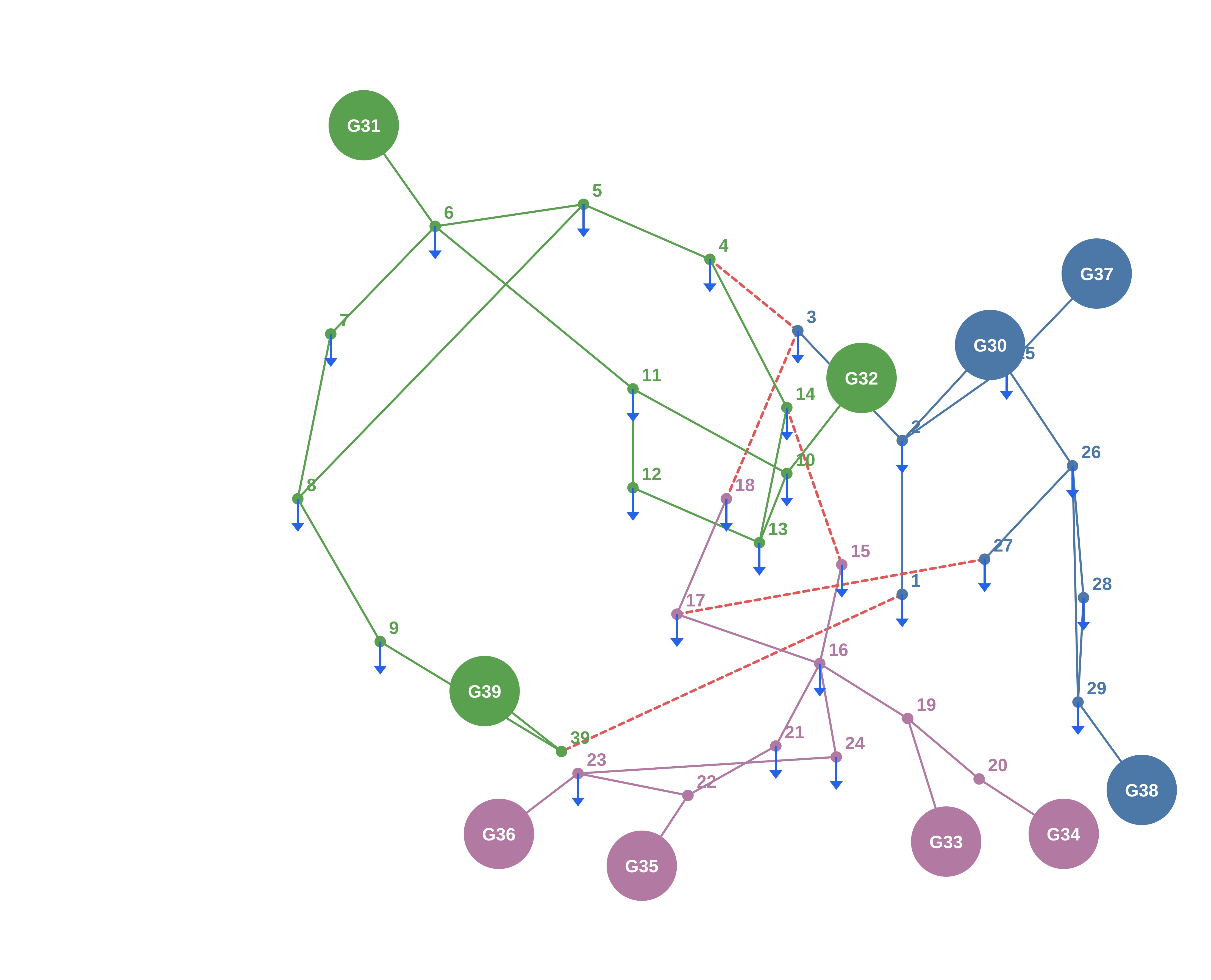} &
\includegraphics[width=0.48\textwidth]{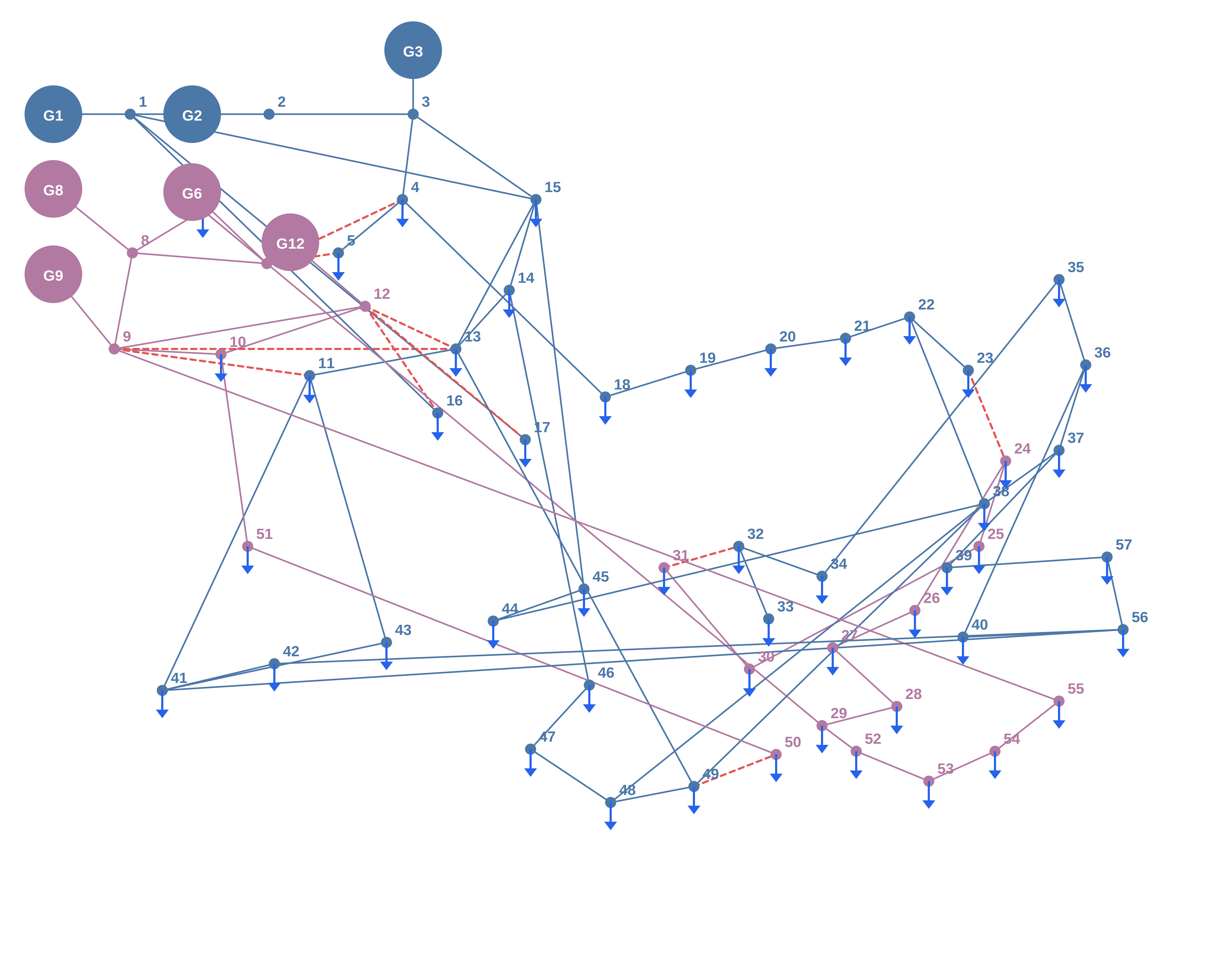} \\
\footnotesize (a) IEEE 39-bus & \footnotesize (b) IEEE 57-bus
\end{tabular}\par\vspace{4pt}
\begin{tabular}{@{}c@{\hspace{1pt}}c@{}}
\includegraphics[width=0.42\textwidth]{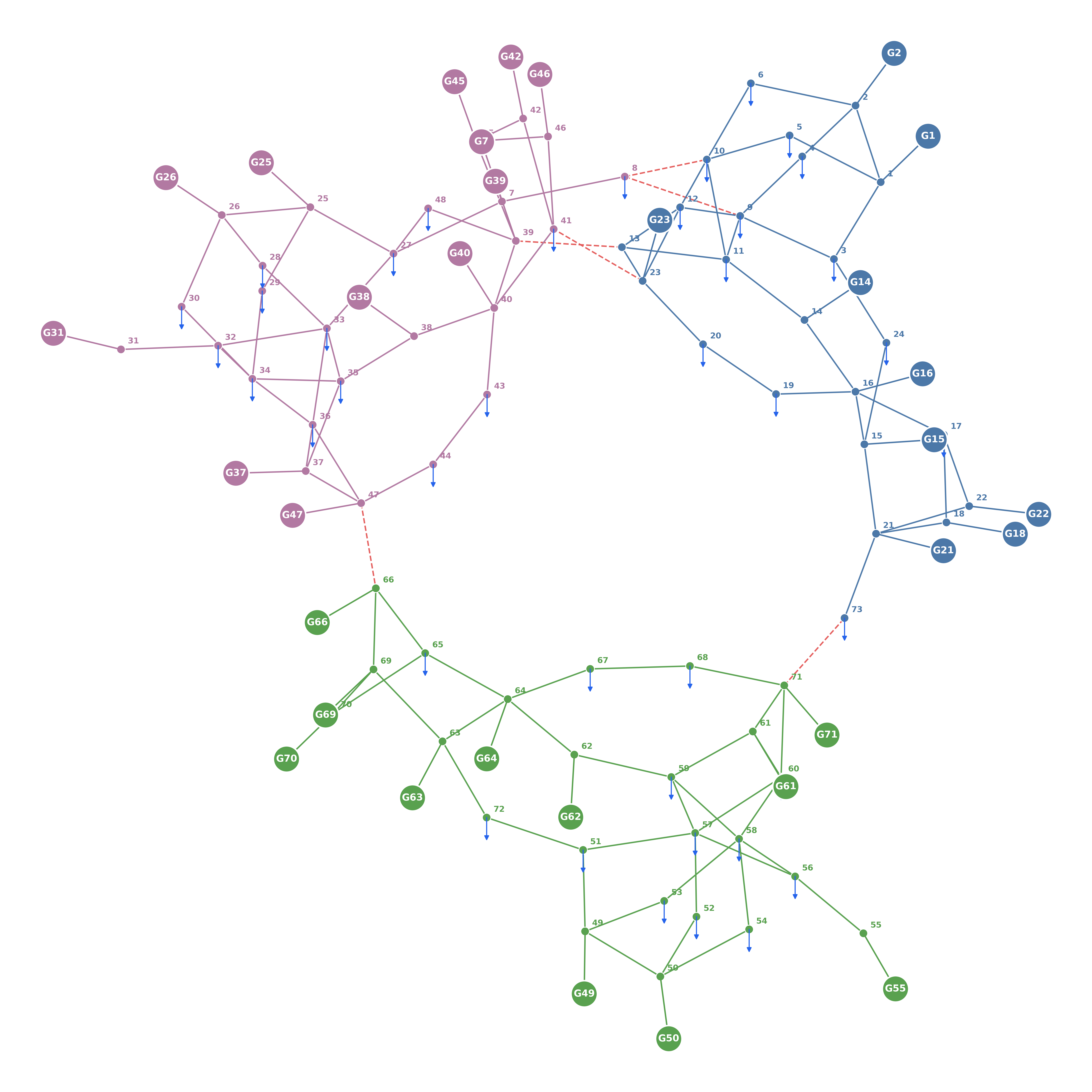} &
\includegraphics[width=0.54\textwidth]{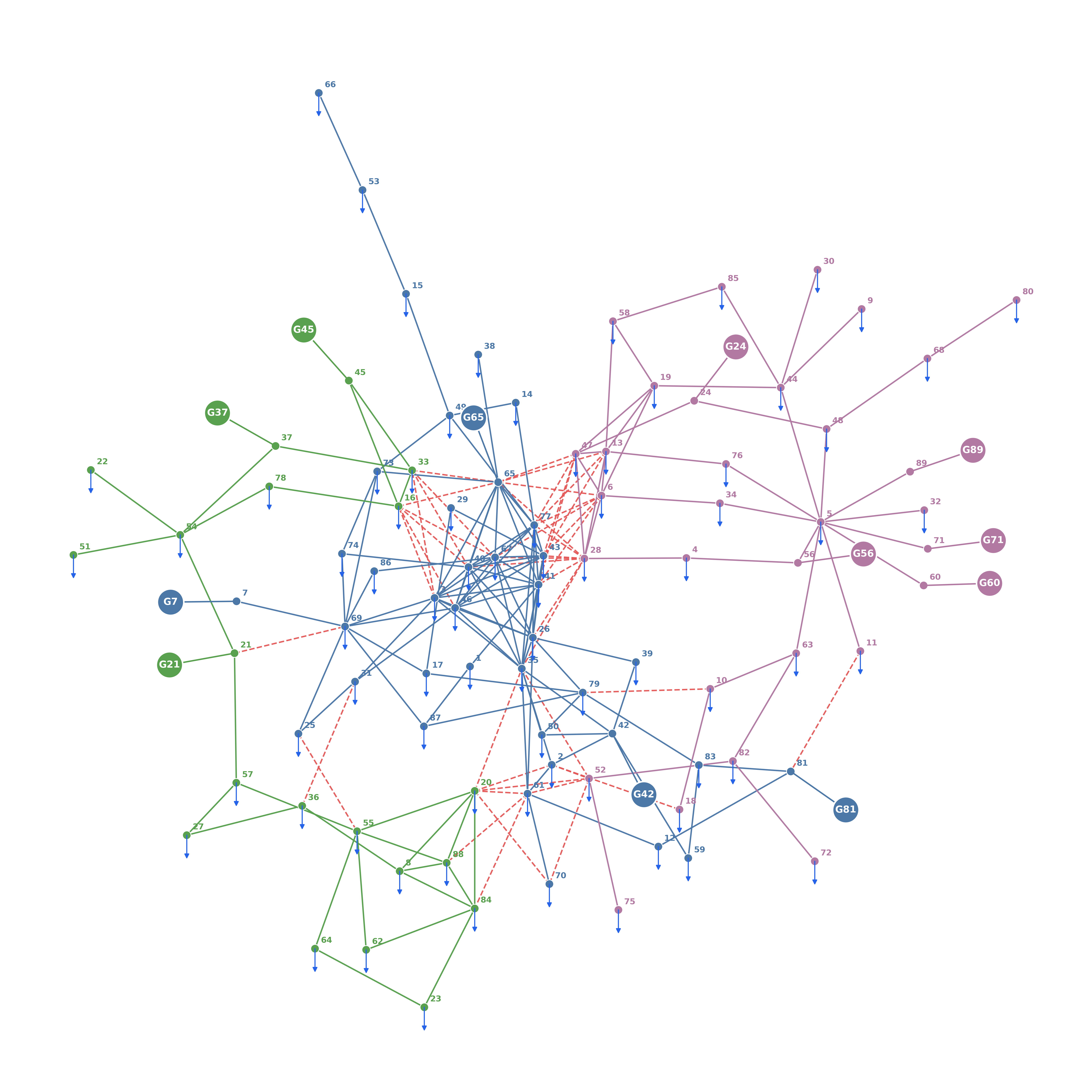} \\
\footnotesize (c) IEEE 73-bus & \footnotesize (d) IEEE 89-bus
\end{tabular}
\caption[QAOA islanding results on large IEEE systems]{Optimal islanding results generated by the proposed QAOA framework on the IEEE 39-, 57-, 73-, and 89-bus test systems. Bus colors indicate
the island assignment of each bus, and the red dashed lines denote the transmission interfaces removed to form the final
islands. The resulting islands consist mainly of neighboring buses and require only a few transmission lines
to be disconnected.}
\label{fig:island_viz_ieee_large}
\end{figure*}

\subsection{Benchmark Performance Comparison}

This subsection evaluates the proposed hybrid quantum-classical islanding workflow from two complementary
perspectives: the structure of the resulting islands and the quantum resources required to obtain them. The
representative islanding solutions obtained from the QAOA-based workflow are listed in
Table~\ref{tab:islanding_partitions}, and Figs.~\ref{fig:island_viz_ieee_small}
and~\ref{fig:island_viz_ieee_large} show the corresponding partitions on the IEEE visualization set. The
resource and runtime comparison for the reduced encoding $\mathcal{E}_1$ is reported in
Table~\ref{tab:lagrangian_bench}, covering the IEEE 9-, 14-, 24-, 30-, 39-, 57-, and 73-bus systems. The
corresponding comparison for $\mathcal{E}_2$ is reported in
Table~\ref{tab:sf_bench}, covering the same seven systems and the IEEE~89-bus system.

The island assignments in Table~\ref{tab:islanding_partitions} satisfy the prescribed island counts while
covering each bus exactly once. More importantly, the selected partitions retain coherent electrical areas
across both two-island and three-island cases. This is significant for controlled islanding because the
combinatorial encoding alone does not guarantee that a low-cut solution will have a physically interpretable
network structure.

The structural coherence of the resulting islands is evident from the topological views in
Figs.~\ref{fig:island_viz_ieee_small} and~\ref{fig:island_viz_ieee_large}. The bus colors form compact regions on the graph, and the red dashed interfaces
are concentrated on a small set of boundary branches. These patterns show that the selected solutions define
clear and physically interpretable island boundaries across all systems, from the IEEE 9-bus case to the IEEE
89-bus case.

The practical value of these physically interpretable partitions depends on whether they can be obtained
within the resource limits of available quantum hardware. Under $\mathcal{E}_1$, removing the explicit
penalty-variable overhead yields substantial QUBO reductions without sacrificing the best recovered
objective value. As summarized in
Table~\ref{tab:concrete_qubits} and visualized in Fig.~\ref{fig:qubit_scaling_comparison}, the variable counts decrease from 23 to 9 on IEEE~9-bus, 29 to 14 on
IEEE~14-bus, 76 to 40 on IEEE~24-bus, 55 to 29 on IEEE~30-bus, 123 to 78 on IEEE~39-bus, 78 to 52 on
IEEE~57-bus, and 156 to 108 on IEEE~73-bus. These changes correspond to reductions of $60.9\%$, $51.7\%$,
$47.4\%$, $47.3\%$, $36.6\%$, $33.3\%$, and $30.8\%$, respectively. The cut value remains unchanged on
six of the seven systems. On IEEE~24-bus, the Lagrangian runs obtain the lower cut value of $776.206$
across IBM, Rigetti, and IQM, compared with $824.125$ for the non-Lagrangian IBM run.
No $\mathcal{E}_1$ result is reported for IEEE~89-bus because its Lagrangian formulation requires 178
qubits, which exceeds the 156-qubit capacity of the available IBM backend. In contrast, $\mathcal{E}_2$
reduces this requirement to 154 qubits and makes the IEEE~89-bus run possible.

The benefit of this variable reduction extends beyond the abstract QUBO dimension because each removed
variable also reduces the burden placed on circuit construction and compilation. Accordingly, the smaller
Lagrangian QUBOs generally translate into shallower compiled IBM circuits. Relative to the non-Lagrangian mode,
the Lagrangian circuit depth decreases from 473 to 31 on IEEE~9-bus, 582 to 87 on IEEE~14-bus, 891 to 127
on IEEE~24-bus, 773 to 64 on IEEE~30-bus, 977 to 881 on IEEE~39-bus, 937 to 804 on IEEE~57-bus, and 1093
to 731 on IEEE~73-bus. The reductions are particularly large for the 9-, 14-, 24-, and 30-bus cases, ranging
from $85.1\%$ to $93.4\%$. The more moderate reductions for the larger cases indicate that circuit depth is
also shaped by QAOA depth and backend compilation. This backend dependence is visible in the Lagrangian
runs. For example, the IEEE~57-bus circuit has depth 804 on IBM, 181 on Rigetti, and 159 on IQM, although
all three runs use $p=2$ and $1500$ shots and return the same cut value.

Resource savings are meaningful only if the shallower circuits continue to produce valid island assignments
with sufficient frequency. The feasible-sample rates therefore complement the cut and depth comparisons by
measuring how often each formulation recovers a valid solution. On IBM, the $\mathcal{E}_1$ Lagrangian mode increases
the feasible-sample rate from $0.150$ to $0.380$ on IEEE~9-bus, $0.110$ to $0.210$ on IEEE~14-bus, $0.004$
to $0.007$ on IEEE~24-bus, $0.190$ to $0.230$ on IEEE~30-bus, $0.086$ to $0.104$ on IEEE~39-bus, and
$0.040$ to $0.057$ on IEEE~73-bus. The IEEE~57-bus case reaches a feasibility rate of $1.000$ in every
reported run. The cross-backend results show larger variation for the smaller systems. For IEEE~9-bus, the
Lagrangian feasible rate ranges from $0.150$ on IonQ to $0.380$ on IBM, with AQT, IQM, and Rigetti producing
$0.260$, $0.200$, and $0.240$, respectively. For IEEE~14-bus, IQM gives the highest rate of $0.270$, followed
by Rigetti at $0.230$, IBM at $0.210$, and IonQ at $0.150$. The three reported Lagrangian backends for
IEEE~24-bus produce rates between $0.007$ and $0.017$, which identifies this case as the most difficult one
for direct feasible sampling under $\mathcal{E}_1$. By contrast, the rates for IEEE~30-bus are tightly grouped
between $0.227$ and $0.253$, and all reported IEEE~57-bus runs reach $1.000$.

The consistent IBM improvements do not imply that the Lagrangian treatment affects every hardware platform
in the same way. Paired results from the other backends reveal a more nuanced feasibility pattern. On IEEE~9-bus, it raises
the rate on IBM and IQM, leaves Rigetti unchanged, and lowers IonQ from $0.180$ to $0.150$. On IEEE~14-bus,
it improves the rate on all four backends with paired results. These differences show that feasibility depends
on both the formulation and the backend-specific execution characteristics. In contrast, the best reported cut
for a given system is identical across all Lagrangian backends. This separation indicates that the backend
mainly affects the frequency of feasible observations, while the best recovered objective value remains stable.

The observed differences in feasibility cannot be explained by compiled circuit depth alone. Since native
gates, connectivity, and compilation procedures
differ across devices, depth values are most meaningful as within-backend resource indicators rather than as
direct measures of hardware quality.
Under $\mathcal{E}_1$, IBM gives the smallest Lagrangian depth for
IEEE~9-bus at 31, while IonQ gives the smallest depth for IEEE~14-bus at 25. The IEEE~24-bus depths are nearly
identical across IBM, IQM, and Rigetti at 127 to 130. For the deeper $p=2$ IEEE~57-bus circuits, the compiled
depth is 804 on IBM, 181 on Rigetti, and 159 on IQM. Because these depths result from backend-specific
compilation, they are suitable for assessing circuit complexity within each backend but not for directly
ranking the processors by accuracy or execution speed.

The $\mathcal{E}_2$ results in Table~\ref{tab:sf_bench} address the hardware limitation encountered by
$\mathcal{E}_1$ through a more compact assignment representation. Combining the backend results in
Table~\ref{tab:sf_bench} with the qubit counts in Table~\ref{tab:concrete_qubits} shows that the Lagrangian
treatment reduces the QUBO-variable counts from 20 to 6 on IEEE~9-bus, 29 to 9 on IEEE~14-bus, 62 to 26 on IEEE~24-bus,
50 to 24 on IEEE~30-bus, 103 to 58 on IEEE~39-bus, 76 to 50 on IEEE~57-bus, 102 to 54 on IEEE~73-bus,
and 208 to 154 on IEEE~89-bus. These changes represent reductions of $70.0\%$, $69.0\%$, $58.1\%$,
$52.0\%$, $43.7\%$, $34.2\%$, $47.1\%$, and $26.0\%$, respectively. Most importantly, the 154-qubit
Lagrangian model fits within the available 156-qubit IBM backend and enables execution of the IEEE~89-bus
case. Its 208-qubit non-Lagrangian counterpart remains beyond the hardware limit and therefore has no
companion run in Table~\ref{tab:sf_bench}. The corresponding IEEE~89-bus topology in
Fig.~\ref{fig:island_viz_ieee_large} shows that this hardware-compatible formulation produces three compact
islands separated by a limited set of boundary branches, thereby linking the qubit reduction to a physically
interpretable islanding solution.

The reduction in model dimension produces a corresponding decrease in compiled circuit depth on IBM.
Relative to the non-Lagrangian mode, the Lagrangian depth decreases from 241 to 25 on IEEE~9-bus, 467 to
77 on IEEE~14-bus, 784 to 94 on IEEE~24-bus, 685 to 62 on IEEE~30-bus, 848 to 720 on IEEE~39-bus,
849 to 544 on IEEE~57-bus, and 917 to 719 on IEEE~73-bus. The reductions exceed $83\%$ for the 9-, 14-,
24-, and 30-bus systems, while the larger systems show more moderate decreases of $15.1\%$ to $35.9\%$.
Compared with the Lagrangian $\mathcal{E}_1$ circuits on IBM, $\mathcal{E}_2$ further lowers depth for every
shared system, with particularly notable changes from 881 to 720 on IEEE~39-bus and from 804 to 544 on
IEEE~57-bus. This consistent reduction confirms that $\mathcal{E}_2$ removes circuit overhead beyond the
savings already provided by the Lagrangian formulation.

The principal performance distinction between the two encodings appears in the probability of obtaining
feasible samples. Under the Lagrangian mode, $\mathcal{E}_2$ reaches a feasibility rate of $1.000$ on every
reported backend for IEEE~9-bus, IEEE~30-bus, and IEEE~57-bus. This is a substantial improvement over
$\mathcal{E}_1$, whose corresponding rates range from $0.150$ to $0.380$ on IEEE~9-bus and from $0.227$
to $0.253$ on IEEE~30-bus. The contrast is strongest for IEEE~24-bus. Encoding $\mathcal{E}_2$ raises the
cross-backend feasibility range from $0.007$--$0.017$ under $\mathcal{E}_1$ to $0.411$--$0.550$ under
$\mathcal{E}_2$. On IBM, the same improvement is visible for IEEE~39-bus and IEEE~73-bus, where the rates
increase from $0.104$ to $0.762$ and from $0.057$ to $1.000$. These results indicate that $\mathcal{E}_2$
removes redundant states from the search space and makes valid island assignments substantially more
accessible to QAOA sampling.

Although $\mathcal{E}_2$ improves feasibility overall, the remaining variation across backends shows that
sampling behavior is still hardware dependent. For IEEE~14-bus, the Lagrangian rate ranges from $0.290$ on
IonQ to $0.520$ on IQM, with values of $0.300$, $0.390$, and $0.400$ on AQT, Rigetti, and IBM. For
IEEE~24-bus, Rigetti gives the highest rate of $0.550$, followed by IQM at $0.498$, IonQ at $0.458$, and IBM
at $0.411$. The paired Lagrangian and non-Lagrangian results also show that qubit reduction does not guarantee
a higher feasible rate on every device. On IEEE~14-bus, the Lagrangian rate increases on IQM and Rigetti but
decreases on IBM and IonQ. On IBM, the non-Lagrangian rate is also higher for IEEE~24-bus and IEEE~39-bus.
Nevertheless, the best cut value remains identical across formulations and backends for every paired
$\mathcal{E}_2$ case. The robust benefit of the Lagrangian treatment is therefore lower resource demand with
preserved objective quality, while its effect on feasible-sample frequency depends on the backend.

The cross-backend depth values reinforce the distinction between formulation complexity and hardware-specific
compilation. In the Lagrangian runs, the IEEE~9-bus depth ranges from 19 on AQT and IonQ to 34 on Rigetti,
and the IEEE~14-bus depth ranges from 25 on AQT and IonQ to 77 on IBM. The non-IBM depths for IEEE~24-bus
are closely grouped between 55 and 58, compared with 94 on IBM. For IEEE~57-bus, IQM and Rigetti compile
to depths of 135 and 157, compared with 544 on IBM, while all three backends achieve a feasibility rate of
$1.000$ and the same cut value. The IEEE~89-bus experiment extends the hardware demonstration to the largest
executed system using $p=7$, $10000$ shots, and $I_{\max}=3$. Its IBM circuit has depth 865 and achieves a
feasibility rate of $0.535$. Since only IBM results are available for the 39-, 73-, and 89-bus cases, these rows
demonstrate executable scale but do not support a complete comparison among backends.

\begin{table*}[!t]
\centering
\footnotesize
\setlength{\tabcolsep}{3pt}
\renewcommand{\arraystretch}{0.95}
\caption[Backend benchmark results under reduced assignment encoding]{Backend-level benchmark results under encoding $\mathcal{E}_1$, with and without Lagrangian terms.}
\label{tab:lagrangian_bench}
\begin{tabular}{@{}lllcccccc@{}}
\toprule
\textbf{Case} & \textbf{Backend} & \textbf{QUBO mode} & $p$ & $S$ & $I_{\max}$ &
\makecell{\textbf{Circuit}\\\textbf{depth}} &
$\cutval$ & \makecell{\textbf{Feas.}\\\textbf{rate}} \\
\midrule
\multirow{9}{*}{IEEE 9-bus}
 & IBM & Lagrangian & 1 & 100 & 2 & 31 & 1.238 & 0.380 \\ 
 & Rigetti & Lagrangian & 1 & 100 & 2 & 52 & 1.238 & 0.240 \\ 
 & IQM & Lagrangian & 1 & 100 & 2 & 44 & 1.238 & 0.200 \\ 
 & AQT & Lagrangian & 1 & 100 & 2 & 31 & 1.238 & 0.260 \\ 
 & IonQ & Lagrangian & 1 & 100 & 2 & 31 & 1.238 & 0.150 \\ 
 & IBM & w/o Lagrangian & 1 & 100 & 2 & 473 & 1.238 & 0.150 \\ 
 & Rigetti & w/o Lagrangian & 1 & 100 & 2 & 88 & 1.238 & 0.240 \\ 
 & IQM & w/o Lagrangian & 1 & 100 & 2 & 86 & 1.238 & 0.150 \\ 
 & IonQ & w/o Lagrangian & 1 & 100 & 2 & 85 & 1.238 & 0.180 \\ 
\midrule
\multirow{9}{*}{IEEE 14-bus}
 & IBM     & Lagrangian     & 1 & 100  & 2 &    87 &  88.241 & 0.210 \\
 & IonQ    & Lagrangian     & 1 & 100  & 2 &    25 &  88.241 & 0.150 \\
 & IQM     & Lagrangian     & 1 & 100  & 2 &    28 &  88.241 & 0.270 \\
 & Rigetti & Lagrangian     & 1 & 100  & 2 &    31 &  88.241 & 0.230 \\
 & IBM     & w/o Lagrangian & 1 & 100  & 2 &   582 &  88.241 & 0.110 \\
 & IonQ    & w/o Lagrangian & 1 & 100  & 2 &   109 &  88.241 & 0.080 \\
 & IQM     & w/o Lagrangian & 1 & 100  & 2 &   110 &  88.241 & 0.140 \\
 & Rigetti & w/o Lagrangian & 1 & 100  & 2 &   112 &  88.241 & 0.190 \\
\midrule
\multirow{4}{*}{IEEE 24-bus}
 & IBM & Lagrangian & 1 & 1000 & 2 & 127 & 776.206 & 0.007 \\ 
 & Rigetti & Lagrangian & 1 & 1000 & 2 & 130 & 776.206 & 0.017 \\ 
 & IQM & Lagrangian & 1 & 1000 & 2 & 128 & 776.206 & 0.017 \\ 
 & IBM & w/o Lagrangian & 1 & 1000 & 2 & 891 & 824.125 & 0.004 \\ 
\midrule
\multirow{5}{*}{IEEE 30-bus}
 & IBM & Lagrangian & 1 & 300 & 2 & 64 & 16.863 & 0.230 \\ 
 & Rigetti & Lagrangian & 1 & 300 & 2 & 79 & 16.863 & 0.247 \\ 
 & IQM & Lagrangian & 1 & 300 & 2 & 72 & 16.863 & 0.227 \\ 
 & IonQ & Lagrangian & 1 & 300 & 2 & 64 & 16.863 & 0.253 \\ 
 & IBM & w/o Lagrangian & 1 & 300 & 2 & 773 & 16.863 & 0.190 \\ 
\midrule
\multirow{2}{*}{IEEE 39-bus}
 & IBM & Lagrangian & 2 & 2000 & 2 & 881 & 228.993 & 0.104 \\ 
 & IBM & w/o Lagrangian & 2 & 2000 & 2 & 977 & 228.993 & 0.086 \\ 
\midrule
\multirow{4}{*}{IEEE 57-bus}
 & IBM & Lagrangian & 2 & 1500 & 2 & 804 & 128.239 & 1.000 \\ 
 & Rigetti & Lagrangian & 2 & 1500 & 2 & 181 & 128.239 & 1.000 \\ 
 & IQM & Lagrangian & 2 & 1500 & 2 & 159 & 128.239 & 1.000 \\ 
 & IBM & w/o Lagrangian & 2 & 1500 & 2 & 937 & 128.239 & 1.000 \\ 
\midrule
\multirow{2}{*}{IEEE 73-bus}
 & IBM & Lagrangian & 3 & 2000 & 2 & 731 & 284.543 & 0.057 \\ 
 & IBM & w/o Lagrangian & 3 & 2000 & 2 & 1093 & 284.543 & 0.040 \\ 
\bottomrule
\end{tabular}
\end{table*}

\begin{table*}[!t]
\centering
\footnotesize
\setlength{\tabcolsep}{3pt}
\renewcommand{\arraystretch}{0.95}
\caption[Backend benchmark results under encoding $\mathcal{E}_2$]{Backend-level benchmark results under encoding $\mathcal{E}_2$, with and without Lagrangian terms.}
\label{tab:sf_bench}
\begin{tabular}{@{}lllcccccc@{}}
\toprule
\textbf{Case} & \textbf{Backend} & \textbf{QUBO mode} & $p$ & $S$ & $I_{\max}$ &
\makecell{\textbf{Circuit}\\\textbf{depth}} &
$\cutval$ & \makecell{\textbf{Feas.}\\\textbf{rate}} \\
\midrule
\multirow{9}{*}{IEEE 9-bus}
 & IBM     & Lagrangian     & 1 & 100  & 2 &    25 &   1.238 & 1.000 \\
 & AQT     & Lagrangian     & 1 & 100  & 2 &    19 &   1.238 & 1.000 \\
 & IonQ    & Lagrangian     & 1 & 100  & 2 &    19 &   1.238 & 1.000 \\
 & IQM     & Lagrangian     & 1 & 100  & 2 &    28 &   1.238 & 1.000 \\
 & Rigetti & Lagrangian     & 1 & 100  & 2 &    34 &   1.238 & 1.000 \\
 & IBM     & w/o Lagrangian & 1 & 100  & 2 &   241 &   1.238 & 1.000 \\
 & IonQ    & w/o Lagrangian & 1 & 100  & 2 &    67 &   1.238 & 1.000 \\
 & IQM     & w/o Lagrangian & 1 & 100  & 2 &    68 &   1.238 & 1.000 \\
 & Rigetti & w/o Lagrangian & 1 & 100  & 2 &    70 &   1.238 & 1.000 \\
\midrule
\multirow{9}{*}{IEEE 14-bus}
 & IBM     & Lagrangian     & 1 & 100  & 2 &    77 &  88.241 & 0.400 \\
 & AQT     & Lagrangian     & 1 & 100  & 2 &    25 &  88.241 & 0.300 \\
 & IonQ    & Lagrangian     & 1 & 100  & 2 &    25 &  88.241 & 0.290 \\
 & IQM     & Lagrangian     & 1 & 100  & 2 &    28 &  88.241 & 0.520 \\
 & Rigetti & Lagrangian     & 1 & 100  & 2 &    31 &  88.241 & 0.390 \\
 & IBM     & w/o Lagrangian & 1 & 100  & 2 &   467 &  88.241 & 0.450 \\
 & IonQ    & w/o Lagrangian & 1 & 100  & 2 &   109 &  88.241 & 0.340 \\
 & IQM     & w/o Lagrangian & 1 & 100  & 2 &   110 &  88.241 & 0.420 \\
 & Rigetti & w/o Lagrangian & 1 & 100  & 2 &   112 &  88.241 & 0.270 \\
\midrule
\multirow{5}{*}{IEEE 24-bus}
 & IBM     & Lagrangian     & 1 & 1000 & 2 &   94 & 776.206 & 0.411 \\
 & IonQ    & Lagrangian     & 1 & 1000 & 2 &    55 & 776.206 & 0.458 \\
 & IQM     & Lagrangian     & 1 & 1000 & 2 &    56 & 776.206 & 0.498 \\
 & Rigetti & Lagrangian     & 1 & 1000 & 2 &    58 & 776.206 & 0.550 \\
 & IBM     & w/o Lagrangian & 1 & 1000 & 2 &  784 & 776.206 & 0.542 \\
\midrule
\multirow{6}{*}{IEEE 30-bus}
 & IBM     & Lagrangian     & 1 & 300  & 2 &   62 &  16.863 & 1.000 \\
 & IonQ    & Lagrangian     & 1 & 300  & 2 &    34 &  16.863 & 1.000 \\
 & IQM     & Lagrangian     & 1 & 300  & 2 &    42 &  16.863 & 1.000 \\
 & Rigetti & Lagrangian     & 1 & 300  & 2 &    49 &  16.863 & 1.000 \\
 & IBM     & w/o Lagrangian & 1 & 300  & 2 &  685 &  16.863 & 1.000 \\
 & IQM     & w/o Lagrangian & 1 & 300  & 2 &   215 &  16.863 & 1.000 \\
\midrule
\multirow{2}{*}{IEEE 39-bus}
 & IBM & Lagrangian     & 2 & 2000 & 2 &  720 & 228.993 & 0.762 \\
 & IBM & w/o Lagrangian & 2 & 2000 & 2 & 848 & 228.993 & 0.835 \\
\midrule
\multirow{4}{*}{IEEE 57-bus}
 & IBM     & Lagrangian     & 2 & 1500 & 2 &   544 & 128.239 & 1.000 \\
 & IQM     & Lagrangian     & 2 & 1500 & 2 &   135 & 128.239 & 1.000 \\
 & Rigetti & Lagrangian     & 2 & 1500 & 2 &   157 & 128.239 & 1.000 \\
 & IBM     & w/o Lagrangian & 2 & 1500 & 2 & 849 & 128.239 & 1.000 \\
\midrule
\multirow{2}{*}{IEEE 73-bus}
 & IBM & Lagrangian     & 3 & 2000 & 2 &  719 & 284.543 & 1.000 \\
 & IBM & w/o Lagrangian & 3 & 2000 & 2 & 917 & 284.543 & 1.000 \\
\midrule
{IEEE 89-bus}
 & IBM & Lagrangian     & 7 & 10000 & 3 & 865 & 3790.938 & 0.535 \\
\bottomrule
\end{tabular}
\end{table*}

The power-support checks for the smaller benchmark systems were established in the preceding REGRID-QAOA
study~\cite{jiang2026regrid}. To place the new IEEE~73- and 89-bus results in the same context, the present
analysis applies a consistent MATPOWER calculation to all eight systems. For each island, the active-power
margin is defined as $\Delta P=P_{g,\max}-P_d$, and the reactive-power margin is defined as
$\Delta Q=Q_{\max}-Q_d$. Positive values indicate that the aggregate generation capability exceeds the
corresponding island demand. The calculation uses the selected partitions associated with the backend results
and sequential bus renumbering consistent with the islanding code.

\begin{figure}[!tbp]
\centering
\includegraphics[width=\columnwidth]{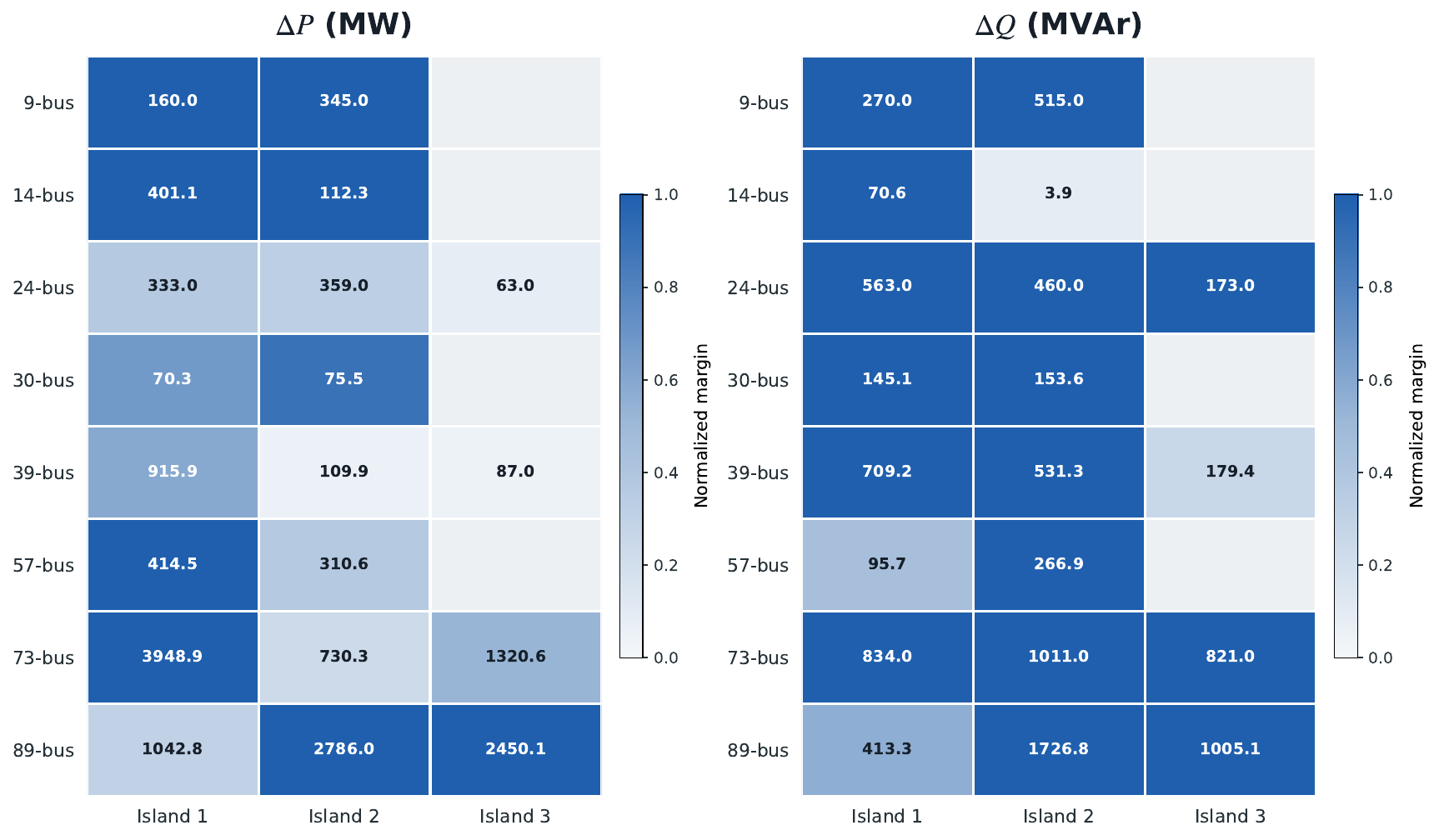}
\caption[Power-support margins across all benchmark systems]{Island-level active- and reactive-power support
margins across all benchmark systems. Each cell reports the exact margin in MW or MVAr, while the
white-to-blue color scale represents the corresponding margin normalized by island demand. Darker cells
indicate greater relative headroom. Gray cells denote an unavailable third island in a two-island system.}
\label{fig:power_margin_heatmap}
\end{figure}

The power-support margins complement the preceding topological and backend analyses by assessing whether
each selected island has sufficient aggregate generation capability. As summarized in
Fig.~\ref{fig:power_margin_heatmap}, every island retains positive active- and reactive-power margins. The most constrained active-power cases occur in IEEE~24-bus Island~3 and IEEE~30-bus
Island~1, with margins of $63.0$ and $70.3~\mathrm{MW}$. The tightest reactive-power case is IEEE~14-bus
Island~2, whose margin is only $3.9~\mathrm{MVAr}$. These small positive values identify the islands that are
most sensitive to load growth, generator unavailability, or modeling uncertainty. In contrast, the larger
active-power margins in IEEE~39- and 57-bus indicate greater aggregate support, although the normalized colors
show that absolute margin alone does not determine relative adequacy across systems of different sizes.

The positive margins obtained for the two largest systems demonstrate that the resource savings of the
proposed method translate into greater quantum-applicable problem scale without compromising island-level
power support. For IEEE~73-bus, the active-power margins are $3948.9$, $730.3$, and
$1320.6~\mathrm{MW}$, and the reactive-power margins are $834$, $1011$, and $821~\mathrm{MVAr}$. For
IEEE~89-bus, the corresponding active-power margins are $1042.8$, $2786.0$, and $2450.1~\mathrm{MW}$,
while the reactive-power margins are $413.3$, $1726.8$, and $1005.1~\mathrm{MVAr}$. More importantly,
$\mathcal{E}_2$ reduces the IEEE~89-bus formulation to 154 qubits, allowing it to run on the available
156-qubit IBM backend, while the resulting partition retains the coherent topology in
Fig.~\ref{fig:island_viz_ieee_large} and positive power-support margins in every island. In combination with
the hardware results in Table~\ref{tab:sf_bench}, this agreement between resource feasibility and physical
adequacy shows that the proposed method can scale controlled-islanding applications to larger power systems
on present quantum platforms.

\subsection{Ablation Study}
\label{sec:ablation_study}

The ablation study isolates how the principal formulation choices affect resource requirements,
computational efficiency, feasibility, and solution quality. The formulation labels are consistent with
those introduced in Table~\ref{tab:complexity_summary}. The additional variant (i)+Lag. combines the full
one-hot encoding of Formulation~(i) with Lagrangian constraint handling. Comparisons between formulations
that differ in only one design choice isolate the contributions of the encoding and constraint-handling
strategies. The encoding effect follows from comparing
Formulations~(ii) and~(iii) under direct QUBO penalties and Formulations~(iv) and~(v) under Lagrangian
constraint handling. The effect of the slack-free Lagrangian scheme is evaluated by comparing
Formulations~(ii) and~(iv) under $\mathcal{E}_1$ and Formulations~(iii) and~(v) under $\mathcal{E}_2$.
Formulation~(i) and (i)+Lag. provide the corresponding constraint-handling comparison for the full one-hot
encoding.

Table~\ref{tab:ablation_resource} summarizes the factorial ablation results across the benchmark power
systems. All variants for a given test system use the same system-specific QAOA layer depth, shot count, and
optimizer budget reported in Table~\ref{tab:exp_setup}. The table consolidates the six complete comparisons
for the IEEE 9-, 14-, 24-, 30-, 39-, and 57-bus systems. The Qubit count and QUBO terms columns give
the total logical-qubit and QUBO-term counts, respectively. Circuit depth denotes the depth of the
transpiled circuit. The two-qubit (2Q) gates and Total gates columns give the numbers of two-qubit gates and
all gates in the transpiled circuit, respectively. The Feas. rate column reports the fraction of samples
that are feasible after post-processing. Cut is the resulting cut value, and $t/t_q$ gives the overall
runtime and quantum-backend runtime in seconds, respectively.

For each system, the parenthetical ratios are obtained by normalizing the QUBO-term count, circuit depth,
and gate counts against the corresponding values for the proposed Formulation~(v). Formulation~(v) therefore
has a ratio of $1.00$, while values greater than $1.00$ quantify the resource overhead of the alternative
formulations.

\begin{table*}[!tbp]
\centering
\footnotesize
\setlength{\tabcolsep}{2.5pt}
\renewcommand{\arraystretch}{0.92}
\caption{Factorial ablation across benchmark power systems.}
\label{tab:ablation_resource}
\resizebox{\textwidth}{!}{%
\begin{tabular}{@{}ccrrrrrrrr@{}}
\toprule
\textbf{IEEE} & \textbf{Form.} & \makecell{\textbf{Qubit}\\\textbf{count}}
& \textbf{QUBO terms (ratio)} & \makecell{\textbf{Circuit}\\\textbf{depth (ratio)}} & \textbf{2Q gates (ratio)}
& \textbf{Total gates (ratio)}
& \makecell{\textbf{Feas.}\\\textbf{rate}} & \textbf{Cut} & $\boldsymbol{t/t_q}$ \\
\midrule
\multirow{6}{*}{9} & (i) & 32 & 205 (17.08) & 872 (11.63) & 1325 (63.10) & 5038 (37.32) & 0.170 & 1.238 & 40.1/4.0 \\
 & (i)+Lag. & 18 & 37 (3.08) & 336 (4.48) & 204 (9.71) & 865 (6.41) & 0.220 & 1.238 & 39.8/4.0 \\
 & (ii) & 23 & 151 (12.58) & 473 (18.92) & 120 (5.71) & 753 (5.58) & 0.270 & 1.238 & 34.2/3.0 \\
 & (iii) & 20 & 96 (8.00) & 241 (9.64) & 87 (4.14) & 527 (3.90) & 1.000 & 1.238 & 30.8/3.0 \\
 & (iv) & 9 & 21 (1.75) & 31 (1.24) & 57 (2.71) & 299 (2.21) & 0.350 & 1.238 & 32.3/2.0 \\
 & \textbf{(v)} & \textbf{6} & \textbf{12 (1.00)} & \textbf{25 (1.00)} & \textbf{21 (1.00)} & \textbf{135 (1.00)} & \textbf{1.000} & \textbf{1.238} & \textbf{28.7/2.0} \\
\addlinespace[2pt]
\multirow{6}{*}{14} & (i) & 44 & 402 (22.33) & 1712 (18.61) & 2528 (93.63) & 9690 (52.66) & 0.160 & 88.241 & 33.7/6.0 \\
 & (i)+Lag. & 24 & 74 (4.11) & 421 (4.58) & 316 (11.70) & 1353 (7.35) & 0.170 & 88.241 & 32.1/5.0 \\
 & (ii) & 32 & 312 (17.33) & 582 (7.56) & 257 (9.52) & 992 (5.39) & 0.110 & 88.241 & 41.4/4.0 \\
 & (iii) & 29 & 226 (12.56) & 467 (6.06) & 160 (5.93) & 458 (2.49) & 0.370 & 88.241 & 32.7/4.0 \\
 & (iv) & 12 & 31 (1.72) & 87 (1.13) & 77 (2.85) & 397 (2.16) & 0.210 & 88.241 & 32.9/2.0 \\
 & \textbf{(v)} & \textbf{9} & \textbf{18 (1.00)} & \textbf{77 (1.00)} & \textbf{27 (1.00)} & \textbf{184 (1.00)} & \textbf{0.360} & \textbf{88.241} & \textbf{32.0/2.0} \\
\addlinespace[2pt]
\multirow{6}{*}{24} & (i) & 96 & 1302 (15.14) & 2864 (9.51) & 8938 (36.04) & 33601 (28.05) & 0.025 & 860.528 & 125.3/6.0 \\
 & (i)+Lag. & 60 & 216 (2.51) & 1144 (3.80) & 1743 (7.03) & 6572 (5.49) & 0.031 & 817.378 & 121.8/6.0 \\
 & (ii) & 76 & 1105 (12.85) & 891 (9.48) & 1578 (6.36) & 3914 (3.27) & 0.010 & 776.206 & 83.2/6.0 \\
 & (iii) & 62 & 909 (10.57) & 784 (8.34) & 612 (2.47) & 2228 (1.86) & 0.524 & 776.206 & 91.5/5.0 \\
 & (iv) & 40 & 244 (2.84) & 127 (1.35) & 478 (1.93) & 2056 (1.72) & 0.015 & 776.206 & 74.0/3.0 \\
 & \textbf{(v)} & \textbf{26} & \textbf{86 (1.00)} & \textbf{94 (1.00)} & \textbf{248 (1.00)} & \textbf{1198 (1.00)} & \textbf{0.454} & \textbf{776.206} & \textbf{54.4/3.0} \\
\addlinespace[2pt]
\multirow{6}{*}{30} & (i) & 84 & 1531 (30.62) & 5165 (20.58) & 9695 (76.34) & 36324 (54.38) & 0.157 & 16.863 & 59.6/8.0 \\
 & (i)+Lag. & 58 & 137 (2.74) & 813 (3.24) & 1017 (8.01) & 4041 (6.05) & 0.180 & 16.863 & 55.0/8.0 \\
 & (ii) & 55 & 1067 (21.34) & 773 (12.47) & 531 (4.18) & 2038 (3.05) & 0.173 & 16.863 & 41.0/7.0 \\
 & (iii) & 50 & 851 (17.02) & 685 (11.05) & 402 (3.17) & 1700 (2.54) & 1.000 & 16.863 & 39.7/7.0 \\
 & (iv) & 29 & 79 (1.58) & 64 (1.03) & 343 (2.70) & 1455 (2.18) & 0.210 & 16.863 & 39.5/4.0 \\
 & \textbf{(v)} & \textbf{24} & \textbf{50 (1.00)} & \textbf{62 (1.00)} & \textbf{127 (1.00)} & \textbf{668 (1.00)} & \textbf{1.000} & \textbf{16.863} & \textbf{36.9/4.0} \\
\addlinespace[2pt]
\multirow{6}{*}{39} & (i) & 156 & 5306 (23.37) & 13344 (6.21) & 44154 (19.83) & 164783 (18.52) & 0.105 & 228.993 & 1017.5/10.0 \\
 & (i)+Lag. & 117 & 420 (1.85) & 4999 (2.33) & 8667 (3.89) & 31194 (3.51) & 0.129 & 228.993 & 939.9/9.0 \\
 & (ii) & 123 & 3895 (17.16) & 977 (1.36) & 6499 (2.92) & 26872 (3.02) & 0.101 & 228.993 & 681.4/8.0 \\
 & (iii) & 103 & 2121 (9.34) & 848 (1.18) & 4011 (1.80) & 24909 (2.80) & 0.882 & 228.993 & 605.6/8.0 \\
 & (iv) & 78 & 481 (2.12) & 881 (1.22) & 3418 (1.53) & 23754 (2.67) & 0.110 & 228.993 & 655.7/6.0 \\
 & \textbf{(v)} & \textbf{58} & \textbf{227 (1.00)} & \textbf{720 (1.00)} & \textbf{2227 (1.00)} & \textbf{8896 (1.00)} & \textbf{0.758} & \textbf{228.993} & \textbf{440.4/6.0} \\
\addlinespace[2pt]
\multirow{6}{*}{57} & (i) & 130 & 4122 (38.89) & 28350 (51.36) & 58024 (90.95) & 214315 (71.73) & 0.740 & 131.917 & 669.6/9.0 \\
 & (i)+Lag. & 104 & 192 (1.81) & 2107 (3.82) & 3753 (5.88) & 14212 (4.76) & 0.780 & 130.009 & 658.2/9.0 \\
 & (ii) & 78 & 2692 (25.40) & 937 (1.72) & 2946 (4.62) & 11210 (3.75) & 1.000 & 128.239 & 312.6/8.0 \\
 & (iii) & 76 & 2561 (24.16) & 849 (1.56) & 1349 (2.11) & 10818 (3.62) & 1.000 & 128.239 & 283.3/8.0 \\
 & (iv) & 52 & 122 (1.15) & 804 (1.48) & 982 (1.54) & 4200 (1.41) & 1.000 & 128.239 & 363.3/6.0 \\
 & \textbf{(v)} & \textbf{50} & \textbf{106 (1.00)} & \textbf{544 (1.00)} & \textbf{638 (1.00)} & \textbf{2988 (1.00)} & \textbf{1.000} & \textbf{128.239} & \textbf{276.5/6.0} \\
\bottomrule
\end{tabular}
}
\end{table*}

The two encoding strategies provide distinct and successive resource reductions under the same direct
constraint-penalty treatment. Encoding~$\mathcal{E}_1$ first contracts the physical network to the working
graph and replaces the $K$ one-hot variables at each retained bus with $K-1$ explicit variables. The final
island assignment is inferred from the remaining variables. Relative to the full one-hot Formulation~(i),
Formulation~(ii) reduces the logical-qubit count for all six systems by $20.8\%$ to $40.0\%$. The reduction
ranges from 96 to 76 qubits for IEEE~24-bus and from 130 to 78 qubits for IEEE~57-bus. The narrower
register is accompanied by a systematic reduction in compiled-circuit resources. Across the six systems,
the circuit depth is reduced by factors of 1.84 to 30.26, the two-qubit-gate count by factors of 5.66 to
19.70, and the total gate count by factors of 6.13 to 19.12. For IEEE~24-bus, the circuit depth decreases
from 2864 to 891, the two-qubit-gate count from 8938 to 1578, and the total gate count from 33,601 to 3914.
For IEEE~39-bus, the corresponding quantities decrease from 13,344 to 977, from 44,154 to 6499, and from
164,783 to 26,872. The reductions across every benchmark demonstrate that $\mathcal{E}_1$ removes a large
fraction of the assignment-variable interactions encountered during circuit compilation.

The reduction achieved by $\mathcal{E}_1$ also extends to the Hamiltonian representation. The QUBO-term
count decreases for every system, with reductions ranging from $15.1\%$ to $34.7\%$. The count decreases
from 1302 to 1105 for IEEE~24-bus, from 5306 to 3895 for IEEE~39-bus, and from 4122 to 2692 for
IEEE~57-bus. These reductions are smaller than those observed for circuit depth and gate count because the
inferred-island constraints still expand over the explicit assignment variables and slack registers.
Encoding $\mathcal{E}_1$ therefore reduces both logical width and QUBO size, while the direct penalty
construction retains a comparatively dense interaction structure. This distinction explains why the
compiled-circuit savings exceed the reduction in the number of QUBO terms.

The effect of $\mathcal{E}_1$ on solution quality is reflected jointly by the cut value and feasibility
rate. The cut value is preserved for four systems and improves from 860.528 to 776.206 for IEEE~24-bus and
from 131.917 to 128.239 for IEEE~57-bus. The feasibility rate increases for IEEE~9-, 30-, and 57-bus and
decreases for IEEE~14-, 24-, and 39-bus. The largest improvement occurs for IEEE~57-bus,
where the rate increases from 0.740 to 1.000. The decrease for IEEE~39-bus is limited to 0.105 to 0.101,
while the overall runtime decreases from 1017.5 to 681.4~s and the quantum-backend runtime decreases from
10 to 8~s. The overall runtime is lower for five systems, with the largest reduction occurring for
IEEE~57-bus from 669.6 to 312.6~s. IEEE~14-bus is the only exception, where the overall runtime increases
from 33.7 to 41.4~s despite a reduction in quantum-backend runtime from 6 to 4~s. Thus, $\mathcal{E}_1$
provides consistent Hamiltonian and circuit-resource savings with noninferior cut quality, although its
effect on the feasibility component of solution quality remains system dependent.

Encoding~$\mathcal{E}_2$ retains the working-graph and $K-1$ representation of $\mathcal{E}_1$ and further
fixes the island labels of the anchor buses. Assignment variables are then required only for the free buses,
which reduces the support of both the assignment terms and the inferred-island penalty blocks. This
additional reduction is isolated by comparing Formulations~(ii) and~(iii). Across the six systems,
$\mathcal{E}_2$ reduces the logical-qubit count by $2.6\%$ to $18.4\%$ and the QUBO-term count by $4.9\%$
to $45.5\%$ relative to $\mathcal{E}_1$. The circuit depth is reduced by factors of 1.10 to 1.96, the
two-qubit-gate count by factors of 1.32 to 2.58, and the total gate count by factors of 1.04 to 2.17. For
IEEE~24-bus, the qubit count decreases from 76 to 62, the QUBO-term count from 1105 to 909, the circuit
depth from 891 to 784, and the total gate count from 3914 to 2228. For IEEE~39-bus, the corresponding
quantities decrease from 123 to 103, from 3895 to 2121, from 977 to 848, and from 26,872 to 24,909. The
largest depth reduction occurs for IEEE~9-bus, where the depth decreases from 473 to 241 and the total gate
count decreases from 753 to 527.

Beyond its resource advantages, the symmetry-fixed encoding $\mathcal{E}_2$ provides a consistent
solution-quality improvement over $\mathcal{E}_1$ under the same direct constraint-penalty treatment. The
cut value is preserved for all six systems, while the feasibility rate increases for five systems and remains 1.000 for
IEEE~57-bus. The largest increases occur for IEEE~24-bus and IEEE~39-bus, where the feasibility rates rise
from 0.010 to 0.524 and from 0.101 to 0.882. IEEE~30-bus also reaches a feasibility rate of 1.000, compared
with 0.173 under $\mathcal{E}_1$. These improvements in solution quality are accompanied by a reduction in
overall runtime for five of the six systems. The overall runtime decreases from 681.4
to 605.6~s for IEEE~39-bus and from 312.6 to 283.3~s for IEEE~57-bus, while the quantum-backend runtime
remains 8~s in both comparisons. IEEE~24-bus is the only exception in overall runtime, which increases from
83.2 to 91.5~s even though the quantum-backend runtime decreases from 6 to 5~s. This difference indicates
that non-backend components can dominate small changes in the end-to-end runtime.

Taken together, the resource, solution-quality, and runtime results establish that $\mathcal{E}_1$ and
$\mathcal{E}_2$ provide complementary reductions through distinct structural mechanisms. Encoding
$\mathcal{E}_1$ provides the larger initial resource reduction through graph contraction and the $K-1$
assignment representation. Encoding
$\mathcal{E}_2$ builds on this structure by removing anchor-label variables and reducing the interaction
support to the free buses. Its incremental resource savings are smaller in some systems, particularly
IEEE~57-bus, but its solution-quality improvement is more uniform because the cut is preserved and the
feasibility rate never decreases. The remaining slack-based penalty blocks continue to dominate the
non-Lagrangian Hamiltonians under both encodings, which motivates the Lagrangian treatment examined next.

The Lagrangian treatment also reduces the resource requirements of both encoding strategies by eliminating the
slack registers and their associated penalty couplings. For $\mathcal{E}_1$, this effect is isolated by
comparing Formulations~(ii) and~(iv). The qubit count decreases from 23 to 9 for IEEE~9-bus, from 76 to 40
for IEEE~24-bus, and from 123 to 78 for IEEE~39-bus. The corresponding QUBO-term counts decrease from 151
to 21, from 1105 to 244, and from 3895 to 481. These changes represent reduction factors of 7.19, 4.53,
and 8.10. These factors are substantially larger than the corresponding qubit-count factors of 2.56, 1.90,
and 1.58. The principal advantage therefore arises not only from removing slack qubits but also from
eliminating the dense quadratic couplings generated by their penalty blocks. The compiled circuits reflect
this structural simplification. Circuit depth decreases from 473 to 31 for IEEE~9-bus and from 891 to 127
for IEEE~24-bus. The smaller decrease from 977 to 881 for IEEE~39-bus partly reflects the greater problem
complexity of partitioning the larger network into three islands. The remaining assignment and cut
interactions retain a comparatively dense structure, while hardware connectivity and transpilation further
constrain their scheduling. Consequently, the reduction in Hamiltonian size does not translate
proportionally into circuit depth for this case.

The solution-quality results indicate that the resource reduction does not alter the attainable cut value.
The cut is unchanged in all six comparisons, while the feasibility rate improves for five systems and
remains 1.000 for IEEE~57-bus. For IEEE~9-bus, the feasibility rate increases from 0.270 to
0.350. This consistent pattern shows that removing the slack-penalty structure improves the concentration
of post-processed samples on feasible solutions without changing the best partition quality. The
quantum-backend runtime decreases for all
six systems, confirming that the Hamiltonian simplification produces a consistent execution-time benefit.
The overall runtime decreases for five systems because non-backend processing can partially offset the
quantum-runtime reduction. For IEEE~24-bus, $t/t_q$ decreases from $83.2/6$ to $74.0/3$~s. The unchanged
cuts, uniformly nondecreasing feasibility rates, and lower runtimes demonstrate that Lagrangian handling
under $\mathcal{E}_1$ removes costly constraint structure while preserving the underlying solution space.

For $\mathcal{E}_2$, the corresponding effect follows from comparing Formulations~(iii) and~(v). The qubit
count decreases from 20 to 6 for IEEE~9-bus, from 62 to 26 for IEEE~24-bus, and from 103 to 58 for
IEEE~39-bus. The QUBO-term count decreases by factors of 8.00, 10.57, and 9.34 for the same systems. The
circuit-depth ratio of Formulation~(iii) to Formulation~(v) ranges from 1.18 for IEEE~39-bus to 11.05 for
IEEE~30-bus, while the two-qubit-gate ratio ranges from 1.80 to 5.93 across the six systems. The reductions
under both encodings confirm that removing slack-based penalty blocks is the principal source of the
Lagrangian resource advantage. Formulation~(v) largely preserves solution quality relative to
Formulation~(iii). The cut values are unchanged for every system, while the feasibility rate remains
unchanged for IEEE~9-, 30-, and 57-bus and decreases moderately for the other three systems. The rate
remains positive in every case. Overall, Lagrangian handling under $\mathcal{E}_2$ provides consistent
resource and runtime reductions while preserving the cut value and maintaining feasible post-processed
solutions across all systems.

The comparison between Formulations~(iv) and~(v) provides a direct assessment of the encoding contribution
under Lagrangian constraint handling. Replacing $\mathcal{E}_1$ with $\mathcal{E}_2$ reduces the qubit count
from 9 to 6 for IEEE~9-bus, from 40 to 26 for IEEE~24-bus, and from 78 to 58 for IEEE~39-bus. The QUBO-term
count decreases from 21 to 12, from 244 to 86, and from 481 to 227. Circuit depth decreases from 31 to 25,
from 127 to 94, and from 881 to 720. These results establish that the two design choices provide
complementary benefits. Lagrangian handling removes the slack-register overhead common to both encodings,
whereas $\mathcal{E}_2$ supplies an additional reduction by fixing anchor labels and restricting assignment
variables to the free buses. This encoding change improves solution quality by preserving the cut value for
all six systems and increasing the feasibility rate for five systems while maintaining a rate of 1.000 for
IEEE~57-bus. Under the same Lagrangian treatment, $\mathcal{E}_2$ therefore improves solution quality and
reduces logical and circuit resources as well as overall runtime without increasing quantum-backend runtime.

Applying the Lagrangian treatment to the full one-hot encoding also reduces
resource requirements by removing the slack registers and the dense quadratic couplings generated by their
penalty terms. For IEEE~24-bus, the transition from Formulation~(i) to (i)+Lag. reduces the qubit count from
96 to 60, the QUBO-term count from 1302 to 216, the circuit depth from 2864 to 1144, and the two-qubit-gate
count from 8938 to 1743. For IEEE~39-bus, the corresponding quantities decrease from 156 to 117, from 5306
to 420, from 13,344 to 4999, and from 44,154 to 8667. From a complexity perspective, the Lagrangian
treatment removes the slack-qubit contribution and its dense penalty blocks, but the $KN_0$ one-hot
assignment register remains. It therefore reduces Hamiltonian and circuit complexity without providing the
working-graph contraction and anchor-based width reductions of $\mathcal{E}_1$ and $\mathcal{E}_2$. The
full one-hot comparison confirms that removing slack-based penalties improves solution quality and lowers
overall runtime without increasing quantum-backend runtime, even when the assignment encoding remains
unchanged.

The overall advantage of the proposed framework is quantified by comparing Formulation~(v) with the full
one-hot Formulation~(i). Across the six systems, Formulation~(v) uses 61.5--81.3\% fewer logical qubits. The
full formulation requires 15.14--38.89 times as many QUBO terms, 6.21--51.36 times the circuit depth,
19.83--93.63 times as many two-qubit gates, and 18.52--71.73 times as many total gates. The IEEE~39-bus
case is particularly consequential. The full formulation occupies 156 qubits, exactly reaching the
available hardware scale, whereas the proposed formulation requires only 58 qubits. For the largest
complete comparison, IEEE~57-bus, the proposed formulation reduces the register from 130 to 50 qubits
and the total gate count from 214,315 to 2,988. Formulation~(v) improves solution quality across all six
systems by preserving or improving the cut value and achieving a higher feasibility rate. It also
has the lowest overall runtime for every system. The IEEE~39-bus feasibility rate increases from 0.105 to
0.758 as $t/t_q$ decreases from $1017.5/10$ to $440.4/6$~s. The IEEE~57-bus cut improves from 131.917 to
128.239, the feasibility rate increases from 0.740 to 1.000, and $t/t_q$ decreases from $669.6/9$ to
$276.5/6$~s. Taken together, the paired ablations establish the causal contribution of each design choice.
Compact encoding reduces assignment width, symmetry fixing removes redundant label degrees of freedom, and
slack-free Lagrangian handling removes the dense auxiliary penalty structure. Their combination in
Formulation~(v) provides the most favorable balance of hardware resources, runtime, and solution quality.

\subsection{Noise Resilience Analysis}
\label{sec:noise_resilience}

Table~\ref{tab:noise_resilience} evaluates the noise resilience of the proposed Formulation~(v) across ideal
simulation, the calibrated IBM FakeMarrakesh (FM) noise model~\cite{ibm_fakemarrakesh_2026}, and real IBM
Quantum hardware. The simulator and hardware evaluations use the system-specific QAOA and post-processing
configurations reported in Table~\ref{tab:exp_setup}, and the IBM results correspond to
Table~\ref{tab:ablation_resource}. The Aer configuration used in this study supports up to 32 logical
qubits, which restricts the analysis to the IEEE 9- through 30-bus systems under Formulation~(v).

\begin{table*}[!tbp]
\centering
\scriptsize
\setlength{\tabcolsep}{3pt}
\renewcommand{\arraystretch}{1.05}
\caption{Noise-resilience comparison of Formulation~(v) under ideal simulation, FakeMarrakesh noise, and IBM hardware execution.}
\label{tab:noise_resilience}
\resizebox{\textwidth}{!}{%
\begin{tabular}{@{}lrrrrrrrrrrrrrrr@{}}
\toprule
\textbf{IEEE} & \multicolumn{3}{c}{\textbf{Feasible-sample rate}} &
\multicolumn{3}{c}{\textbf{Cut value}} & \multicolumn{3}{c}{\textbf{Circuit depth}} &
\multicolumn{3}{c}{\textbf{2Q gates}} & \multicolumn{3}{c}{\textbf{Total gates}} \\
\cmidrule(lr){2-4}\cmidrule(lr){5-7}\cmidrule(lr){8-10}\cmidrule(lr){11-13}\cmidrule(l){14-16}
\textbf{case} & \textbf{Ideal} & \textbf{FM} & \textbf{IBM} &
\textbf{Ideal} & \textbf{FM} & \textbf{IBM} &
\textbf{Ideal} & \textbf{FM} & \textbf{IBM} &
\textbf{Ideal} & \textbf{FM} & \textbf{IBM} &
\textbf{Ideal} & \textbf{FM} & \textbf{IBM} \\
\midrule
9-bus  & 1.000 & 1.000 & 1.000 & 1.238   & 1.238   & 1.238   &23  & 23  & 25 & 21  & 21  & 21  & 132  & 132  & 135 \\
14-bus & 0.490 & 0.400 & 0.360 & 88.241  & 88.241  & 88.241  & 72  & 72  & 77 & 24  & 24  & 27  & 171  & 171  & 184 \\
24-bus & 0.475 & 0.510 & 0.454 & 776.206 & 776.206 & 776.206 & 91 & 91 & 94 & 235 & 235 & 248 & 1117 & 1117 & 1198 \\
30-bus & 1.000 & 1.000 & 1.000 & 16.863  & 16.863  & 16.863  & 57 & 57 & 62 & 124 & 124 & 127 & 653  & 653  & 668 \\
\bottomrule
\end{tabular}
}
\end{table*}

As shown in Table~\ref{tab:noise_resilience}, Formulation~(v) yields identical cut values under ideal
simulation, FM noise, and IBM hardware for all four systems. Hardware execution also preserves the
feasible-sample concentration observed under calibrated noise. The IBM rate remains 1.000 for IEEE~9- and
30-bus and is only 0.040 and 0.056 below the FM rate for IEEE~14- and 24-bus, respectively. The additional
reduction from FM simulation to real hardware is therefore at most 5.6\% and averages only 2.4\% across the
four systems. Together with the unchanged cut values, this limited
degradation indicates that hardware noise slightly alters the sampling distribution without compromising
the final solution quality of the proposed framework.

The ideal and FM executions use identical compiled circuits, which isolates the effect of the calibrated
noise model from circuit-structure differences. The IBM circuits remain closely comparable, with depths
that differ from the simulator values by only 3.3--8.8\%. Their 2Q and total-gate counts remain
within approximately 11.1\% and 7.1\% of the simulator values, respectively. The consistency of the
compiled resources and cut values indicates that the observed feasibility variations arise from sampling
and device effects rather than a change in the optimization problem or circuit scale.

Collectively, the ideal, calibrated-noise, and IBM hardware results demonstrate that the proposed hybrid
framework is resilient to the tested quantum noise conditions. Device noise changes the frequency of
feasible samples but does not alter the final cut value obtained after constrained post-processing.

\subsection{QAOA Optimization Landscapes and Hamiltonian Complexity}
\label{sec:landscape_trainability}

The qubit reductions established in Section~\ref{sec:complexity} also change the numerical structure of the
QAOA optimization problem. This connection is examined using complete five-formulation landscape records for
the IEEE 9-, 14-, 24-, 30-, 39-, and 57-bus systems obtained on IBM Quantum hardware. The comparison follows
the formulation order in Table~\ref{tab:complexity_summary}, with Formulations~(i)--(iii) representing the
full or direct non-Lagrangian constructions and Formulations~(iv)--(v) representing the compact Lagrangian
constructions under $\mathcal{E}_1$ and $\mathcal{E}_2$. Taken together,
Figs.~\ref{fig:gradient_samples}--\ref{fig:hamiltonian_complexity} reveal how reductions in Hamiltonian
complexity reshape the gradient scale and two-dimensional geometry of the QAOA parameter landscape.

\begin{figure}[!t]
\centering
\includegraphics[width=\columnwidth]{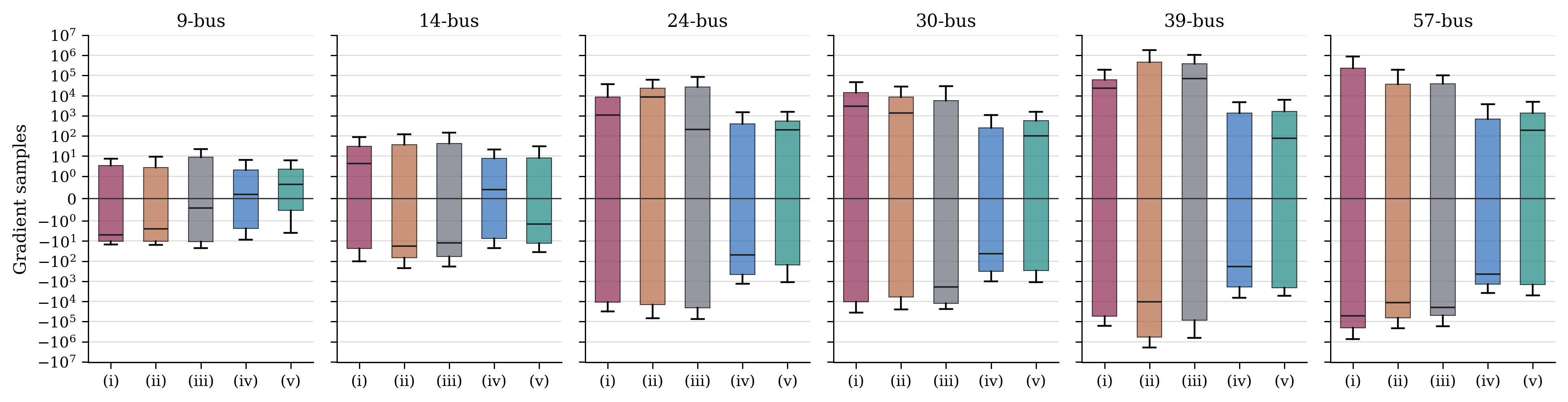}
\caption[Sampled gradient distributions]{Distribution of sampled finite-difference gradients for the five
formulations across IEEE benchmark systems. The horizontal line marks zero gradient, and formulation labels
(i)--(v) follow Table~\ref{tab:complexity_summary}.}
\label{fig:gradient_samples}
\end{figure}

The gradient distributions in Fig.~\ref{fig:gradient_samples} show that formulation choice becomes
increasingly important as system size grows. On the 9- and 14-bus systems, all five formulations occupy
relatively similar gradient scales. Beginning with the 24-bus case, the distributions for
Formulations~(i)--(iii) broaden by several orders of magnitude and develop extreme positive and negative
samples. This growth is most pronounced on the 39- and 57-bus systems, where the direct formulations extend
into the $10^{5}$--$10^{6}$ range. Formulations~(iv) and~(v) retain substantially narrower distributions on
the same systems, while still exhibiting signed variation around zero. The compact formulations therefore
reduce extreme changes in the objective without collapsing the sampled surface to a constant value. Among
them, Formulation~(v) generally maintains the most controlled gradient scale as the network size increases.

\begin{figure}[!t]
\centering
\includegraphics[width=\columnwidth]{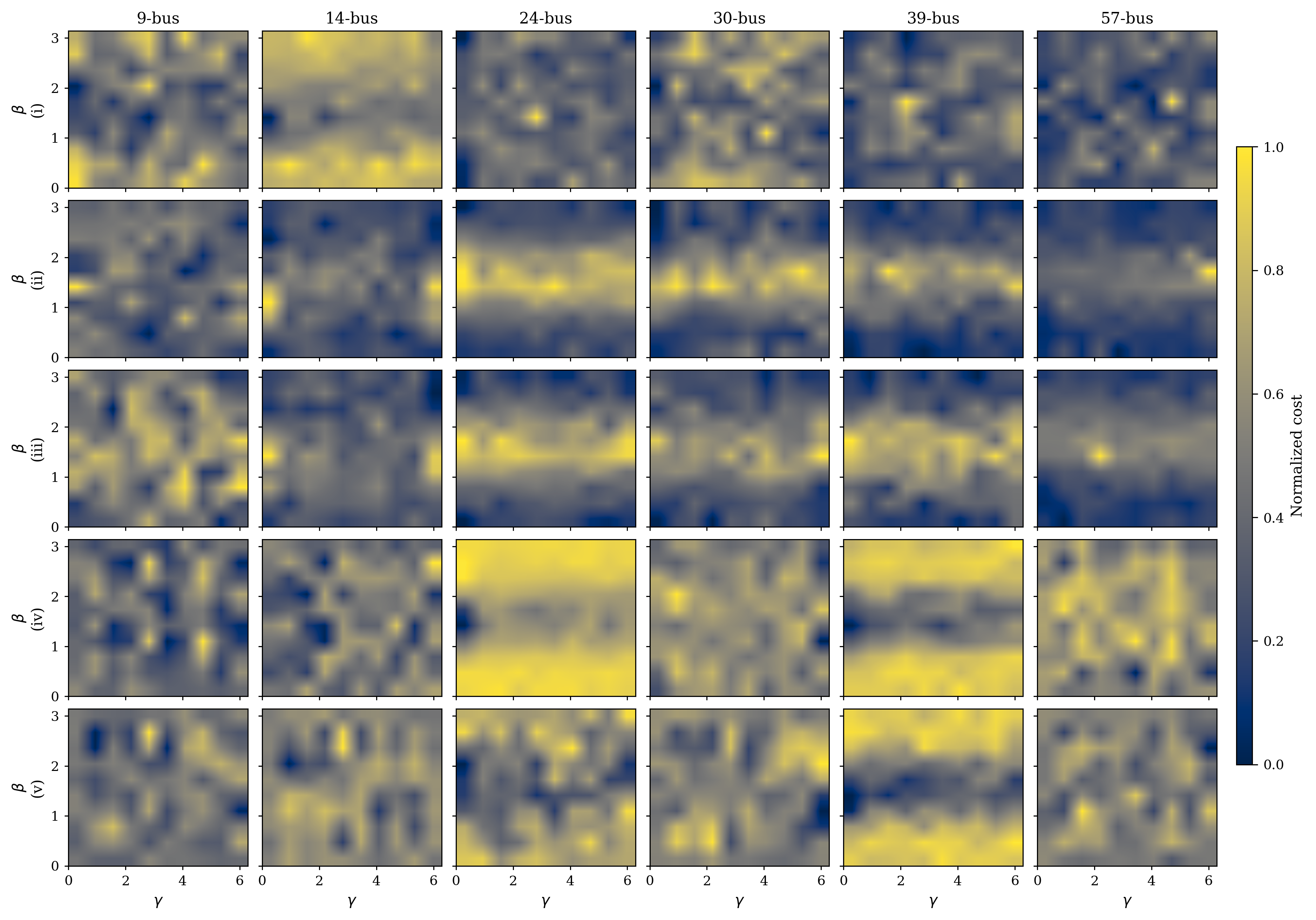}
\caption[Normalized QAOA cost landscapes]{Normalized QAOA cost landscapes over $(\gamma,\beta)$ for all five
formulations and the IEEE benchmark systems. Each row corresponds to one formulation and each column to one benchmark
system, using the same normalization within each panel.}
\label{fig:landscape_gallery_all5}
\end{figure}

Beyond these aggregate gradient statistics, the full cost landscapes in Fig.~\ref{fig:landscape_gallery_all5}
reveal how the scale differences translate into structure across the $(\gamma,\beta)$ parameter domain. Because
each panel is normalized to its own cost range, the comparison isolates the shape of the landscape from its
absolute objective magnitude. The direct formulations
develop increasingly dominant ridges, broad bands, and abrupt transitions on the 24-, 30-, 39-, and 57-bus
systems. Such structures concentrate large cost changes into narrow regions of the parameter plane, producing
sharp barriers between basins that make the outcome of a local search highly sensitive to initialization and
step size and that offer little reliable gradient information away from the dominant ridges.

In contrast, the Lagrangian formulations~(iv) and~(v) produce markedly more favorable optimization surfaces.
They retain clear parameter dependence but spread the variation smoothly over a moderate set of broad basins
and bands, so that a descent direction obtained near almost any initialization points toward the same
low-cost region. This regularity is especially pronounced for Formulations~(iv) and~(v) on the 24- and 39-bus
cases, where the dominant large-scale structure remains a single identifiable basin rather than the highly
fragmented extrema seen in the direct formulations. As a result, the compact Lagrangian encodings are
substantially easier to optimize, since they widen the range of effective initial parameters, reduce the risk
of the optimizer stalling on a spurious ridge, and make the modest one-to-three-layer QAOA circuits used in
the benchmarks sufficient to locate high-quality parameters. Formulation~(v), which combines symmetry fixing with
the slack-free Lagrangian treatment, exhibits the smoothest and most consistently navigable landscape across
system sizes. Together with the qubit reductions of Table~\ref{tab:concrete_qubits}, this shows that the
proposed encodings improve trainability at the same time as they lower circuit width, so their resource
savings do not come at the cost of a harder optimization problem.

\begin{figure}[!t]
\centering
\includegraphics[width=\columnwidth]{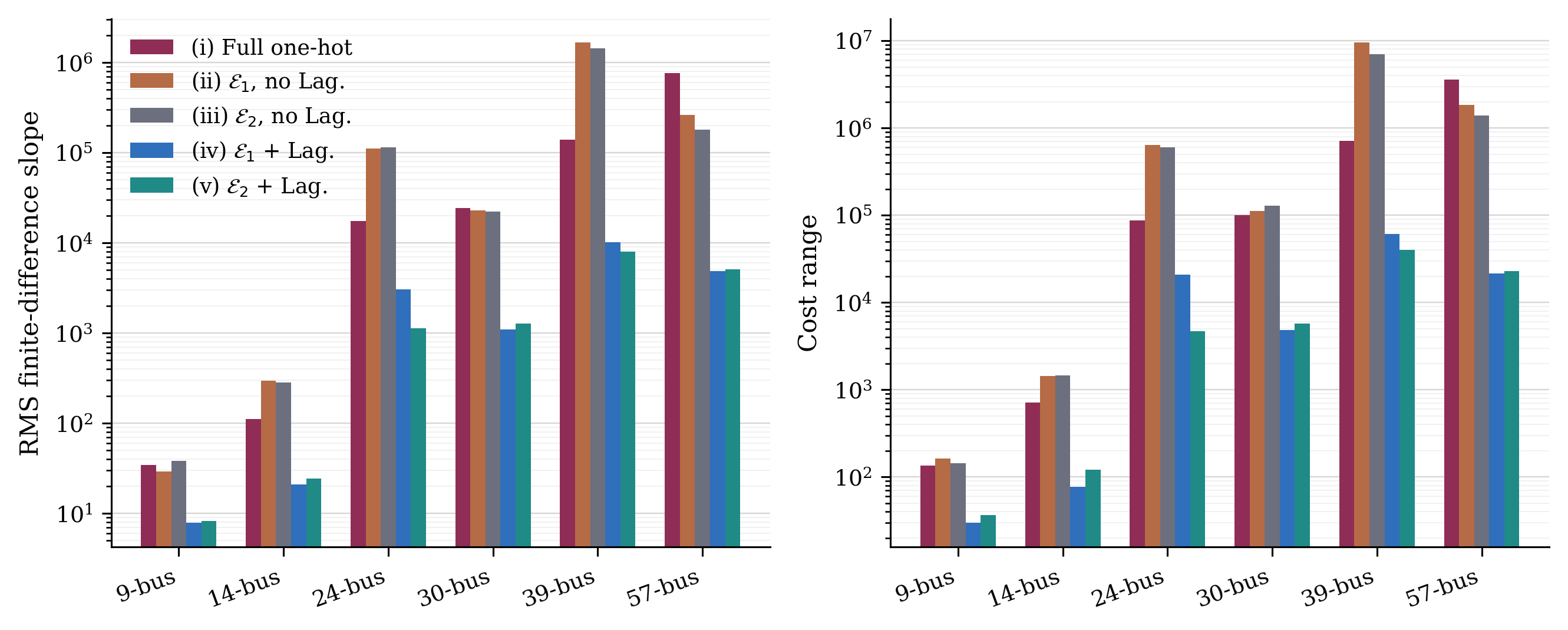}
\caption[Landscape-shape characteristics]{Landscape-shape characteristics for the five formulations, measured by RMS
finite-difference slope and cost range. The direct non-Lagrangian formulations often produce larger cost
ranges and steeper landscapes, whereas the Lagrangian formulations keep these quantities closer to the scale
of the proposed compact Hamiltonian.}
\label{fig:landscape_shape_diagnostics}
\end{figure}

While the normalized panels of Fig.~\ref{fig:landscape_gallery_all5} expose these geometric differences
qualitatively, Fig.~\ref{fig:landscape_shape_diagnostics} measures them directly through the RMS
finite-difference slope and cost range of each landscape before normalization. These two quantities turn the
visual contrast between the direct and Lagrangian formulations into concrete numbers that scale with system
size. For the 9- and 14-bus cases, the RMS slopes and cost ranges remain
within a comparatively narrow interval across the formulations. The separation becomes substantial from
24 buses onward. On the 39-bus system, for example, the direct $\mathcal{E}_1$ and $\mathcal{E}_2$
formulations reach RMS slopes above $10^{6}$, whereas the two Lagrangian formulations remain near $10^{4}$.
Their cost ranges show an even larger separation, with the direct formulations approaching $10^{7}$ and the
Lagrangian formulations remaining below $10^{5}$. A similar pattern appears for IEEE~57-bus, where the
Lagrangian cost ranges are near $2\times10^{4}$ while the direct formulations lie between approximately
$10^{6}$ and $4\times10^{6}$. Formulation~(v) provides the lowest or nearly lowest slope and range in most
cases. These reductions limit the dynamic range that must be handled by the classical optimizer and make the
parameter response more consistently scaled across network sizes.

\begin{figure}[!t]
\centering
\includegraphics[width=\columnwidth]{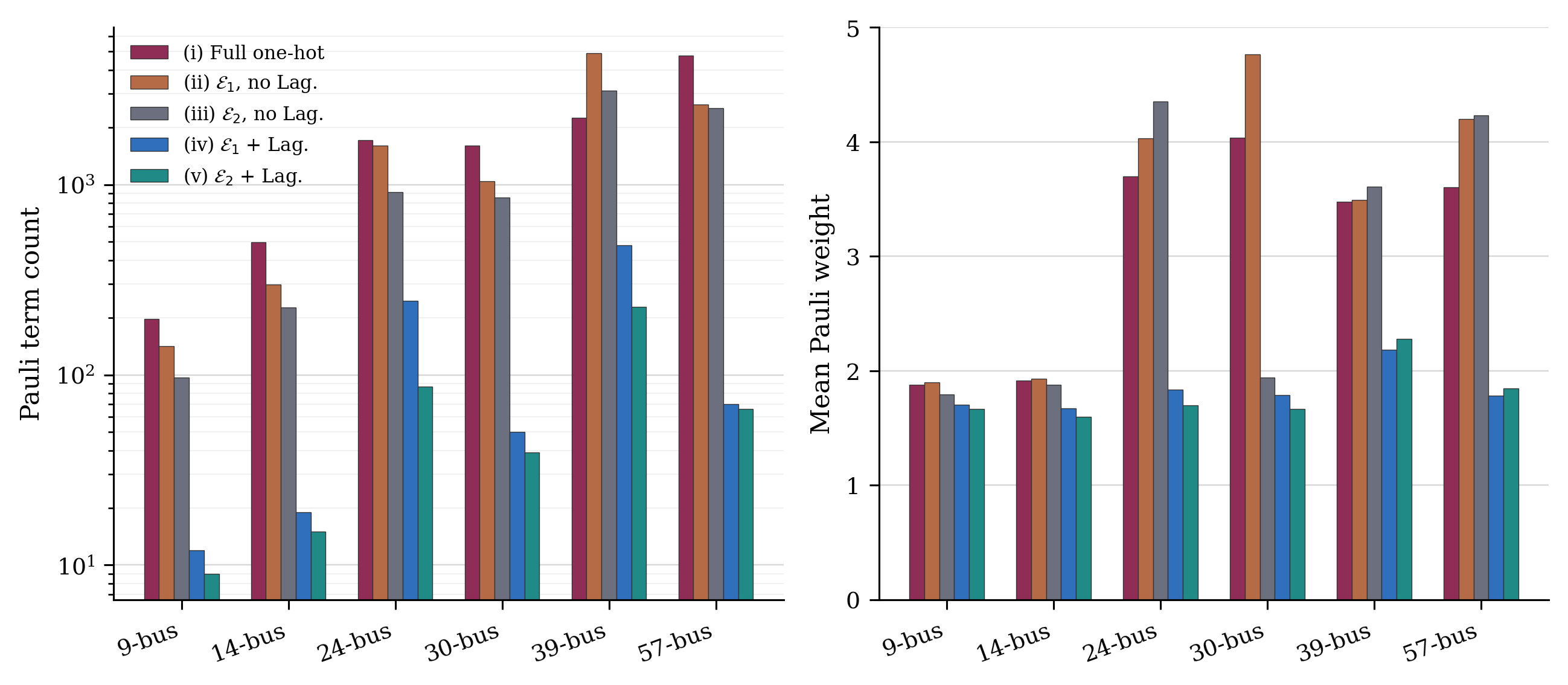}
\caption[Hamiltonian complexity comparison]{Hamiltonian complexity comparison for the five formulations. The
left panel reports Pauli-term count and the right panel reports mean Pauli weight. The proposed Lagrangian
formulations reduce the term count by removing explicit slack-variable penalty blocks, while Formulation~(v)
also benefits from encoding $\mathcal{E}_2$.}
\label{fig:hamiltonian_complexity}
\end{figure}

The preceding differences in slope and cost range originate in the structure of the cost Hamiltonian, which
Fig.~\ref{fig:hamiltonian_complexity} characterizes through the Pauli-term count and mean Pauli weight of each
formulation. Formulations~(iv) and~(v) contain
substantially fewer Pauli terms than
Formulations~(i)--(iii) for every system. The difference grows with network size. On the 57-bus system, the
three direct formulations contain on the order of $10^{3}$ terms, while the Lagrangian formulations require
only several tens. Formulation~(v) gives the smallest term count in nearly every case because $\mathcal{E}_2$
removes additional assignment redundancy. The same formulations also reduce mean Pauli weight. For the
24- to 57-bus systems, the direct formulations typically have mean weights between approximately 3.5 and
4.8, whereas Formulations~(iv) and~(v) remain near 1.7--2.3. The resulting Hamiltonians therefore contain
fewer interactions and involve fewer qubits per term on average.

The results in Figs.~\ref{fig:gradient_samples}--\ref{fig:hamiltonian_complexity} are consistent with a
single underlying mechanism that links resource efficiency to optimization behavior. Removing explicit
penalty-variable blocks reduces the Pauli-term count and interaction weight, which in turn suppresses the
extreme slope and cost-range growth observed in the direct formulations. Encoding $\mathcal{E}_2$ strengthens this effect by further
reducing redundant assignment structure. The proposed formulation therefore improves quantum applicability
in two complementary ways. It lowers the qubit and circuit requirements demonstrated in Tables~
\ref{tab:lagrangian_bench} and~\ref{tab:sf_bench}, and it produces a more controlled parameter landscape for
the classical optimization loop. This combination is important for scaling QAOA-based controlled islanding
beyond small benchmark systems on present quantum hardware.

\section{Conclusion}
\label{conclusion}

In this study, a scalable quantum optimization framework has been established for controlled power system
islanding that retains its essential combinatorial and physical requirements within the resource limits of
near-term quantum hardware. The proposed resource-efficient island-assignment representation captures the
essential partitioning decisions with substantially fewer quantum variables. In addition, a slack-free
Lagrangian treatment complements the compact representation by eliminating auxiliary constraint registers
while preserving the optimization and physical requirements of controlled islanding. Collectively, these
advances reduce quantum resource requirements while maintaining solution quality and operational
feasibility. For fixed $K$, the resulting formulation reduces the phase-separator complexity from
$O(E_0+N_0^2)$ to $O(E+F+N)$ and, on sparse working graphs, achieves linear per-layer gate complexity
$O(N)$ with circuit depth $O(p\Delta_G)$. This complexity advantage improves hardware compatibility and
optimization trainability and extends the scale of executable islanding problems without requiring deeper
circuits or larger sampling budgets.

The proposed hybrid quantum design framework produces high-quality, feasible islanding solutions across
IEEE benchmark systems ranging from 9 to 89 buses using shallow circuits and practical sampling budgets,
with power-flow validation confirming the operational feasibility of the resulting islands. Benchmarking
across multiple quantum-provider backends further demonstrates that the proposed framework can be adapted
to different hardware architectures, supporting its broad applicability to quantum optimization in power
systems. Across ideal simulation, calibrated FakeMarrakesh noise, and IBM hardware, the framework preserves
the cut quality with only small variations in feasible-sample rates, demonstrating resilience to the tested
device noise. The factorial ablation confirms that compact encoding and slack-free Lagrangian handling provide
complementary reductions in quantum resources, circuit complexity, and runtime while maintaining solution
quality. Across the six complete ablation cases, the proposed formulation uses 61.5--81.3\% fewer logical
qubits than the full one-hot formulation and reduces the QUBO-term count and circuit depth by factors of
15.14--38.89 and 6.21--51.36, respectively. Consistent with these resource-level findings, the landscape
analysis shows that the associated Hamiltonian simplification produces smoother and more consistently
scaled QAOA cost surfaces, thereby demonstrating that resource efficiency is accompanied by improved
parameter-optimization behavior.

Future work will strengthen the quantum component through adaptive-depth QAOA, cross-instance parameter
transfer, error-mitigation strategies, and circuit compilation tailored to device connectivity. These
developments will be combined with extensions to dynamic and uncertainty-aware islanding and scalable
hybrid treatment of richer stability and AC power-flow requirements.

\appendix
\section{Proof of the Encoding Hierarchy Theorem}
\label{app:encoding_hierarchy}

This appendix establishes formally that $\mathcal{E}_2$ provides an exact
symmetry reduction of $\mathcal{E}_1$ together with a strict reduction in the
assignment-register size.  Under the controlled-islanding structure of
Section~\ref{formulation}, every feasible physical partition retains a unique
canonical representative, the optimal cut is preserved, and the $K!$
label-equivalent representations of each feasible physical partition are
reduced to one.

\begin{theorem}[Encoding hierarchy]
\label{thm:encoding_hierarchy}
Under the assumptions of Proposition~\ref{prop:canonical_labeling}, with
$K\geq 2$, define the cut function
\begin{equation}
f_{\mathrm{cut}}(\pi)
=\sum_{(i,j)\in E}w_{ij}\,\mathbf{1}[\pi(i)\neq\pi(j)].
\label{eq:appendix_cut_function}
\end{equation}
Let $\Pi_{\mathrm{feas}}$ denote the nonempty set of all assignments
$\pi:V\to\{1,\ldots,K\}$ satisfying
constraints~\eqref{con:onehot}--\eqref{con:connect}, and define its canonical
subset
\begin{equation}
\Pi_{\mathrm{can}}
=
\bigl\{\pi\in\Pi_{\mathrm{feas}}:
\pi(a)=c\text{ for every }a\in\mathcal{A}_c,
\ c=1,\ldots,K\bigr\}.
\label{eq:canonical_feasible_set}
\end{equation}
The following hold.
\begin{enumerate}[(i)]
\item \emph{Algebraic soundness of $\mathcal{E}_1$.}
      For every bitstring $\{y_{i,k}\}$, the indicators $Y_{i,k}$ defined
      in~\eqref{eq:reduced_inferred_var} satisfy
      $\sum_{k=1}^{K}Y_{i,k}=1$ for every $i\in V$.  If the bitstring also
      satisfies the validity condition~\eqref{con:reduced_validity}, then
      $Y_{i,k}\in\{0,1\}$ for every $i\in V$ and $k=1,\ldots,K$.
\item \emph{Completeness of $\mathcal{E}_2$.}
      Every $\pi\in\Pi_{\mathrm{feas}}$ has a relabeled representative
      $\pi'\in\Pi_{\mathrm{can}}$, with
      $f_{\mathrm{cut}}(\pi')=f_{\mathrm{cut}}(\pi)$.
\item \emph{Strict assignment-register reduction and optimality preservation.}
      The assignment-register sizes satisfy
      $n_q^{\mathcal{E}_2}<n_q^{\mathcal{E}_1}$ whenever
      $|\mathcal{A}|\geq 1$, and
      \begin{equation}
      \min_{\pi\,\in\,\Pi_{\mathrm{can}}}f_{\mathrm{cut}}(\pi)
      =
      \min_{\pi\,\in\,\Pi_{\mathrm{feas}}}f_{\mathrm{cut}}(\pi).
      \label{eq:opt_preserved}
      \end{equation}
\item \emph{Symmetry reduction.}
      Among feasible labeled assignments, every feasible physical partition
      has exactly $K!$ representatives under $\mathcal{E}_1$ and exactly one
      canonical representative under $\mathcal{E}_2$.
\end{enumerate}
\end{theorem}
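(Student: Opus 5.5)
The plan is to prove the four parts in order, each reducing to a short argument built from earlier results.

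For (i), I will invoke the telescoping identity \eqref{eq:reduced_sum_to_one} directly; it holds for every bitstring. Under the validity condition \eqref{con:reduced_validity}, the sum $\sum_{\ell=1}^{K-1}y_{i,\ell}$ lies in $\{0,1\}$, so $Y_{i,K}=1-\sum_\ell y_{i,\ell}\in\{0,1\}$. For $k<K$ we have $Y_{i,k}=y_{i,k}\in\{0,1\}$ by definition. Hence each bus carries exactly one indicator equal to one, and the assignment $\pi(i)=k \iff Y_{i,k}=1$ is well defined.

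For (ii), I will apply Proposition~\ref{prop:canonical_labeling} verbatim. It produces $\sigma$ with $\sigma\circ\pi$ satisfying \eqref{eq:canonical_anchor_assignment}, feasible and with the same cut. The cut is label invariant because $\mathbf{1}[\sigma(\pi(i))\neq\sigma(\pi(j))]=\mathbf{1}[\pi(i)\neq\pi(j)]$ for bijective $\sigma$. Since $\sigma\circ\pi$ is feasible and canonical, it lies in $\Pi_{\mathrm{can}}$.

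For (iii), the register sizes come from \eqref{eq:sf_extra_saving}: $n_q^{\mathcal{E}_1}-n_q^{\mathcal{E}_2}=(K-1)|\mathcal{A}|\geq 1$ when $K\ge2$ and $|\mathcal{A}|\ge1$. Note that $|\mathcal{A}|\geq K\geq 1$ automatically, since the anchor groups are nonempty. For \eqref{eq:opt_preserved}, the inequality ``$\geq$'' follows from $\Pi_{\mathrm{can}}\subseteq\Pi_{\mathrm{feas}}$. For ``$\leq$'', I take a minimizer $\pi^*$ of the finite, nonempty set $\Pi_{\mathrm{feas}}$ and apply (ii) to get $\pi'\in\Pi_{\mathrm{can}}$ with equal cut. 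This also shows $\Pi_{\mathrm{can}}\neq\varnothing$.

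For (iv), which I expect to be the main point needing care, I fix a feasible unlabeled partition, i.e.\ the orbit of some feasible $\pi$ under relabelings $\sigma\in S_K$. I will show that the map $\sigma\mapsto\sigma\circ\pi$ is injective. Each of the $K$ islands is nonempty: by \eqref{con:min_size} (or by anchors, since distinct anchor groups occupy distinct labels and there are $K$ of them), $\pi$ is surjective. So $\sigma\circ\pi=\sigma'\circ\pi$ forces $\sigma=\sigma'$ on the image of $\pi$, which is all of $\{1,\ldots,K\}$. Feasibility is label invariant, so the orbit consists of exactly $K!$ feasible labeled assignments. Under $\mathcal{E}_1$ each labeled assignment corresponds to exactly one valid bitstring: set $y_{i,k}=\mathbf{1}[\pi(i)=k]$ for $k<K$, with the inverse given by (i). This yields exactly $K!$ representatives.

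For uniqueness of the canonical representative, suppose $\sigma\circ\pi$ and $\sigma'\circ\pi$ are both canonical. Then for each $c$ and each $a\in\mathcal{A}_c$ we have $\sigma(k_c)=c=\sigma'(k_c)$, with $k_c$ as in \eqref{eq:anchor_group_label}. The $k_c$ exhaust $\{1,\ldots,K\}$ because $\kappa$ is bijective, so $\sigma=\sigma'$. Existence of the canonical representative is given by (ii). Finally, under $\mathcal{E}_2$ the canonical assignment is encoded by the unique free-bus bitstring $y_{f,k}=\mathbf{1}[\pi'(f)=k]$, with anchor labels fixed by \eqref{eq:sf_membership_indicator}. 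This gives exactly one representative.
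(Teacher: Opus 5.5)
Your proposal is correct and follows essentially the same route as the paper: the telescoping identity plus a case check for (i), the canonical-labeling bijection for (ii), subset inclusion combined with (ii) for the two inequalities in (iii), and the label-invariance orbit argument for (iv). You also fill in two steps of (iv) that the paper only asserts: you show that $\sigma\mapsto\sigma\circ\pi$ is injective because every feasible $\pi$ is surjective onto the $K$ labels, and you prove directly that the canonical relabeling is unique.
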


\begin{proof}
\textbf{(i)}\ By construction,
\begin{equation}
\sum_{k=1}^{K}Y_{i,k}
=\sum_{k=1}^{K-1}y_{i,k}
 +\Bigl(1-\sum_{k=1}^{K-1}y_{i,k}\Bigr)=1.
\label{eq:sum_to_one_proof}
\end{equation}
This identity holds without invoking the validity condition.  Now suppose
that~\eqref{con:reduced_validity} also holds.  Since each
$y_{i,k}\in\{0,1\}$, at most one explicit variable equals one.  If exactly one
equals one for island $k_0$,
then $Y_{i,k_0}=1$ and $Y_{i,K}=0$, so all indicators are in $\{0,1\}$.  If
all explicit variables are zero, then $Y_{i,K}=1$ and all others are zero.
In both cases $Y_{i,k}\in\{0,1\}$ for every $k$.

\textbf{(ii)}\ Let $\pi\in\Pi_{\mathrm{feas}}$.  By
constraint~\eqref{con:same_group}, all
buses in $\mathcal{A}_c$ are assigned to one island; by
constraint~\eqref{con:different_group}, no two distinct anchor groups share
an island.  The induced map from anchor groups to island labels is therefore
injective from a set of size $K$ to a set of size $K$, hence bijective.
Let $\sigma:\{1,\ldots,K\}\to\{1,\ldots,K\}$ be the permutation that sends the
island occupied by $\mathcal{A}_c$ to label $c$.  Define $\pi'=\sigma\circ\pi$.
By label invariance of the objective~\eqref{eq:islanding_objective} and all
constraints~\eqref{con:onehot}--\eqref{con:connect},
$\pi'\in\Pi_{\mathrm{feas}}$ and
$f_{\mathrm{cut}}(\pi')=f_{\mathrm{cut}}(\pi)$.  Moreover,
$\pi'(i)=c$ for all $i\in\mathcal{A}_c$ and $c=1,\ldots,K$.  Therefore
$\pi'\in\Pi_{\mathrm{can}}$ by~\eqref{eq:canonical_feasible_set}.

\textbf{(iii)}\ The assignment-register sizes satisfy
\begin{equation}
n_q^{\mathcal{E}_1}-n_q^{\mathcal{E}_2}
=(K-1)N-(K-1)(N-|\mathcal{A}|)
=(K-1)|\mathcal{A}|>0
\label{eq:qubit_gap}
\end{equation}
because $K\geq2$ and $|\mathcal{A}|\geq 1$, establishing the strict reduction.
Equivalently, the ambient binary assignment space decreases from
$2^{(K-1)N}$ to $2^{(K-1)(N-|\mathcal{A}|)}$, a reduction factor of
$2^{(K-1)|\mathcal{A}|}$.  This count concerns the assignment register and is
independent of any additional registers introduced by a particular constraint
treatment.

For the cut objective, the definition~\eqref{eq:canonical_feasible_set} gives
$\Pi_{\mathrm{can}}\subseteq\Pi_{\mathrm{feas}}$.  Hence
\begin{equation}
\min_{\pi\in\Pi_{\mathrm{can}}}f_{\mathrm{cut}}(\pi)
\;\geq\;
\min_{\pi\in\Pi_{\mathrm{feas}}}f_{\mathrm{cut}}(\pi).
\label{eq:lower_bound}
\end{equation}
By part~(ii), every $\pi\in\Pi_{\mathrm{feas}}$ has a representative
$\pi'\in\Pi_{\mathrm{can}}$ with
$f_{\mathrm{cut}}(\pi')=f_{\mathrm{cut}}(\pi)$,
so
\begin{equation}
\min_{\pi\in\Pi_{\mathrm{can}}}f_{\mathrm{cut}}(\pi)
\;\leq\;
\min_{\pi\in\Pi_{\mathrm{feas}}}f_{\mathrm{cut}}(\pi).
\label{eq:upper_bound}
\end{equation}
Combining~\eqref{eq:lower_bound} and~\eqref{eq:upper_bound}
gives~\eqref{eq:opt_preserved}.

\textbf{(iv)}\ Let $S_K$ denote the symmetric group on the $K$ island labels.
Any feasible physical partition into $K$ islands admits exactly $K!$ feasible
labelings $\pi_\sigma=\sigma\circ\pi$ for $\sigma\in S_K$, because feasibility
is invariant under a common permutation of the island labels.  Each distinct
$\sigma$ produces a different valid assignment under $\mathcal{E}_1$, so the
partition has exactly $K!$ feasible representatives under $\mathcal{E}_1$.  In
$\mathcal{E}_2$, the canonical
condition $\pi'(i)=c$ for $i\in\mathcal{A}_c$ uniquely identifies $\sigma$ as
the permutation constructed in part~(ii), so exactly one of the $K!$ labelings
satisfies the anchor-fixing constraint.  Hence $\mathcal{E}_2$ retains exactly
one canonical representation of the same physical partition.
\end{proof}

\begin{acks}
This work is supported by the Office of Naval Research under award
N00014-22-1-2504 and the National Science Foundation under awards
OAC-2417773 and ECCS-2413237/2413238.
\end{acks}
\bibliographystyle{ACM-Reference-Format}
\bibliography{References}
\end{document}